\documentclass[11pt,reqno,a4paper]{amsart}
\usepackage[utf8]{inputenc}
\usepackage[T1]{fontenc}
\usepackage{amsbsy}
\usepackage{amsfonts}
\usepackage{amsmath}
\usepackage{amssymb}
\usepackage{amsthm}
\usepackage{bm}
\usepackage{dsfont}
\usepackage{hyperref}
\usepackage{mathrsfs}
\usepackage{upgreek}
\usepackage{slashed}

\newcommand{\1}{\mathds{1}}

\newcommand{\C}{\mathbb{C}}

\newcommand{\R}{\mathbb{R}}

\newcommand{\bA}{{\bm{A}}}

\newcommand{\bF}{{\bm{F}}}

\newcommand{\boF}{\mathcal{F}}

\newcommand{\boI}{\mathcal{I}}

\newcommand{\boL}{\mathcal{L}}

\newcommand{\bsalpha}{\bm{\alpha}}
\newcommand{\bsbeta}{\bm{\beta}}

\newcommand{\bssigma}{\bm{\sigma}}

\newcommand{\gS}{\mathfrak{S}}

\newcommand{\eps}{\varepsilon}
\renewcommand{\epsilon}{\varepsilon}
\newcommand\ii{{\ensuremath {\infty}}}

\newcommand{\norm}[1]{ \left| \! \left| #1 \right| \! \right| }
\DeclareMathOperator{\curl}{{\rm curl}}
\renewcommand{\div}{\mathop{\mathrm{div}}\nolimits}
\renewcommand{\Im}{\mathop{\mathrm{Im}}\nolimits}
\renewcommand{\Re}{\mathop{\mathrm{Re}}\nolimits}

\DeclareMathOperator{\tr}{{\rm tr}}

\newtheorem{theorem}{Theorem}

\newtheorem{cor}[theorem]{Corollary}
\newtheorem{lemma}[theorem]{Lemma}
\newtheorem{prop}[theorem]{Proposition}

\theoremstyle{definition}
\newtheorem*{merci}{Acknowledgements}
\newtheorem{remark}[theorem]{Remark}

\begin{document}

\title[Derivation of the magnetic Euler-Heisenberg Energy]{Derivation of the magnetic Euler-Heisenberg Energy}

\author[P. Gravejat]{Philippe GRAVEJAT}
\address{Universit\'e de Cergy-Pontoise, Laboratoire de Math\'ematiques (UMR CNRS 8088), F-95302 Cergy-Pontoise Cedex, France.}
\email{philippe.gravejat@u-cergy.fr}

\author[M. Lewin]{Mathieu LEWIN}
\address{CNRS \& Universit\'e Paris-Dauphine, PSL Research University, CEREMADE (UMR CNRS 7534), 75016 Paris, France.}
\email{Mathieu.Lewin@math.cnrs.fr}

\author[\'E. S\'er\'e]{\'Eric S\'ER\'E}
\address{Universit\'e Paris-Dauphine, PSL Research University, CEREMADE (UMR CNRS 7534), 75016 Paris, France.}
\email{sere@ceremade.dauphine.fr}

\date{\today}

\begin{abstract}
In quantum field theory, the vacuum is a fluctuating medium which behaves as a nonlinear polarizable material. In this article, we perform the first rigorous derivation of the magnetic Euler-Heisenberg effective energy, a nonlinear functional that describes the effective fluctuations of the quantum vacuum in a classical magnetic field. 

We start from a classical magnetic field in interaction with a quantized Dirac field in its ground state, and we study a limit in which the classical magnetic field is slowly varying. After a change of scales, this is equivalent to a semi-classical limit $\hbar\to0$, with a strong magnetic field of order $1/\hbar$. In this regime, we prove that the energy of Dirac's polarized vacuum converges to the Euler-Heisenberg functional. The model has ultraviolet divergences, which we regularize using the Pauli-Villars method. We also discuss how to remove the regularization of the Euler-Heisenberg effective Lagrangian, using charge renormalization, perturbatively to any order of the coupling constant.
\end{abstract}

\maketitle

\tableofcontents

\section{Introduction}

In quantum field theory, the vacuum is a fluctuating medium which behaves as a nonlinear polarizable material~\cite{Dirac7,EulKoc-35,Weissko1}. 
A convenient way of describing these effects is to use an \emph{effective action}. In Quantum Electrodynamics (QED), this method corresponds to integrating out the electronic degrees of freedom in the full QED functional integral~\cite[Chap.~33]{Schwartz-14}. The effective action is a function of a \emph{classical} electromagnetic field treated as an external one. In the case of a \emph{constant} electromagnetic field, the effective action has a rather simple explicit expression. This effective \emph{Euler-Heisenberg Lagrangian} has been used to make spectacular predictions~\cite{Schwartz-14,DitGie-00}. For instance, the birefringence of the vacuum that was predicted by this theory has only been confirmed recently~\cite{MigTesGonTavTurZanWu-17,Fan-etal-17}.

For time-independent fields in the Coulomb gauge, the effective Lagrangian action has been rigorously defined in~\cite{GrHaLeS1}. It takes the form
\begin{equation}
\label{eq:form-eff-Lagrangian}
\begin{split}
\boL( \bA) := & - \boF_{\rm vac}(e  \bA) + e \int_{\R^3} \big( j_{\rm ext}(x) \cdot A(x) - \rho_{\rm ext}(x) V(x) \big)\,dx\\
& + \frac{1}{8 \pi} \int_{\R^3} \big( |E(x)|^2 - |B(x)|^2 \big)\,dx
\end{split}
\end{equation}
and gives rise to nonlinear and nonlocal corrections to the classical linear Maxwell equations
\begin{equation}
\label{eq:Maxwell}
\begin{cases}
- \Delta V = 4 \pi e \big( \rho_{\rm vac}(e \bA) + \rho_{\rm ext} \big),\\
- \Delta A = 4 \pi e \big( j_{\rm vac}(e  \bA) + j_{\rm ext} \big),\\
\div A = \div  j_{\rm ext}=0.
\end{cases}
\end{equation}
Here $e$ is the elementary charge of an electron, $\bA := (V, A)$ is a classical, $\R^4$-valued, electromagnetic potential, with corresponding field 
$$\bF=(E,B)=(-\nabla V,\curl A),$$
and $\rho_{\rm ext}$ and $j_{\rm ext}$ are given external charge and current densities. The vacuum charge $\rho_{\rm vac}(e  \bA)$ and current $j_{\rm vac}(e  \bA)$ densities are nonlinearly induced by the electron-positron vacuum according to the formulae
$$e \rho_{\rm vac}(e  \bA) := \frac{\partial}{\partial V} \boF_{\rm vac}(e  \bA) , \quad {\rm and} \quad e j_{\rm vac}(e  \bA) := - \frac{\partial}{\partial A} \boF_{\rm vac}(e  \bA),$$
where $\boF_{\rm vac}(e  \bA)$ is the energy of a quantized Dirac field in the external four-potential $ \bA$, assumed to be in its ground state.

The nonlinear vacuum correction terms $ \rho_{\rm vac}(e  \bA)$ and $j_{\rm vac}(e  \bA)$ are negligible in normal conditions and they only start to play a significant role for very large electromagnetic fields. The critical value at which this starts to happen is called the \emph{Schwinger limit} and it is several orders of magnitude above what can now be produced in the laboratory. The detection of these nonlinear effects is nevertheless a very active area of experimental research~\cite{Burkeetal-97,MouTajBul-06}. In addition, these terms are known to play an important role in neutron stars with extremely high magnetic fields (magnetars)~\cite{BarHar-01,MarkBroSte-03,DenSve-03}. The observation of this effect has been announced very recently~\cite{MigTesGonTavTurZanWu-17}. The equations~\eqref{eq:Maxwell} do not seem to have attracted much attention on the mathematical side.

The exact vacuum energy $\boF_{\rm vac}(e  \bA)$ is a very complicated nonlocal functional of $ \bA$ which, in addition, has well-known divergences. Its precise definition will be recalled in Section~\ref{sub:PV} below.
A simple and useful approximation, often used in the Physics literature, consists in replacing the complicated functional $\boF_{\rm vac}(e  \bA)$ by a superposition of \emph{local} independent problems, that is,
\begin{equation}
 \boF_{\rm vac}(e  \bA)\simeq \int_{\R^3}f_{\rm vac}\big(eE(x),eB(x)\big)\,dx
 \label{eq:local_approx}
\end{equation}
where $f_{\rm vac}\big(eE,eB\big)$ is the energy per unit volume, found for an electromagnetic field $\bF=(E,B)$ which is constant everywhere in space (hence for a linear electromagnetic potential $\bA$). Our purpose in this article is to justify the approximation~\eqref{eq:local_approx} in a limit where the electromagnetic field $\bF$ is slowly varying.

The (appropriately renormalized) function $f_{\rm vac}\big(eE,eB\big)$ has been computed by Euler  and Heisenberg in a famous article~\cite{HeisEul1} and it is given by the expression
\begin{multline}
\label{def:F-vac-EH}
f_{\rm vac}(eE,eB)=
\frac{1}{8 \pi^2}  \int_0^\infty \frac{e^{- s m^2}}{s^3} \, \Bigg( \frac{e^2 s^2}{3} \, \big( |E|^2 - |B|^2 \big) - 1\\
 + e^2 s^2 \, \big( E \cdot B \big) \, \frac{\Re \cosh \big( e s \big( |B|^2 - |E|^2 + i E \cdot B \big)^\frac{1}{2} \big)}{\Im \cosh \big( e s \big( |B|^2 - |E|^2 + i E \cdot B \big)^\frac{1}{2} \big)} \Bigg)\, ds.
\end{multline}
Here $m$ is the mass of the electron and we work in a system of units such that the reduced Planck constant $\hbar$ and the speed of light $c$ are equal to $1$.
An alternative computation of the Euler-Heisenberg formula~\eqref{def:F-vac-EH} has been discussed later by Weisskopf~\cite{Weissko1} and Schwinger~\cite{Schwing2}. For weak fields the leading correction to the usual Maxwell Lagrangian is given by
$$f_{\rm vac}(eE,eB)=-\frac{e^4}{360\pi^2m^4}\Big((|E|^2-|B|^2)^2+7(E\cdot B)^2\Big)+o(|E|^4+|B|^4).$$
This expression has a form similar to the first-order Born-Infeld theory~\cite{Kiessling-04a,Kiessling-04b} and it has played an important role in the understanding of nonlinear effects on the propagation and dispersion of light~\cite{KarpNeu1,Minguzzi-56,KenPla-63,KleNig-64}. 

It is well-known that the Euler-Heisenberg vacuum energy in formula~\eqref{def:F-vac-EH} has to be handled with care, since the integrand may have poles on the real line. The proper definition requires to replace $s$ by $s+i\eta$ and to take the limit $\eta\to0$~\cite{Schwing2,JenGieValLamWen-02}. The function $f_{\rm vac}(eE,eB)$ defined in this way may have an exponentially small non-zero imaginary part, which is interpreted as the electron-positron pair production rate and corresponds to the instability of the vacuum (see~\cite[Paragraph 7.3]{GreiRei0} and~\cite{DitGie-00,JenGieValLamWen-02,Dunne-05,Dunne-12}). However, under the additional constraints that $E\cdot B=0$ and $|E|<|B|$, then there is no pole, the integral in~\eqref{def:F-vac-EH} converges absolutely and the vacuum energy $f_{\rm vac}(eE,eB)$ is real, as expected. In particular, this is the case for a purely magnetic field, $E\equiv0$, that will be the object of our paper.

The purely magnetic case has been particularly discussed in the Physics literature~\cite{Adler-71,Constantinescu-72,TsaErb-74,TsaErb-75,MelSto-76}. The corresponding energy is independent of the direction of the magnetic field $B$ and simplifies to
\begin{equation}
\label{def:F-vac-mEH}
f_{\rm vac}(0,eB)= \frac{1}{8 \pi^2}  \int_0^\infty \frac{e^{- s m^2}}{s^3} \, \bigg( e s |B| \coth \big( e s |B| \big) - 1 - \frac{e^2 s^2 |B|^2}{3} \bigg)\,ds.
\end{equation}
This function is concave-decreasing and negative, which corresponds to the usual picture that, after renormalization, the vacuum polarization \emph{enhances} an external field instead of screening it as one would have naturally expected at first sight. As an illustration, the function behaves as
\begin{equation}
f_{\rm vac}(0,eB)\sim\begin{cases}
\displaystyle-\frac{e^4}{360\pi^2m^4}|B|^4&\text{for $|B|\to0$},\\[0.3cm]
\displaystyle-\frac{e^2|B|^2}{24\pi^2}\log\left(\frac{e|B|}{m^2}\right)&\text{for $|B|\to\infty$}.
\end{cases}
\label{eq:behavior_f} 
\end{equation}
Note that the energy diverges faster than $|B|^2$ at infinity. The total energy
$$\frac1{8\pi}\int_{\R^3}|B(x)|^2\,dx+\int_{\R^3} f_{\rm vac}\big(0,e|B(x)|\big)\,dx$$
is unbounded from below. The implications of this instability for strong magnetic fields were studied by Haba in~\cite{Haba-82,Haba-84}. The instability of the model is due to charge renormalization. 

The function $|B|\mapsto f_{\rm vac}(0,eB)$ has a \emph{non-convergent Taylor series} in powers of $B$. In fact, the series is 
$$\frac{m^4}{8\pi^2}\sum_{n\geq2}\frac{\mathcal{B}_{2n}}{2n(2n-1)(2n-2)}\left(\frac{2e|B|}{m^2}\right)^{2n}$$
and the Bernoulli numbers 
$$\mathcal{B}_{2n}=(-1)^{n+1}\frac{(2n)!}{(2\pi)^{2n}}\zeta(2n)$$
(with $\zeta$ the Riemann zeta function) diverge extremely fast. This phenomenon has played an important historical role in quantum field theory, where perturbative arguments are often used. It can nevertheless be shown that $f_{\rm vac}(0,eB)$ is the Borel sum of its Taylor expansion~\cite{ChaOle-77}.

Our main goal in this paper is to provide a rigorous derivation of the purely magnetic Euler-Heisenberg vacuum energy in~\eqref{def:F-vac-mEH}, starting from the vacuum energy $\boF_{\rm vac}$ of a quantized Dirac field, in the regime where $B$ varies slowly in space. To be more precise, we assume that the magnetic field takes the form $B(\varepsilon x)$ with a fixed smooth function $B$ (which corresponds to the strong magnetic potential $A_\varepsilon(x)=\varepsilon^{-1}A(\varepsilon x)$) and then look at the limit $\varepsilon\to0$. The expectation is that the vacuum energy is locally given by the Euler-Heisenberg formula, leading to the limit
\begin{equation}
\boF_{\rm vac}\big(0,e A_\varepsilon\big)\underset{\varepsilon\to0}\sim \int_{\R^3}f_{\rm vac}(0,eB(\varepsilon x))\,dx=\varepsilon^{-3}\int_{\R^3}f_{\rm vac}(0,eB(x))\,dx.
\label{eq:expected_result}
\end{equation}

After a change of scale, the problem coincides with a semiclassical limit $\hbar=\varepsilon\to0$ with a very strong magnetic field of order $B\sim1/\hbar$. This magnetic field strength is critical: any $B\ll 1/\hbar$ would disappear in the leading order of the semi-classical limit. Here $B$ does contribute, and deriving the exact form of the energy is a very delicate matter. 
This regime of strong magnetic fields has been the object of several recent studies in different situations~\cite{LieSoYn2, FuGuPeY1, Sobolev-95, Yngvaso1, Erdos-97, ErdoSol1, ErdoSol2, ErdoSol3, Fournais-01, Fournais-01b, Fournais-02, HelKorRayNgo-15}, none of them covering the continuous spectrum of the Dirac operator, to our knowledge. A result similar to~\eqref{eq:expected_result} has recently been obtained in the simpler case of a scalar field in~\cite{LamLew-15b}, which is a more conventional semiclassical limit.

In addition to the criticality of the magnetic field strength, we have to face the other difficulty that the vacuum energy $\boF_{\rm vac}$ has divergences that must be regularized. Different regularization schemes are possible. In this article we use the Pauli-Villars method as in~\cite{GrHaLeS1} and obtain a result similar to~\eqref{eq:expected_result}, with a Pauli-Villars-regulated function $f_{\rm vac}^{\rm PV}(0,eB)$. An advantage of the regularized model is that it is \emph{stable}, contrary to the original Euler-Heisenberg $f_{\rm vac}(0,eB)$. After charge renormalization, the function $f_{\rm vac}^{\rm PV}(0,eB)$ coincides with $f_{\rm vac}(0,eB)$ up to an exponentially small error, but the model becomes unstable.

In the next section we properly define the Dirac vacuum energy $\boF_{\rm vac}$ using the Pauli-Villars scheme, before we are able to state our main results.

\section{Main results}
\subsection{The Pauli-Villars-regulated vacuum energy}
\label{sub:PV}

We start by recalling the definition of the Pauli-Villars-regulated vacuum energy~\cite{PaulVil1} which was rigorously studied in~\cite{GrHaLeS1} (see also~\cite{GrHaLeS2,Lewin-6ECM}).

The vacuum energy $\boF_{\rm vac}(e \bA)$ in~\eqref{eq:form-eff-Lagrangian} is given by the \emph{formal} expression
\begin{equation}
\label{def:form-boF}
\boF_{\rm vac}(e  \bA) := - \frac{1}{2} \tr \big| D_{m, e  \bA} \big|.
\end{equation}
In this formula, 
\begin{equation}
\label{def:DA}
D_{m, e  \bA} := \bsalpha \cdot \big( - i \, \nabla - e A \big) + e V + m \, \bsbeta,
\end{equation}
is the Dirac operator for one electron in the classical electromagnetic field $\bA=(V,A_1,A_2,A_3)$~\cite{Thaller0,EstLeSe1}, a self-adjoint operator acting on $L^2(\R^3,\C^4)$. The four Dirac matrices $\bsalpha = (\bsalpha_1, \bsalpha_2 , \bsalpha_3)$ and $\bsbeta$ are given by
$$\bsalpha_k := \begin{pmatrix} 0 & \bssigma_k \\ \bssigma_k & 0 \end{pmatrix}, \quad {\rm and} \quad \bsbeta := \begin{pmatrix} I_2 & 0 \\ 0 & - I_2 \end{pmatrix},$$
with Pauli matrices $\bssigma_1$, $\bssigma_2$ and $\bssigma_3$ equal to
$$\bssigma_1 := \begin{pmatrix} 0 & 1 \\ 1 & 0 \end{pmatrix},\quad \bssigma_2 := \begin{pmatrix} 0 & - i \\ i & 0 \end{pmatrix}, \quad {\rm and} \quad \bssigma_3 := \begin{pmatrix} 1 & 0 \\ 0 & - 1 \end{pmatrix}.$$
The Dirac matrices satisfy the following anti-commutation relations
\begin{equation}
\label{eq:anti-commut}
\bsalpha_j \, \bsalpha_k + \bsalpha_k \, \bsalpha_j = 2 \delta_{j, k}\, I_4, \quad \bsalpha_j \, \bsbeta + \bsbeta \, \bsalpha_j = 0, \quad {\rm and} \quad \bsbeta^2 = I_4.
\end{equation}
The trace in~\eqref{def:form-boF} means that all the negative energy states are filled by virtual electrons, according to Dirac's picture~\cite{Dirac1,Dirac2,Dirac3,Dirac6,Dirac7}. The operator
$$- \frac{1}{2} \big| D_{m, e  \bA} \big|=D_{m, e  \bA}\frac{\1(D_{m, e  \bA}\leq0)-\1(D_{m, e  \bA}\geq0)}{2}$$
in the trace arises from the constraint that the system must be charge-conjugation invariant. We refer to~\cite{HaiLeSo1,GrHaLeS1} for detailed explanations.

Of course, the trace in~\eqref{def:form-boF} is infinite. In order to give a proper meaning to $\boF_{\rm vac}(e  \bA)$, we start by subtracting an infinite constant, namely the free vacuum energy corresponding to $ \bA\equiv0$ given by
$$\boF_{\rm vac}(0) = - \frac{1}{2} \tr \big| D_{m, 0} \big|.$$
We therefore consider the relative vacuum energy 
\begin{equation}
\label{def:rel-boF}
\boF_{\rm vac}(e \bA)-\boF_{\rm vac}(0)=\frac{1}{2} \tr \Big( \big| D_{m, 0} \big| - \big| D_{m, e  \bA} \big| \Big).
\end{equation}
Removing an (infinite) constant does not change the variational problem in which we are interested, as well as the resulting equations.

For most electromagnetic potentials $ \bA$, however, the relative vacuum energy in~\eqref{def:rel-boF} is not yet well defined due to ultraviolet divergences. This additional difficulty can be overcome by applying a suitable regularization. Various methods are employed in the literature, among which are the famous dimensional regularization~\cite{tHooVel1} and the lattice regularization~\cite{Wilson1}. Here we are going to use the Pauli-Villars method~\cite{PaulVil1} which was studied in~\cite{GrHaLeS1}. This technique consists in adding fictitious particles with high masses $m_j\gg1$ in the model. These particles have no physical significance, but their introduction provides an ultraviolet regularization, which is sufficient to rigorously define the vacuum energy. More precisely, we consider the so-called Pauli-Villars-regulated vacuum energy given by the formula
\begin{equation}
\label{def:F-vac-PV}
\boxed{\boF_{\rm vac}^{\rm PV}(e  \bA) := \frac{1}{2} \tr \sum_{j = 0}^2 c_j \, \Big( \big| D_{m_j, 0} \big| - \big| D_{m_j, e  \bA} \big| \Big).}
\end{equation}
In this expression, the coefficient $c_0$ and the mass $m_0$ are respectively equal to $1$ and $m$. The corresponding term is exactly the relative vacuum energy in~\eqref{def:rel-boF}.
The ultraviolet divergences are removed if the coefficients $c_1$ and $c_2$ satisfy the Pauli-Villars conditions
\begin{equation}
\label{eq:cond-PV}
\sum_{j = 0}^2 c_j = \sum_{j = 0}^2 c_j \, m_j^2 = 0,
\end{equation}
which amounts to choosing them as 
$$c_1 = - \frac{m_2^2 - m_0^2}{m_2^2 - m_1^2}, \quad {\rm and} \quad c_2 = \frac{m_1^2 - m_0^2}{m_2^2 - m_1^2}.$$
In the following we always assume that $m=m_0<m_1<m_2$ for simplicity, hence $c_1<0$ and $c_2>0$.

We now describe some known mathematical properties of $\boF_{\rm vac}^{\rm PV}$. Throughout the article we work in the electromagnetic field energy space, namely, we assume that $E,B\in L^2(\R^3)$.
This amounts to assuming that 
\begin{equation*}
V\in \dot{H}^1(\R^3) := \Big\{ V \in  L^6(\R^3, \R) \ : \ \| \nabla V \|_{L^2}^2 < \infty \Big\}
\end{equation*}
and
\begin{equation*}
A\in\dot{H}_{\div}^1(\R^3) := \Big\{ A \in  L^6(\R^3, \R^3) \ : \ \div A = 0\ \text{ and }\\ \| \curl A \|_{L^2}^2 < \infty \Big\}.
\end{equation*}
The precise result proved in~\cite{GrHaLeS1} is the following.

\begin{theorem}[Definition of $\boF_{\rm vac}^{\rm PV}$ in energy space~\cite{GrHaLeS1}]
\label{thm:def-L-PV}
Assume that the coefficients $c_0 = 1$, $c_1$ and $c_2$, and the masses $0<m = m_0 < m_1 < m_2$ satisfy the Pauli-Villars conditions~\eqref{eq:cond-PV}. 

\smallskip

\noindent $(i)$ When $V \in \dot{H}^1(\R^3)$ and $A\in \dot H^1_{\rm div}(\R^3)$, the operator 
\begin{equation}
T_{ \bA}:=\sum_{j = 0}^2 c_j \, \Big( \big| D_{m_j, 0} \big| - \big| D_{m_j,   \bA} \big| \Big)
\label{eq:def_Te}
\end{equation}
is compact.

\smallskip

\noindent $(ii)$ When $ V \in L^1(\R^3) \cap \dot{H}^1(\R^3)$ and $ A \in L^1(\R^3) \cap \dot{H}_{\rm div}^1(\R^3)$, we have $\tr\left|\tr_{\C^4}T_{ \bA}\right|<\infty$, hence  
$$\boF_{\rm vac}^{\rm PV}( \bA) := \frac{1}{2} \tr \left(\tr_{\C^4}T_{ \bA}\right)$$
is well-defined. Moreover, $\boF_{\rm vac}^{\rm PV}$ can be uniquely extended to a continuous mapping on $\dot{H}^1(\R^3) \times \dot{H}_{\div}^1(\R^3)$.

\smallskip

\noindent $(iii)$ When $V = 0$, and $A\in L^1(\R^3) \cap \dot{H}^1_{\rm div}(\R^3)$, the operator $T_{0,A}$ is trace-class on $L^2(\R^3, \C^4)$, that is, 
$\tr\left|T_{0,A}\right|<\infty$, hence $\boF^{\rm PV}_{\rm vac}(0,A)=\tr(T_{0,A})/2$.
\end{theorem}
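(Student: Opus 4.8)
The plan is to combine an integral representation of the absolute value with the Pauli–Villars cancellations, exploiting the fact that for a purely magnetic potential the square of the Dirac operator has an especially simple mass dependence. Write $D_j:=\bsalpha\cdot(-i\nabla-A)+m_j\bsbeta$ and $D_j^0:=\bsalpha\cdot(-i\nabla)+m_j\bsbeta$, and recall $|D|=\frac{2}{\pi}\int_0^\infty\big(1-t^2(D^2+t^2)^{-1}\big)\,dt$. The second resolvent identity then gives
\[
\big|D_j^0\big|-\big|D_j\big|=-\frac{2}{\pi}\int_0^\infty t^2\,(D_j^2+t^2)^{-1}\,W\,(-\Delta+m_j^2+t^2)^{-1}\,dt,
\]
where the decisive point, specific to $V=0$, is that
\[
W:=D_j^2-(D_j^0)^2=2i\,A\cdot\nabla+|A|^2-\bsSigma\cdot B
\]
is \emph{independent of the mass} $m_j$: the $m_j^2$ cancel, and since $\div A=0$ there is no $\div A$ term. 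Consequently, after summing against the $c_j$, the entire mass dependence sits in the scalar resolvents $R_j^0(t):=(-\Delta+m_j^2+t^2)^{-1}$.

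I would then insert the Born expansion $(D_j^2+t^2)^{-1}=\sum_{n\ge0}\big(R_j^0(t)\,W\big)^nR_j^0(t)$ and organise the operator as $T_{0,A}=-\sum_{n\ge1}\mathcal T_n$ with
\[
\mathcal T_n:=\frac{2}{\pi}\int_0^\infty t^2\sum_{j=0}^2 c_j\,\big(R_j^0(t)\,W\big)^nR_j^0(t)\,dt.
\]
Trace-class bounds for each $\mathcal T_n$ rest on the Kato–Seiler–Simon inequality $\|f(x)\,g(-i\nabla)\|_{\mathfrak{S}^2}\le(2\pi)^{-3/2}\|f\|_{L^2}\|g\|_{L^2}$: one splits each multiplication factor of $W$ as $|\cdot|^{1/2}\cdot|\cdot|^{1/2}$ and pairs each half with an adjacent resolvent, while the first-order pieces $2i\,A\cdot\nabla$ and $\bsSigma\cdot\curl A$ are treated by moving the derivative onto a neighbouring $R_j^0(t)$ (whose symbol $\xi/(|\xi|^2+m_j^2+t^2)$ furnishes the needed decay). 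Since $A\in L^1\cap\dot{H}^1_{\rm div}\subset L^2$ and $B=\curl A\in L^2$, the relevant $L^1$ and $L^2$ norms of $A$ are finite, and as each extra factor of $W$ carries one more power of $A$ the series converges in trace norm; for fixed $A$ one truncates at a finite order $N$ and estimates the remainder, which contains the genuine resolvent $(D_j^2+t^2)^{-1}$ (bounded by $t^{-2}$) and enough Hilbert–Schmidt factors to be trace-class with an $m_j$-independent, $t$-integrable bound.

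The heart of the matter, and the step I expect to be the main obstacle, is the ultraviolet convergence of the low-order terms $\mathcal T_1$ and $\mathcal T_2$, where the naive mass-by-mass estimate fails. Indeed, for a single $j$ the model bound $\|R_j^0(t)\,|A|^2\,R_j^0(t)\|_{\mathfrak{S}^1}\lesssim\|A\|_{L^2}^2\,(m_j^2+t^2)^{-1/2}$ produces a divergent integral $\int_0^\infty t^2(m_j^2+t^2)^{-1/2}\,dt$. The Pauli–Villars conditions $\sum_j c_j=\sum_j c_j m_j^2=0$ are exactly what cures this: they act as a second-order finite difference in the variable $m_j^2$, so that $\sum_j c_j\,(m_j^2+t^2)^{-1/2}=O(t^{-5})$ as $t\to\infty$ and the $t$-integral converges. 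The rigorous version therefore requires performing the sum over $j$ \emph{before} taking trace norms---estimating the combined kernel $\sum_j c_j\,R_j^0(t)\cdots R_j^0(t)$ as a single object---and only afterwards integrating in $t$. This is also precisely where the purely magnetic hypothesis is essential. For a general potential one finds an additional term $2m_j\,V\bsbeta$ in $D_j^2-(D_j^0)^2$, which is \emph{linear} in $m_j$ and hence \emph{not} controlled by the two $m_j^2$-moment conditions; being traceless over $\C^4$ (as $\tr_{\C^4}\bsbeta=0$), it disappears under the partial trace, which is why statement $(ii)$ only asserts that $\tr_{\C^4}T_{\bA}$ is trace-class. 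When $V=0$ this term is simply absent, the full Pauli–Villars cancellation applies to all of $W$, and one obtains $\tr|T_{0,A}|<\infty$ on $L^2(\R^3,\C^4)$ itself. The identity $\boF^{\rm PV}_{\rm vac}(0,A)=\frac{1}{2}\tr(T_{0,A})$ then follows from part $(ii)$ together with $\tr(T_{0,A})=\tr\big(\tr_{\C^4}T_{0,A}\big)$.
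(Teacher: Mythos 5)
A point of reference first: the paper does not prove this theorem at all — it is imported verbatim from \cite{GrHaLeS1} — so the only internal comparison available is with the proof sketch of Proposition~\ref{prop:extract_non_trace_class}, which deploys exactly the toolkit you propose. Your route for part $(iii)$ — the resolvent integral representation of $|D|$, the second resolvent identity, a resolvent expansion truncated at finite order with the genuine resolvent kept in the remainder, Kato--Seiler--Simon bounds, and, crucially, performing the Pauli--Villars sum over $j$ \emph{before} taking trace norms so that the two moment conditions act as a second-order finite difference in $m_j^2$ and upgrade the large-$t$ decay of the low-order terms — is the same strategy the paper uses (the explicit algebraic identities realizing the finite difference are~\eqref{eq:annulations_sum},~\eqref{eq:annulations},~\eqref{eq:dev-gT2},~\eqref{eq:dev-gT3}). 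Your observation that the only mass-dependent piece of $D_{m_j,\bA}^2-D_{m_j,0}^2$ is the term $2m_jV\bsbeta$, which is linear in $m_j$ (hence not cancelled by the two PV moment conditions) but traceless over $\C^4$, is correct and is a clean structural explanation of why part $(ii)$ only asserts trace-classness of $\tr_{\C^4}T_{\bA}$ while part $(iii)$ obtains the full trace-class property. Your identification of the divergent model integral $\int_0^\infty t^2(m_j^2+t^2)^{-1/2}\,dt$ and of the $O(t^{-5})$ gain from the PV conditions is also quantitatively right.

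As a proof of the \emph{stated} theorem, however, the proposal is incomplete. It does not address part $(i)$ at all (compactness of $T_{\bA}$ when only $V\in\dot H^1$ and $A\in\dot H^1_{\rm div}$, where $A\notin L^1$ in general and the first-order term is genuinely not trace-class), nor the final clause of part $(ii)$: the unique continuous extension of $\boF^{\rm PV}_{\rm vac}$ to $\dot H^1(\R^3)\times\dot H^1_{\rm div}(\R^3)$ requires bounds on $\tr(\tr_{\C^4}T_{\bA})$ controlled by the energy norms alone, not by $\|A\|_{L^1}$ and $\|V\|_{L^1}$, and this is a separate, nontrivial estimate in \cite{GrHaLeS1} that your $L^1$-based Schatten bounds do not yield. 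Two smaller points to tighten: the Born series need not converge, so the sentence ``the series converges in trace norm'' should be dropped in favour of the finite truncation plus remainder that you in fact describe; and the pairing of the split factors of $W$ with adjacent resolvents for the interior factors of $(R_j^0 W)^n$ is cleaner if carried out, as in the paper, by Schatten--H\"older with $\gS_p$ norms for $p>2$ rather than with $\gS_2$ throughout, since the symbol $(|\xi|^2+m_j^2+t^2)^{-1/2}$ is not in $L^2(\R^3)$.
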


The result says that the vacuum energy $\boF_{\rm vac}^{\rm PV}(  \bA)$ is well defined by a trace (possibly first doing the $\C^4$ trace), when $A$ and $V$ are smooth and integrable, and that it has a unique continuous extension to the energy space, still denoted by $\boF_{\rm vac}^{\rm PV}(  \bA)$. In the case $V\neq0$, we do not believe the operator $T_\bA$ to be trace-class without taking first the $\C^4$-trace. But the terms in $T_\bA$ which are (possibly) not trace-class do not contribute to the final value of $\boF_{\rm vac}^{\rm PV}( \bA)$, due to gauge-invariance.

With $\boF_{\rm vac}^{\rm PV}$ at hand, it is now possible to define the associated Lagrangian action, in external charge and current densities,
\begin{equation}
\label{eq:def-eff-Lagrangian}
\begin{split}
\boL^{\rm PV}( \bA) := & - \boF^{\rm PV}_{\rm vac}(e  \bA) + e \int_{\R^3} \big( j_{\rm ext}(x) \cdot A(x) - \rho_{\rm ext}(x) V(x) \big)\,dx\\
& + \frac{1}{8 \pi} \int_{\R^3} \big( |\nabla V(x)|^2 - |\curl A(x)|^2 \big)\,dx.
\end{split}
\end{equation}
provided that $\div j_{\rm ext}=0$ and
$$\rho_{\rm ext}\ast\frac{1}{|x|}\in \dot{H}^1(\R^3),\qquad  j_{\rm ext}\ast\frac{1}{|x|}\in \dot{H}^1_{\rm div}(\R^3).$$
In~\cite{GrHaLeS1} we have constructed the electromagnetic field $\bF = (E, B)$, in presence of time-independent, weak enough, external charge density $\rho_{\rm ext}$ and current density $j_{\rm ext}$, as a local min-max critical point of $\boL^{\rm PV}$. The corresponding four-potential $ \bA = (V, A)$ satisfies the nonlinear Maxwell equations in~\eqref{eq:Maxwell}. We refer to~\cite{GrHaLeS1, Lewin-6ECM, GrHaLeS2} for more details.

\subsection{Derivation of the Euler-Heisenberg vacuum energy}

We now come to our rigorous derivation of the Euler-Heisenberg vacuum energy in~\eqref{def:F-vac-mEH} starting from the Pauli-Villars-regulated vacuum energy. 

As announced we restrict our attention to purely magnetic fields by setting
$$V \equiv 0.$$
We next consider a scaled magnetic field of the form $B_\varepsilon(x) = B(\varepsilon x)$, with a given $B=\curl A\in L^2(\R^3)$, and $A\in \dot{H}^1_{\rm div}(\R^3)$. Our main result requires a bit more regularity.

\begin{theorem}[Derivation of the Euler-Heisenberg vacuum energy]
\label{thm:lim-eps}
Let $B\in C^0(\R^3,\R^3)$ be such that $\div B=0$ and 
\begin{equation}
\label{eq:cond-B}
B\in L^1(\R^3)\cap L^\ii(\R^3),\qquad \nabla B\in L^1(\R^3)\cap L^6(\R^3),
\end{equation}
and let $A$ be the associated magnetic potential in $\dot{H}^1_{\rm div}(\R^3)$. Set finally $A_\varepsilon(x)=\varepsilon^{-1} A(\varepsilon x)$.
Then, we have 
\begin{equation}
\label{eq:conv-eps}
\varepsilon^3 \boF_{\rm vac}^{\rm PV}(0,e A_\varepsilon)
=\int_{\R^3}f_{\rm vac}^{\rm PV}\big(e|B(x)|\big)\,dx+O(\varepsilon)
\end{equation}
where
\begin{equation}
\label{def:boF-PV-EH}
\begin{split}
f_{\rm vac}^{\rm PV}(b) := \frac{1}{8 \pi^2}  \int_0^\infty \bigg( \sum_{j = 0}^2 c_j \, e^{- s m_j^2} \bigg) \Big( s b \coth \big( s b \big) - 1 \Big)\frac{ds}{s^3}
\end{split}
\end{equation}
is the Pauli-Villars-regulated Euler-Heisenberg vacuum energy.
\end{theorem}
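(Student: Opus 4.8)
The plan is to rewrite the regularized vacuum energy as a sum of heat-kernel traces through the proper-time (Schwinger) representation, and then to replace, pointwise in space, the slowly varying magnetic field by the \emph{constant} field equal to its local value $e|B(\varepsilon x)|$. The constant-field Pauli heat kernel is explicitly computable from the Landau levels and reproduces exactly the integrand $sb\coth(sb)-1$ of $f_{\rm vac}^{\rm PV}$, so that the leading order of the trace becomes the spatial integral $\int_{\R^3}f_{\rm vac}^{\rm PV}(e|B(\varepsilon x)|)\,dx$, while the error will be governed by the slow variation $\nabla B_\varepsilon=\varepsilon(\nabla B)(\varepsilon\,\cdot)$ of the field.

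First I would exploit the block structure of the squared Dirac operator. With $\pi_\varepsilon:=-i\nabla-eA_\varepsilon$ and the Pauli operator $P_\varepsilon:=(\bssigma\cdot\pi_\varepsilon)^2=|\pi_\varepsilon|^2-e\bssigma\cdot B_\varepsilon$ on $L^2(\R^3,\C^2)$, the anticommutation relations give $D_{m_j,eA_\varepsilon}^2=(\bsalpha\cdot\pi_\varepsilon)^2+m_j^2=(P_\varepsilon+m_j^2)\oplus(P_\varepsilon+m_j^2)$, so the $\C^4$-trace halves and
\[
\boF^{\rm PV}_{\rm vac}(0,e A_\varepsilon)=\tr_{L^2(\R^3,\C^2)}\sum_{j=0}^2 c_j\Big(\sqrt{-\Delta+m_j^2}-\sqrt{P_\varepsilon+m_j^2}\Big).
\]
Using $\sqrt{\lambda}=\tfrac{1}{2\sqrt\pi}\int_0^\infty(1-e^{-s\lambda})\,s^{-3/2}\,ds$ and the Pauli-Villars relations $\sum_j c_j=\sum_j c_j m_j^2=0$ (which force $\sum_j c_j e^{-sm_j^2}=O(s^2)$ as $s\to0$, curing the ultraviolet divergence), this becomes
\[
\boF^{\rm PV}_{\rm vac}(0,e A_\varepsilon)=\frac{1}{2\sqrt\pi}\int_0^\infty\Big(\sum_{j=0}^2 c_j e^{-sm_j^2}\Big)\,\tr_{L^2(\R^3,\C^2)}\big(e^{-sP_\varepsilon}-e^{s\Delta}\big)\,\frac{ds}{s^{3/2}},
\]
the interchange of trace and integral being justified by Theorem~\ref{thm:def-L-PV}$(iii)$ together with the decay of $\sum_j c_j e^{-sm_j^2}$ at $s\to0$ and (since $m=m_0>0$) at $s\to\ii$. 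The whole problem is thus reduced to the diagonal of the Pauli heat kernel $\langle x|e^{-sP_\varepsilon}|x\rangle$.

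The core step is a local approximation of this diagonal. Since the diagonal is gauge invariant, it depends on the potential only through the field, and for a strictly constant field of magnitude $b$ one computes from the Landau levels
\[
\tr_{\C^2}\langle x|e^{-sP^{\rm const}_b}|x\rangle-\frac{2}{(4\pi s)^{3/2}}=\frac{1}{4\pi^{3/2}s^{3/2}}\big(sb\coth(sb)-1\big),
\]
and inserting this into the previous display gives precisely $f_{\rm vac}^{\rm PV}(b)$. The task is then to establish, uniformly in $s$,
\[
\tr_{\C^2}\langle x|e^{-sP_\varepsilon}|x\rangle=\tr_{\C^2}\langle x|e^{-sP^{\rm const}_{e|B(\varepsilon x)|}}|x\rangle+R_\varepsilon(s,x),
\]
which I would obtain by a Duhamel comparison of $e^{-sP_\varepsilon}$ with the frozen constant-field semigroup (using a locally linear gauge matching $B_\varepsilon$ at $x$), bounding the generator difference by the oscillation of $B_\varepsilon$ over the diffusive/magnetic length probed at proper time $s$. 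The hypotheses $\nabla B\in L^1\cap L^6$ enter here, both to control $R_\varepsilon$ pointwise and to make it integrable in $x$, while $B\in L^1\cap L^\ii$ guarantees that the leading term $\int_{\R^3}f_{\rm vac}^{\rm PV}(e|B(\varepsilon x)|)\,dx=\varepsilon^{-3}\int_{\R^3}f_{\rm vac}^{\rm PV}(e|B(y)|)\,dy$ is finite.

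Finally, inserting the expansion, integrating in $x$ and $s$ against $\big(\sum_j c_j e^{-sm_j^2}\big)s^{-3/2}\,ds$, and changing variables $y=\varepsilon x$ yields the leading term $\varepsilon^{-3}\int_{\R^3}f_{\rm vac}^{\rm PV}(e|B|)$, whereas the remainder, being of order $\varepsilon$ pointwise, contributes $O(\varepsilon^{-2})$; multiplying by $\varepsilon^3$ gives \eqref{eq:conv-eps}. The main obstacle is exactly the error analysis of the heat-kernel approximation, which must hold \emph{uniformly across the entire proper-time range}: the ultraviolet end $s\to0$ is tamed by the Pauli-Villars cancellations and the infrared end $s\to\ii$ by the factor $e^{-sm^2}$, but in the intermediate regime one must quantitatively compare a variable-field magnetic semigroup with its frozen constant-field counterpart and track the dependence on $\nabla B_\varepsilon$. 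This is the genuine difficulty of the slowly varying limit, sharpened by the fact that after rescaling the field strength is of critical size and none of the usual semiclassical simplifications for the continuous spectrum of the Dirac operator are available off the shelf.
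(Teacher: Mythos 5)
Your outline reproduces the formal heat-kernel (Schwinger proper-time) computation that the paper itself presents in Section~\ref{sec:semi-classics}, and the constant-field algebra (Landau levels producing $sb\coth(sb)-1$ and hence $f_{\rm vac}^{\rm PV}$) is correct. But there are two genuine gaps. First, your justification of the interchange of trace and $s$-integral appeals to Theorem~\ref{thm:def-L-PV}$(iii)$, which requires $A\in L^1(\R^3)$; under the hypotheses~\eqref{eq:cond-B} one only gets $A\in L^p$ for $3/2<p\leq\ii$ (Lemma~\ref{lem:hypo-B}), and $A$ is in general \emph{not} integrable. Consequently the operators whose traces you manipulate, such as $e^{-sP_\varepsilon}-e^{s\Delta}$, need not be trace-class, and $\tr(e^{-sP_\varepsilon}-e^{s\Delta})$ is not defined. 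The paper must extract by hand the only non-trace-class piece, namely the term linear in $A$ of the form $K^{-1}(p\cdot A+A\cdot p)K^{-1}$, whose trace vanishes by a complex-conjugation symmetry; only after subtracting it (Proposition~\ref{prop:extract_non_trace_class}) and localizing with Gaussians does one obtain an absolutely convergent expression to work with.

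Second, the heart of the argument --- your ``Duhamel comparison of $e^{-sP_\varepsilon}$ with the frozen constant-field semigroup, bounding the generator difference by the oscillation of $B_\varepsilon$'' --- is precisely the step that is not supplied, and it is where essentially all the work lies. The generator difference is controlled by the difference of \emph{potentials}, not of fields, so one must first pass to the Poincar\'e gauge at each base point $y$ (formula~\eqref{eq:magnetic_Taylor}), turning $A_{\varepsilon,y}$ into $B(y)\times x/2+\varepsilon R_{\varepsilon,y}$ with $R_{\varepsilon,y}$ controlled by $\nabla B$ (Lemma~\ref{lem:estim_R_eps_y}); one must then produce a bound on the resulting error that is simultaneously of size $O(\varepsilon)$, integrable in the localization point $y$, and integrable over the whole proper-time (or resolvent) parameter. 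The paper does this not with heat kernels but with a sixth-order resolvent expansion combined with pointwise kernel estimates for the constant-field resolvent (Proposition~\ref{prop:estimates_constant_B} in Appendix~\ref{sec:constant}) and the positivity of Lemma~\ref{lem:positivity}; it explicitly states in Section~\ref{sec:semi-classics} that it does not know how to adapt the standard coherent-state semiclassical techniques to a difference of two heat kernels involving the full spectrum. So your plan is a faithful account of the heuristics and of the constant-field computation, but the two analytic pillars --- trace-class control without $A\in L^1$, and the quantitative frozen-field comparison of Proposition~\ref{prop:bound_error} --- are asserted rather than proved.
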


The theorem provides a limit in the same form as in~\eqref{eq:expected_result}, except that the effective energy $f_{\rm vac}^{\rm PV}(b)$ still depends on the regularization parameters $c_1,c_2,m_1,m_2$. In the next section we discuss the link with the original Euler-Heisenberg energy $f_{\rm vac}$. We remark that the function $f_{\rm vac}^{\rm PV}$ is non-negative and behaves as
\begin{equation}
f_{\rm vac}^{\rm PV}(eB)\sim\begin{cases}
\displaystyle \frac{e^2|B|^2}{24\pi^2}\sum_{j = 0}^2 c_j \, \log m_j^{-2}&\text{for $|B|\to0$},\\[0.3cm]
\displaystyle\frac{e|B|}{8\pi^2}\sum_{j=0}^2c_jm_j^2\log m_j^2&\text{for $|B|\to\infty$},
\end{cases}
\label{eq:behavior_f_PV} 
\end{equation}
where 
$$\sum_{j = 0}^2 c_j \, \log m_j^{-2}\geq0\quad\text{and}\quad \sum_{j=0}^2c_jm_j^2\log m_j^2\geq0,$$
as we show in Lemma~\ref{lem:positivity} below.
Note that the field energy is of the same order $\epsilon^{-3}$ as the vacuum in this regime,
$$\int_{\R^3}|B(\varepsilon x)|^2\,dx=\epsilon^{-3}\int_{\R^3}|B(x)|^2\,dx.$$
Therefore, in the limit $\epsilon\to0$ we get the effective local Lagrangian 
$$-\epsilon^{-3}\left(\frac1{8\pi}\int_{\R^3}|B(x)|^2\,dx+\int_{\R^3} f^{\rm PV}_{\rm vac}(e|B(x)|)\,dx\right).$$
The two terms have the same sign, hence the corresponding model is \emph{stable}, contrary to the renormalized functional based upon the Euler-Heisenberg function $f_{\rm vac}$.

\begin{remark}\it 
The technical conditions on $B$ in~\eqref{eq:cond-B} are certainly not optimal. It would be interesting to prove~\eqref{eq:conv-eps} with $O(\epsilon)$ replaced by $o(1)$, under the sole assumption that $\int_{\R^3} f^{\rm PV}_{\rm vac}(|B(x)|)\,dx<\ii$. Due to~\eqref{eq:behavior_f_PV}, this condition is equivalent to
$$\int_{\R^3}\frac{|B(x)|^2}{1+|B(x)|}\,dx<\infty.$$
\end{remark}

\subsection{Charge renormalization \& the Euler-Heisenberg energy}
The function $f_{\rm vac}^{\rm PV}$ defined in~\eqref{def:boF-PV-EH} and obtained in the limit $\epsilon\to0$ is different from the Euler-Heisenberg energy $f_{\rm vac}$. In particular, $f_{\rm vac}^{\rm PV}$ is positive convex increasing, whereas $f_{\rm vac}$ is negative concave decreasing. The reason is that $f_{\rm vac}^{\rm PV}$ does \emph{not} include the quadratic term $(sb)^2/3$ that is present in the formula~\eqref{def:F-vac-EH} of $f_{\rm vac}$. Indeed, integrating by parts using the first Pauli-Villars condition in~\eqref{eq:cond-PV}, we find
\begin{align}
 f_{\rm vac}^{\rm PV}(b)&=\frac{1}{8 \pi^2}  \int_0^\infty \bigg( \sum_{j = 0}^2 c_j \, e^{- s m_j^2} \bigg) \Big( s b \coth \big( s b \big) - 1 -\frac{(sb)^2}{3}\Big)\frac{ds}{s^3}\nonumber\\
 &\qquad +\frac{b^2}{24 \pi^2}  \int_0^\infty \bigg( \sum_{j = 0}^2 c_j \, e^{- s m_j^2} \bigg) \frac{ds}{s}\nonumber\\
&=f_{\rm vac}(b)+c_1\,\frac{m_1^4}{m^4}\,f_{\rm vac}\left(\frac{m^2}{m_1^2}b\right)+c_2\,\frac{m_2^4}{m^4}\,f_{\rm vac}\left(\frac{m^2}{m_2^2}b\right)+\frac{\log(\Lambda)}{24\pi^2}b^2
\label{eq:relation_PV_EH}
 \end{align}
where 
$$f_{\rm vac}(b)=\frac{1}{8 \pi^2}  \int_0^\infty  e^{- m^2s} \Big( s b \coth \big( s b \big) - 1 -\frac{(sb)^2}{3}\Big)\frac{ds}{s^3}$$
is the original Euler-Heisenberg energy with mass $m$, and where the averaged ultraviolet cut-off $\Lambda>1$ is defined by 
\begin{equation}
\label{def:Lambda}
\log \Lambda := - \sum_{j = 0}^2 c_j \, \log m_j>0.
\end{equation}
The second and third terms in~\eqref{eq:relation_PV_EH} are small since
$$c_j\,\frac{m_j^4}{m^4}\,f_{\rm vac}\left(\frac{m^2}{m_j^2}b\right)=O\left(\frac{|c_j|}{m_j^4}b^4\right),\qquad j=1,2.$$
In particular, we can take $m_1,m_2\to\infty$ while keeping $c_1,c_2$ bounded. However, the last term in~\eqref{eq:relation_PV_EH} is logarithmically-divergent. 

The correct way to remove the divergent term is to include it in the definition of the charge, a procedure called \emph{charge renormalization}. Here we explain this procedure on the real function $b\mapsto f_{\rm vac}^{\rm PV}(eb)$. By superposition, the same holds for the local function
$$B\mapsto \int_{\R^3}f_{\rm vac}^{\rm PV}\big(e|B(x)|\big)\,dx$$
obtained in the semi-classical limit. Following~\cite{HainSie1,HaiLeSe2,GraLeSe1,GraLeSe2}, we define the physical charge $e_{\rm ph}$ by
\begin{equation}
\label{def:e-ph}
e_{\rm ph}^2 = \frac{e^2}{1 + \frac{2 e^2}{3 \pi} \log \Lambda},
\end{equation}
and the physical magnetic field $b_{\rm ph}$ by the relation
$$e\,b=e_{\rm ph}\,b_{\rm ph}.$$
It is convenient to introduce the renormalization parameter
\begin{equation}
Z_3:=\frac{1}{1 + \frac{2 e^2}{3 \pi} \log \Lambda}\in(0,1)
\label{eq:def_Z_3}
\end{equation}
in such a way that $e_{\rm ph}=\sqrt{Z_3}\,e$ and $b_{\rm ph}=\sqrt{Z_3}\,b$. 

We would like to take $\Lambda\to\infty$, which however imposes that
$$e_{\rm ph}\leq \sqrt{\frac{3\pi}{2\log\Lambda}}\to0.$$
It is not possible to remove the ultraviolet regularization and keep the physical charge $e_{\rm ph}$ fixed, a phenomenon related to the Landau pole~\cite{Landau1}. The best we can do is to fix $e^2_{\rm ph}\log\Lambda$, which is equivalent to fixing $Z_3$, then let $e_{\rm ph}$ go to zero and verify that the chosen value of $Z_3$ does not appear in the perturbative series in $e_{\rm ph}$~\cite{GraLeSe2}. In this regime, our goal is to compare the total magnetic energy per unit volume
$$\frac{b^2}{8\pi}+f_{\rm vac}^{\rm PV}(e\,b)$$
with the physical energy involving the Euler-Heisenberg magnetic energy
$$\frac{(b_{\rm ph})^2}{8\pi}+f_{\rm vac}(e_{\rm ph}\,b_{\rm ph}).$$
The following says that these two energies are exponentially close to each other in a neighborhood of 0. In particular, they have the same Taylor expansion at $e_{\rm ph}=0$ which, however, is divergent as we have recalled above.

\begin{theorem}[Renormalization of $f_{\rm vac}^{\rm PV}$]
We have for a universal constant $K$
\begin{multline}
\left|\frac{b^2}{8\pi}+f_{\rm vac}^{\rm PV}(e\,b)-\frac{(b_{\rm ph})^2}{8\pi}-f_{\rm vac}(e_{\rm ph}\,b_{\rm ph})\right|\\ \leq K|c_1| \left(\frac{e_{\rm ph}b_{\rm ph}}{m}\right)^4 \exp\left(-6\pi\tfrac{1-Z_3}{(e_{\rm ph})^2}\right).
\label{eq:exponentially_close}
\end{multline}
\end{theorem}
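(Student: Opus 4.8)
The plan is to reduce the entire difference in~\eqref{eq:exponentially_close} to the two ``heavy'' Pauli--Villars remainder terms and then estimate those. First I would start from the decomposition~\eqref{eq:relation_PV_EH}, evaluated with the argument $eb$ in place of $b$, which splits $f_{\rm vac}^{\rm PV}(eb)$ into the Euler--Heisenberg term $f_{\rm vac}(eb)$, the two remainders
\[
R_j:=c_j\,\frac{m_j^4}{m^4}\,f_{\rm vac}\!\left(\frac{m^2}{m_j^2}\,eb\right),\qquad j=1,2,
\]
and a logarithmically divergent quadratic term proportional to $b^2\log\Lambda$. By the definitions~\eqref{def:e-ph}--\eqref{eq:def_Z_3} together with the constraint $e\,b=e_{\rm ph}\,b_{\rm ph}$, two exact cancellations occur: since $e_{\rm ph}b_{\rm ph}=eb$ one has $f_{\rm vac}(e_{\rm ph}b_{\rm ph})=f_{\rm vac}(eb)$, and the divergent quadratic term combines with the bare field energy $b^2/(8\pi)$ to produce exactly the renormalized field energy $b_{\rm ph}^2/(8\pi)$ (note $b_{\rm ph}^2=e^2b^2/e_{\rm ph}^2=b^2/Z_3$ by~\eqref{eq:def_Z_3}). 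Hence the left-hand side of~\eqref{eq:exponentially_close} equals exactly $R_1+R_2$, and everything reduces to estimating these two terms.

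Second, I would establish a global bound on the Euler--Heisenberg energy. Writing $u=sx$ and using the elementary estimate $|u\coth u-1-u^2/3|\le C\,u^4/(1+u^2)\le C\,u^4$, valid for all $u\ge0$ because the left-hand side is $\sim u^4/45$ near $0$ and $\sim u^2/3$ at infinity, one obtains
\[
|f_{\rm vac}(x)|\le \frac{C}{8\pi^2}\int_0^\infty e^{-m^2s}\,s\,x^4\,ds=\frac{C}{8\pi^2}\,\frac{x^4}{m^4}\qquad\text{for all }x.
\]
Inserting $x=\tfrac{m^2}{m_j^2}eb$ and multiplying by the prefactor $|c_j|\,m_j^4/m^4$ yields $|R_j|\le C\,|c_j|\,(eb)^4/m_j^4$.

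Third, I would convert these power-law remainders into the exponential factor. The key algebraic identities are that, by~\eqref{def:e-ph}--\eqref{eq:def_Z_3}, $\tfrac{1-Z_3}{e_{\rm ph}^2}=\tfrac{2}{3\pi}\log\Lambda$, so the exponential in~\eqref{eq:exponentially_close} is nothing but $\exp(-4\log\Lambda)=\Lambda^{-4}$; and, by~\eqref{def:Lambda} with the first Pauli--Villars condition~\eqref{eq:cond-PV}, $\Lambda^{-4}=\prod_{j=0}^2 m_j^{4c_j}=m^4\,m_1^{4c_1}m_2^{4c_2}$. It therefore suffices to prove $m_j^{-4}\le C\,m_1^{4c_1}m_2^{4c_2}$ for $j=1,2$. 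Using $c_0+c_1+c_2=0$ (so $c_1+c_2=-1$, hence $c_1<-1<0<c_2$ and $|c_2|=|c_1|-1<|c_1|$) together with the ordering $m<m_1<m_2$, a direct computation gives $m_1^{4c_1}m_2^{4c_2}/m_1^{-4}=(m_2/m_1)^{4(|c_1|-1)}\ge1$ and $m_1^{4c_1}m_2^{4c_2}/m_2^{-4}=(m_2/m_1)^{4|c_1|}\ge1$. Combining this with $|c_1|+|c_2|\le2|c_1|$ gives $|R_1+R_2|\le K\,|c_1|\,(eb)^4\,m_1^{4c_1}m_2^{4c_2}=K\,|c_1|\,(e_{\rm ph}b_{\rm ph}/m)^4\,\Lambda^{-4}$, which is the claimed bound.

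The main obstacle is this third step: reconciling the genuinely power-law size $1/m_j^4$ of each remainder with the exponentially small right-hand side. This succeeds only because the exponential is \emph{exactly} $\Lambda^{-4}=m^4m_1^{4c_1}m_2^{4c_2}$, and because the Pauli--Villars constraint forces $c_1+c_2=-1$ with $|c_1|>1$, which is precisely what makes $m_1^{4c_1}m_2^{4c_2}$ dominate both $m_1^{-4}$ and $m_2^{-4}$ once $m<m_1<m_2$. A secondary technical point worth flagging is that the argument $\tfrac{m^2}{m_j^2}eb$ of $f_{\rm vac}$ appearing in $R_j$ need not be small, so the \emph{global} estimate $|f_{\rm vac}(x)|\le C x^4/m^4$ (rather than merely the quartic small-field asymptotics) is essential to make the bound hold for all $b$.
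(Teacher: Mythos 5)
Your proof is correct and follows essentially the same route as the paper's: reduce via~\eqref{eq:relation_PV_EH} and the exact cancellation of the quadratic terms to the two heavy remainders $R_1,R_2$, bound these by the global quartic estimate $|f_{\rm vac}(x)|\le C x^4/m^4$, and convert $m_j^{-4}$ into $\Lambda^{-4}/m^4$ using the Pauli--Villars condition $c_1+c_2=-1$ together with $m_1<m_2$. The only cosmetic differences are that you supply a justification of the quartic bound where the paper simply asserts the finiteness of $K=\max_{x\ge0}2m^4|f_{\rm vac}(x)|/x^4$, and that you phrase the final step multiplicatively as $m_j^{-4}\le m_1^{4c_1}m_2^{4c_2}$ instead of through the paper's equivalent logarithmic inequality $-\log\Lambda\ge\log(m/m_1)$.
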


\begin{proof}
Inserting~\eqref{eq:relation_PV_EH} and using the definition of $b_{\rm ph}$, we see that the quadratic terms cancel exactly and therefore obtain
\begin{multline}
 \frac{b^2}{8\pi}+f_{\rm vac}^{\rm PV}(e\,b)-\frac{(b_{\rm ph})^2}{8\pi}-f_{\rm vac}(e_{\rm ph}\,b_{\rm ph})\\
 =c_1\,\frac{m_1^4}{m^4}\,f_{\rm vac}\left(\frac{m^2}{m_1^2}eb\right)+c_2\,\frac{m_2^4}{m^4}\,f_{\rm vac}\left(\frac{m^2}{m_2^2}eb\right).
\end{multline}
We set 
$$K:=\max_{x\geq0}\frac{2m^4|f_{\rm vac}(x)|}{x^4}.$$
Although $f_{\rm vac}$ depends on $m$, it can be seen that $K$ is independent of $m$. We obtain
\begin{align*}
\left|\frac{b^2}{8\pi}+f_{\rm vac}^{\rm PV}(e\,b)-\frac{(b_{\rm ph})^2}{8\pi}-f_{\rm vac}(e_{\rm ph}\,b_{\rm ph})\right|&\leq \frac{K}2\left(-\frac{c_1}{(m_1)^4}+\frac{c_2}{(m_2)^4}\right)(e_{\rm ph}b_{\rm ph})^4\\
&\leq \frac{K|c_1|}{(m_1)^4}(e_{\rm ph}b_{\rm ph})^4,
\end{align*}
since $c_2=-1-c_1$ and $m_1<m_2$. Recall that
 \begin{align*}
-\frac{3\pi}{2}\frac{1-Z_3}{(e_{\rm ph})^2}=-\log\Lambda&= \log m +c_1\log m_1+c_2\log m_2\\
&\geq  \log m +(c_1+c_2)\log m_1=\log\frac{m}{m_1}
 \end{align*}
which gives us
$$\frac1{(m_1)^4}\leq \frac{e^{-6\pi\frac{1-Z_3}{(e_{\rm ph})^2}}}{m^4}$$
and concludes the proof.
\end{proof}

Our conclusion is that the Pauli-Villars-regulated function $f_{\rm vac}^{\rm PV}$ coincides with the original Euler-Heisenberg energy $f_{\rm vac}$ to any order, after charge renormalization. It would be interesting to relate the original Dirac energy $\boF_{\rm vac}$ with $f_{\rm vac}$ directly, by performing the charge renormalization at the same time as we take the limit $\varepsilon\to0$. We do not discuss this further, since this would force us to take $e\to0$.

\subsection{The semi-classical limit}\label{sec:semi-classics}
Let us now explain the main ideas behind the convergence in~\eqref{eq:conv-eps}. We stay here at a formal level. In the next section we give the detailed proof of Theorem~\ref{thm:lim-eps}, which takes a slightly different route. Using that
\begin{equation}
 |D_{m, 0,e A}|^2= \begin{pmatrix}
\mathscr{P}_{eA}+m^2 & 0\\
0 & \mathscr{P}_{eA}+m^2
\end{pmatrix}=\big(\mathscr{P}_{eA}+m^2\big)\otimes \1_{\C^2}
 \label{eq:relation_Pauli}
\end{equation}
with $V=0$ and where 
\begin{equation}
\mathscr{P}_{eA}=\Big(\sigma\cdot(-i\nabla+eA)\Big)^2=(-i\nabla+eA)^2-e\,\bssigma\cdot B
\label{eq:Pauli}
\end{equation}
is the Pauli operator, acting on 2-spinors, we can rewrite the vacuum energy as
\begin{equation*}
 \boF_{\rm vac}^{\rm PV}(0,e A_\varepsilon)=  \tr_{L^2(\R^3,\C^2)} \left\{\sum_{j = 0}^2 c_j \, \Big(\sqrt{\mathscr{P}_{0}+m_j^2} - \sqrt{\mathscr{P}_{eA_\varepsilon}+m_j^2}\Big)\right\}.
\end{equation*}
Next we use the integral formula
\begin{equation}
\sum_{j=0}^2c_j\sqrt{a+m_j^2}=-\frac{1}{2\sqrt\pi}\int_0^\infty \bigg(\sum_{j=0}^2c_je^{-sm_j^2}\bigg)e^{-sa}\,\frac{ds}{s^{3/2}}, 
 \label{eq:integral_repr_heat_kernels}
\end{equation}
for $a\geq0$, which is proved by an integration by parts. Note that the integral on the right converges at $s=0$, thanks to the Pauli-Villars condition~\eqref{eq:cond-PV}. We obtain
\begin{equation*}
 \boF_{\rm vac}^{\rm PV}(0,e A_\varepsilon)= \frac{1}{2\sqrt\pi} \tr_{L^2(\R^3,\C^2)} \int_0^\infty \bigg(\sum_{j=0}^2c_je^{-sm_j^2}\bigg)\Big(e^{-s\mathscr{P}_{eA_\varepsilon}}-e^{-s\mathscr{P}_{0}}\Big)\,\frac{ds}{s^{3/2}}.
\end{equation*}

Exchanging the trace with the integral, we need an expansion in $\varepsilon$ of $\tr(e^{-s\mathscr{P}_{eA_\varepsilon}}-e^{-s\mathscr{P}_{0}})$. Changing units, we recognize here a semi-classical limit. 
The idea is, therefore, to replace the trace by an integral over $\R^3$, of the value of this trace per unit volume for a constant field. 

We recall that for a constant magnetic field $B$, the operator $\mathscr{P}_{eA}$ commutes with the translations in the direction of $B$. For each value $\xi$ of the momentum in this direction, the corresponding fiber Hamiltonian has the energy levels
$$\big(2n+1+\nu)e|B|+m^2+\xi^2$$
where $n\geq0$ is the index of the Landau band and $\nu=\pm1$ is the spin variable. These energies are infinitely degenerate with respect to the angular momentum along the field $B$. The number of states with fixed energy per unit area in the plane perpendicular to $B$ is $e|B|/(2\pi)$. Semi-classical analysis then suggests that
\begin{multline}
\tr\big(e^{-s\mathscr{P}_{eA_\varepsilon}}-e^{-s\mathscr{P}_{0}}\big)\\
\simeq \int_{\R^3}\bigg\{\frac{e|B(\varepsilon x)|}{2\pi}\sum_{\nu=\pm1}\sum_{n\geq0}\int_\R e^{-s(2n+1+\nu)e|B(\varepsilon x)|}e^{-s\xi^2}\frac{d\xi}{2\pi}\\ 
- 2\int_{\R^3}e^{-s|p|^2}\,\frac{dp}{(2\pi)^3}\bigg\}dx.
\label{eq:to_be_shown_semi_classical}
\end{multline}
Using $\int_{\R^d}e^{-s|p|^2}\,dp=(\pi/s)^{d/2}$
and 
\begin{equation*}
 \sum_{\nu=\pm1}\sum_{n\geq0} e^{-s(2n+1+\nu)e|B|}=(1+e^{-2s|B|})\sum_{n\geq0} e^{-2sne|B|}=\coth(es|B|),
\end{equation*}
we can compute the right side of~\eqref{eq:to_be_shown_semi_classical} and obtain 
\begin{equation}
\tr\big(e^{-s\mathscr{P}_{eA_\varepsilon}}-e^{-s\mathscr{P}_{0}}\big)\simeq \frac{\varepsilon^{-3}}{4\pi^{3/2}s^{3/2}}\int_{\R^3}\big(es|B(x)|\coth(es|B(x)|)-1\big)dx.
\label{eq:semi-classics-heat-kernel}
\end{equation}
Inserting in~\eqref{eq:to_be_shown_semi_classical}, this is exactly the limit~\eqref{eq:conv-eps} stated in Theorem~\ref{thm:lim-eps}.

There are serious technical difficulties in the rigorous justification of these arguments. The first is that the trace is everywhere formal, since the operators are usually not trace-class. 
When $A$ is assumed to be integrable, the operators are trace-class and many of our arguments become simpler. But the integrability of $A$ will fail for physically relevant magnetic fields. The second difficulty is of course the justification of the semi-classical limit~\eqref{eq:semi-classics-heat-kernel}, which involves the whole spectrum of the two Pauli operators and not only the negative eigenvalues. 

In~\cite{Haba-82,Haba-84}, Haba has used the Feynman-Kac formula to study the limit of the kernel $(e^{-s\mathscr{P}_{eA}}-e^{-s\mathscr{P}_{0}})(x,x)$ for large fields but he did not state the semi-classical limit~\eqref{eq:semi-classics-heat-kernel}. He also used the strong assumption that $A\in L^1(\R^3)\cap L^\ii(\R^3)$ (see~\cite[Thm.~2]{Haba-82}). 

In the literature there are simple \emph{variational} arguments (e.g. based on coherent states~\cite{Lieb-73b,Lieb-81b,Simon-80}, including the case of magnetic fields~\cite{LieSoYn2, FuGuPeY1, Yngvaso1, ErdoSol1, ErdoSol2, ErdoSol3, Fournais-01, Fournais-01b, Fournais-02}) which provide the first order semi-classical term. So far these seem to have been mainly used for eigenvalues and it is not clear how to adapt them to our particular limit~\eqref{eq:semi-classics-heat-kernel} of a difference of two heat kernels. Instead, we will prove Theorem~\ref{thm:lim-eps} using another integral representation which involves the resolvent of the two Pauli operators instead of their heat kernels, as in~\cite{GrHaLeS1}. 

\section{Proof of Theorem~\ref{thm:lim-eps}}

In this section we describe the main steps of the proof of Theorem~\ref{thm:lim-eps}. Some technical details will be provided later. Everywhere we use the simpler notation $D_{m,A}$ instead of $D_{m,0,A}$, since there will be no electric potential $V$ from now on.

\subsubsection*{Step 1: Estimates on $A$}
We start with an elementary result which provides useful estimates on the magnetic potential $A$, for a magnetic field $B$ which satisfies the assumptions of Theorem~\ref{thm:lim-eps}.

\begin{lemma}[Estimates on $A$ and its derivatives]
\label{lem:hypo-B}
Let $B\in C^0(\R^3,\R^3)$ be such that $\div B=0$ and satisfying the assumptions~\eqref{eq:cond-B} of Theorem~\ref{thm:lim-eps}.
The magnetic potential 
\begin{equation}
\label{eq:Savart}
A(x) = - \frac{1}{4 \pi} \int_{\R^3} \frac{y}{|y|^3} \times B(x - y) \, dy=\frac{1}{4 \pi} \int_{\R^3} \frac{(\curl B)(x-y)}{|y|}  \, dy
\end{equation}
is the unique potential in $\dot{H}^1_{\rm div}(\R^3)$ such that $B=\curl A$. It is in $C^1(\R^3)$ and satisfies 
$$\begin{cases}
   A\in L^p(\R^3)&\text{for $3/2<p\leq\infty$,}\\
   \nabla A\in L^q(\R^3)&\text{for $1<q\leq\infty$}.
  \end{cases}$$
\end{lemma}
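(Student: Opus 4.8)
The plan is to realize $A$ as an explicit convolution and to read off every claimed property from the mapping behaviour of the two kernels in~\eqref{eq:Savart}. Write $G(y)=\frac{1}{4\pi|y|}$ for the Green's function of $-\Delta$, so that the second expression in~\eqref{eq:Savart} is $A=G*\curl B$ and, since $\nabla G(y)=-\frac1{4\pi}\frac{y}{|y|^3}$, the first is $A(x)=\int\nabla G(y)\times B(x-y)\,dy$. First I would check that the two expressions agree: from the identity $\nabla_y G\times B(x-y)=\curl_y\big(G\,B(x-y)\big)+G\,(\curl B)(x-y)$ and integration in $y$, the pure-curl term drops out thanks to the decay provided by~\eqref{eq:cond-B}, giving $A=G*\curl B$. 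On the Fourier side $\widehat A(\xi)=\frac{i\xi\times\widehat B(\xi)}{|\xi|^2}$; the constraint $\div B=0$, i.e. $\xi\cdot\widehat B=0$, then yields at once $\widehat{\curl A}=\widehat B$ and $\widehat{\div A}=0$, so $\curl A=B$ and $\div A=0$. Uniqueness in $\dot H^1_{\rm div}(\R^3)$ follows because the difference $W$ of two solutions is curl-free and divergence-free, hence componentwise harmonic with $\nabla W\in L^2$ and $W\in L^6$, which forces $W=0$; membership of $A$ in $\dot H^1_{\rm div}$ then comes from the estimates below, taking $p=6$ and $q=2$.

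For the size estimates the crucial feature is that both $\nabla G$ and the Biot--Savart kernel decay like $|y|^{-2}$, which is integrable at the origin but only barely non-integrable at infinity in $\R^3$. I would split each kernel as $k=k\,\1_{\{|y|<1\}}+k\,\1_{\{|y|\ge1\}}=:k_0+k_\infty$, with $k_0\in L^1$ and $k_\infty\in L^r$ exactly for $r>3/2$. For $A$ itself I convolve the Biot--Savart kernel with $B$: since $B\in L^1\cap L^\infty\subset L^p$ for every $p$, Young's inequality gives $k_0*B\in L^p$ for all $p\in[1,\infty]$, while $\|k_\infty*B\|_{L^p}\le\|k_\infty\|_{L^p}\|B\|_{L^1}$ gives $k_\infty*B\in L^p$ for $3/2<p<\infty$, with the endpoint $p=\infty$ coming from $\|k_\infty\|_{L^\infty}\|B\|_{L^1}$. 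This produces exactly the range $3/2<p\le\infty$, the lower threshold being forced by the non-integrability of $|y|^{-2p}$ at infinity when $p\le3/2$.

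For $\nabla A$ the same splitting applies with the kernel $\nabla G$ convolved with $\curl B\in L^1\cap L^6$. The endpoint $q=\infty$ I would handle by hand: the near part is bounded uniformly in $x$ by $\|(\nabla G)_0\|_{L^{6/5}}\|\curl B\|_{L^6}$ (here $6/5<3/2$) and the far part by $\|(\nabla G)_\infty\|_{L^\infty}\|\curl B\|_{L^1}$, so $\nabla A\in L^\infty$. Young's inequality, however, only reaches $q>3/2$ for the far part and so does not by itself cover $1<q\le3/2$. The clean way around this is to note that $\nabla A$ is, up to constants, a matrix of double Riesz transforms of $B$ (its symbol $\frac{\xi\otimes(\xi\times\widehat B)}{|\xi|^2}$ is homogeneous of degree $0$), so Calder\'on--Zygmund theory gives $\|\nabla A\|_{L^q}\lesssim\|B\|_{L^q}$ for every $1<q<\infty$; since $B\in L^1\cap L^\infty\subset L^q$, this yields $\nabla A\in L^q$ for all $1<q<\infty$, and together with the $L^\infty$ bound gives the full range $1<q\le\infty$ (the exclusion of $q=1$ being the usual failure of Riesz transforms on $L^1$).

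Finally, for $C^1$-regularity I observe that $\nabla G$ and the Biot--Savart kernel are locally integrable, so first derivatives fall honestly on the kernels with no principal-value correction, and continuity of each convolution follows from dominated convergence: $B\in L^\infty\cap C^0$ against $k_0\in L^1$, $\curl B\in L^6$ against $(\nabla G)_0\in L^{6/5}$ (conjugate exponents, hence a continuous convolution), and $\curl B\in L^1$ against the bounded continuous far parts. I expect the main obstacle to be precisely the $\nabla A$ estimates at the two ends of the range: the $q=\infty$ endpoint, which forces one to use the $L^1$ \emph{and} the $L^6$ bounds on $\nabla B$ from~\eqref{eq:cond-B} simultaneously, and the low range $1<q\le3/2$, which lies beyond what Young's inequality can deliver and genuinely requires the Calder\'on--Zygmund bound on Riesz transforms.
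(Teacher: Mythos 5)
Your proof is correct, and the global strategy --- reading everything off the convolution structure of the Biot--Savart formula, with kernels split into a locally integrable near part and a decaying far part --- is in the same spirit as the paper's. The differences are in the two technical pillars. For $A\in L^p$ the paper invokes the Hardy--Littlewood--Sobolev inequality for the weak-$L^{3/2}$ kernel $y|y|^{-3}$ (plus an $L^1+L^2$ splitting of the kernel for $p=\infty$), while you use an explicit near/far decomposition and Young's inequality; these are interchangeable. The more substantial divergence is in $\nabla A$ on the low range $1<q\le 3/2$, which you correctly identify as the point where Young's inequality with a $|y|^{-2}$ kernel cannot work. You resolve it by recognizing $\nabla A$ as a matrix of double Riesz transforms of $B$ and invoking Calder\'on--Zygmund theory; the paper instead avoids singular integrals altogether via the decomposition $|y|^{-1}=(1-e^{-|y|})|y|^{-1}+e^{-|y|}|y|^{-1}$ together with an integration by parts that moves the curl from $B$ onto the first kernel, producing a kernel that decays like $|y|^{-3}$ at infinity and hence lies in $L^q$ for every $q>1$; Young's inequality against $B\in L^1$ then covers the whole range elementarily, while the exponentially localized second kernel is paired with $\curl B\in L^1\cap L^5$. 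Your route is shorter if one grants the standard Calder\'on--Zygmund machinery; the paper's is self-contained and purely elementary. You additionally spell out the equivalence of the two formulas in~\eqref{eq:Savart}, the Fourier verification that $\curl A=B$ and $\div A=0$, the uniqueness in $\dot{H}^1_{\rm div}(\R^3)$, and the continuity of $\nabla A$ by dominated convergence, all of which the paper leaves implicit; these additions are sound.
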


\begin{proof}
Since $y|y|^{-3}$ is in $L^{3/2}_{\rm w}(\R^3)$, the Hardy-Littlewood-Sobolev inequality gives that $A\in L^p(\R^3)$ for every $3/2<p<\ii$.
The case $p=\ii$ follows from the fact that $y|y|^{-3}\in L^1(\R^3)+L^2(\R^3)$ and $B\in L^2(\R^3)\cap L^\ii(\R^3)$.
Next we write
\begin{align*}
A(x)&=\frac{1}{4\pi}\int_{\R^3}\frac{1-e^{-|y|}}{|y|} \curl B(x-y)\,dy+\frac{1}{4\pi}\int_{\R^3}\frac{e^{-|y|}}{|y|} \curl B(x-y)\,dy\\
&=-\frac{1}{4\pi}\int_{\R^3}\nabla\bigg(\frac{1-e^{-|y|}}{|y|}\bigg) \times B(x-y)\,dy+\frac{1}{4\pi}\int_{\R^3}\frac{e^{-|y|}}{|y|} \curl B(x-y)\,dy.
\end{align*}
and obtain
\begin{multline*}
\partial_j A(x)= -\frac{1}{4\pi}\int_{\R^3}\partial_j\nabla\left(\frac{1-e^{-|y|}}{|y|}\right) \times B(x-y)\,dy\\
+\frac{1}{4\pi}\int_{\R^3}\partial_j\left(\frac{e^{-|y|}}{|y|}\right) \curl B(x-y)\,dy.
\end{multline*}
Since $\partial_j\left(|y|^{-1}e^{-|y|}\right)\in L^1(\R^3)\cap L^{5/4}(\R^3)$ and $\curl B\in L^1(\R^3)\cap L^5(\R^3)$, the second term is in $L^1(\R^3)\cap L^\ii(\R^3)$. Similarly, $\partial_j\nabla|y|^{-1}(1-e^{-|y|})\in L^p(\R^3)$ for all $p>1$ and $B\in L^1(\R^3)$ so the first term is in $L^p(\R^3)$ for all $p>1$.
\end{proof}

\subsubsection*{Step 2: Expression in terms of the resolvent of the Pauli operator}
Here we summarize some of the findings of~\cite{GrHaLeS1} which are useful for the proof of the main theorem.
Following~\cite{GrHaLeS1}, we use the integral representation 
$$|x| = \frac{1}{\pi} \int_\R \frac{x^2}{x^2 + \omega^2} \, d\omega=\frac{1}{\pi} \int_\R \left(1-\frac{\omega^2}{x^2 + \omega^2}\right) \, d\omega$$
which allows to express the operator $T_{eA}$ as
\begin{align}
T_{eA}&=\sum_{j=0}^2c_j  \, \Big( \big| D_{m_j, 0} \big| - \big| D_{m_j, e  A} \big| \Big)\nonumber\\
&=\frac{1}{\pi} \int_\R \sum_{j=0}^2c_j  \left(\frac{1}{\big| D_{m_j, e  A} \big|^2 + \omega^2}-\frac{1}{\big| D_{m_j, 0} \big|^2 + \omega^2}\right) \omega^2d\omega.\label{eq:abs-val}
\end{align}
Using the two Pauli-Villars conditions in~\eqref{eq:cond-PV}, a calculation gives the following useful formulas:
\begin{align}
\forall x\geq0,\qquad \sum_{j=0}^2c_j\frac{1}{x+m_j^2}&=\sum_{j=0}^2c_j\left(\frac{1}{x+m_j^2}-\frac{1}{x+m^2}\right)\nonumber\\
&=\sum_{j=0}^2c_j(m_j^2-m^2)^2\frac{1}{(x+m_j^2)(x+m^2)^2}.
\label{eq:annulations_sum}
\end{align}
These formulas show how the Pauli-Villars conditions~\eqref{eq:cond-PV} allow to increase the decay at large momenta, since $\sum_{j=0}^2c_j(x+m_j^2)^{-1}$ now behaves like $x^{-3}$ at infinity instead of $x^{-1}$ for the unregularized resolvent $(x+m^2)^{-1}$. 
Using~\eqref{eq:annulations_sum} and the bound $|D_{m_j,eA}|^2\geq m_j^2$ following from~\eqref{eq:relation_Pauli}, we deduce that 
$$\norm{\sum_{j=0}^2c_j  \left(\frac{1}{\big| D_{m_j, e  A} \big|^2 + \omega^2}-\frac{1}{\big| D_{m_j, 0} \big|^2 + \omega^2}\right)}\leq \frac{C}{(m^2+\omega^2)^3}.$$
Therefore the integral in~\eqref{eq:abs-val} converges in the operator norm. Using~\eqref{eq:relation_Pauli}, we find that $T_{eA}$ is block-diagonal in the two-spinor basis, which we can rewrite as
$$T_{eA}=\frac{1}{\pi} \int_\R\omega^2d\omega \sum_{j=0}^2c_j  \left(\frac{1}{\mathscr{P}_{e  A} +m_j^2 + \omega^2}-\frac{1}{\mathscr{P}_{0} +m_j^2 + \omega^2}\right)\otimes \1_{\C^2}$$
where $\mathscr{P}_{eA}$ is the Pauli operator defined in~\eqref{eq:Pauli}. The integral representation used in the previous section has the nice feature that the regularization gives rise to the simple term $\sum_{j=0}^2 c_j e^{-sm_j^2}$. Here the regularization parameters cannot be separated from the other terms in the integral. Nevertheless, resolvents are simpler to handle and we will be able to apply several estimates proved in~\cite{GrHaLeS1} in this setting.

To be more precise, it was proved in~\cite{GrHaLeS1} that when $A\in L^1(\R^3)\cap H^1_{\rm div}(\R^3)$ and $B\in L^1(\R^3)\cap L^2(\R^3)$, the operator 
$$\sum_{j=0}^2c_j\left(\frac{1}{\mathscr{P}_{e  A} +m_j^2  + \omega^2}-\frac{1}{\mathscr{P}_{0} +m_j^2  + \omega^2}\right)$$
is trace-class and that the trace can be integrated over $\omega$:
$$\int_{\R}\tr\left|\sum_{j=0}^2c_j\left(\frac{1}{\mathscr{P}_{e  A} +m_j^2  + \omega^2}-\frac{1}{\mathscr{P}_{0} +m_j^2  + \omega^2}\right)\right|\omega^2d\omega<\infty$$
which then proved that $T_{eA}$ is itself trace-class, as stated in Theorem~\ref{thm:def-L-PV}. In our situation, we have $B\in L^1\cap L^\ii$ but $A$ is not necessarily integrable, so the operator is not necessarily trace-class. However, the non-trace-class part is easy to extract and happens to be linear in $A$. This is the content of the following

\begin{prop}[Extracting the non-trace-class part]\label{prop:extract_non_trace_class}
We assume that $B$ satisfies the conditions~\eqref{eq:cond-B} of Theorem~\ref{thm:lim-eps} and let $A$ be the unique potential in $H^1_{\rm div}(\R^3)$ such that $B=\curl A$. Then
\begin{multline}
\int_\R\tr\bigg|\sum_{j=0}^2c_j \bigg(\frac{1}{\mathscr{P}_{eA} +m_j^2 + \omega^2}-\frac{1}{\mathscr{P}_{0} +m_j^2  + \omega^2}\\-\frac{1}{\mathscr{P}_{0} +m_j^2  + \omega^2}(p\cdot A+A\cdot p)\frac{1}{\mathscr{P}_{0} +m_j^2 + \omega^2}\bigg)\bigg|\, \omega^2d\omega<\ii
\label{eq:estim_integral_absolute}
\end{multline}
and
\begin{multline}
\boF_{\rm vac}^{\rm PV}(eA)=\frac{1}{\pi}\int_\R\tr\bigg\{\sum_{j=0}^2c_j \bigg(\frac{1}{\mathscr{P}_{eA} +m_j^2 + \omega^2}-\frac{1}{\mathscr{P}_{0} +m_j^2  + \omega^2}\\-\frac{1}{\mathscr{P}_{0} +m_j^2  + \omega^2}(p\cdot A+A\cdot p)\frac{1}{\mathscr{P}_{0} +m_j^2 + \omega^2}\bigg)\bigg\}\, \omega^2d\omega,
\label{eq:integral_formula_resolvents}
\end{multline}
where $p:=-i\nabla$. If additionally $A\in L^1(\R^3)$, then 
$$\tr\bigg\{\sum_{j=0}^2c_j\frac{1}{\mathscr{P}_{0} +m_j^2  + \omega^2}(p\cdot A+A\cdot p)\frac{1}{\mathscr{P}_{0} +m_j^2 + \omega^2}\bigg\}=0$$
for every $\omega\in\R$, where the operator in the trace is trace-class.
\end{prop}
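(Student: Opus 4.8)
The statement has three parts: (i) the absolute-trace bound showing that the twice-subtracted operator is trace-class after integration in $\omega$; (ii) the resolvent formula for $\boF_{\rm vac}^{\rm PV}(eA)$; and (iii) the vanishing of the linear-in-$A$ counterterm when $A\in L^1(\R^3)$. My strategy is to reduce (i) to a resolvent expansion, use (i) to justify the subtraction in (ii), and prove (iii) by a symmetry/gauge argument.

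For part (i), the plan is to expand the difference of resolvents in powers of the perturbation. Writing $R_j:=(\mathscr{P}_0+m_j^2+\omega^2)^{-1}$ and recalling from~\eqref{eq:Pauli} that $\mathscr{P}_{eA}=\mathscr{P}_0+e(p\cdot A+A\cdot p)+e^2|A|^2-e\,\bssigma\cdot B$, the resolvent identity gives
$$
\frac{1}{\mathscr{P}_{eA}+m_j^2+\omega^2}-R_j
=-R_j\,W_j\,\frac{1}{\mathscr{P}_{eA}+m_j^2+\omega^2},
$$
where $W_j=e(p\cdot A+A\cdot p)+e^2|A|^2-e\,\bssigma\cdot B$ is the full perturbation (independent of $j$ except through $R_j$). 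Iterating once more isolates the term $-R_j(p\cdot A+A\cdot p)R_j$ that appears in~\eqref{eq:estim_integral_absolute}, leaving a remainder that is at least quadratic in the perturbation, hence controlled by $|A|^2$, $|B|$, and their product. The key point is that, after the two Pauli-Villars subtractions, the combination $\sum_j c_j R_j$ decays like $x^{-3}$ in the momentum variable (as exhibited in~\eqref{eq:annulations_sum}), which supplies enough smoothing to close the trace-norm estimate. The plan is therefore to bound the $\C^2$-valued Schatten norms of each remaining factor using the estimates of~\cite{GrHaLeS1} together with the improved decay from~\eqref{eq:annulations_sum}, then integrate in $\omega$, where the $(m^2+\omega^2)^{-3}$ gain makes the $\omega$-integral converge. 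This is the main obstacle: one must check that after extracting the single linear term, the genuinely nonlinear remainder and the $\bssigma\cdot B$ term are trace-class under the hypotheses $B\in L^1\cap L^\infty$ with $A\in L^p$ only for $p>3/2$ (Lemma~\ref{lem:hypo-B}), i.e. without assuming $A\in L^1$.

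For part (ii), granted the absolute bound in~\eqref{eq:estim_integral_absolute}, the plan is to start from the resolvent representation of $T_{eA}$ derived just above the proposition and add and subtract the linear term $\sum_j c_j R_j(p\cdot A+A\cdot p)R_j$ inside the $\omega$-integral. The subtracted linear piece integrates to zero by the following observation: under the reflection $\omega\mapsto-\omega$ the weight $\omega^2\,d\omega$ and the resolvents are invariant, so no cancellation comes from there; instead the vanishing is the content of part (iii) (at least when $A\in L^1$), and more generally the added and subtracted copies cancel identically in the integrand. Thus~\eqref{eq:integral_formula_resolvents} follows by rearranging the absolutely convergent integral, which is legitimate precisely because of~\eqref{eq:estim_integral_absolute}.

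For part (iii), assuming $A\in L^1(\R^3)$ so that each factor is individually trace-class, the plan is to show the trace vanishes by a symmetry argument. The operator $\sum_j c_j R_j(p\cdot A+A\cdot p)R_j$ is the first-order (linear in $A$) response of the vacuum density, which must vanish by charge-conjugation/gauge invariance: the induced current at first order in a purely magnetic potential is zero for the regularized vacuum. Concretely, I would use cyclicity of the trace to write it as $\tr\{(p\cdot A+A\cdot p)\sum_j c_j R_j^2\}$ and evaluate in the Fourier representation, where $\sum_j c_j R_j^2$ is a function of $|p|$ only; pairing this even, scalar multiplier against the operator $p\cdot A+A\cdot p$ and using $\div A=0$ (so that $p\cdot A+A\cdot p=2A\cdot p$ acts as an odd multiplier in the momentum of $A$) forces the trace to vanish by parity in the integration variable. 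Alternatively, one invokes the analogous computation in~\cite{GrHaLeS1}, where this linear term was shown to give no contribution by gauge-invariance; the argument there transfers verbatim once trace-class-ness is secured by $A\in L^1$.
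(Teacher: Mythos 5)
Your overall strategy (resolvent expansion about $\mathscr{P}_0$, Pauli--Villars cancellations via~\eqref{eq:annulations_sum}, Kato--Seiler--Simon bounds imported from~\cite{GrHaLeS1}, a symmetry argument for the vanishing of the linear term) is the same as the paper's, but two steps do not close as written. First, in part (i) you claim that iterating the resolvent identity \emph{once more} isolates the linear term and leaves a remainder that is ``at least quadratic in the perturbation, hence controlled''. The remainder after two iterations is $\sum_j c_j (R_jW_j)^2(\mathscr{P}_{eA}+m_j^2+\omega^2)^{-1}$, and to put it in $\gS_1$ by H\"older one would need $R_jW_j\in\gS_2$, which fails because $|p|(p^2+m_j^2+\omega^2)^{-1}\notin L^2(\R^3)$. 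The paper must expand to \emph{fifth} order, where $R_jW_j\in\gS_5$ makes the remainder containing the full resolvent $(\mathscr{P}_{eA}+m_j^2+\omega^2)^{-1}$ trace-class by H\"older, and it treats the intermediate orders $n=2,3,4$ separately with the Pauli--Villars identities~\eqref{eq:dev-gT2}--\eqref{eq:dev-gT3}; within the first-order term only the $p\cdot A+A\cdot p$ piece fails to be trace-class (the $|A|^2$ and $\bssigma\cdot B$ pieces are handled by $A\in L^2$ and $B\in L^1$). Your one-extra-iteration plan stops short of this. Second, in part (ii) ``rearranging the absolutely convergent integral'' does not connect the right-hand side of~\eqref{eq:integral_formula_resolvents} to $\boF^{\rm PV}_{\rm vac}(eA)$: for $A\notin L^1(\R^3)$ the operator $T_{eA}$ is not trace-class, and $\boF^{\rm PV}_{\rm vac}(eA)$ is only \emph{defined} as the continuous extension from integrable potentials, so there is no un-subtracted integrand whose trace you may rearrange. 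The missing step is the approximation argument the paper invokes: prove the identity for integrable $A$ (where the subtracted term has zero trace and can be inserted for free) and pass to the limit using the continuity of $\boF^{\rm PV}_{\rm vac}$ from Theorem~\ref{thm:def-L-PV} together with the uniform bound~\eqref{eq:estim_integral_absolute}.

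Your part (iii) is essentially right in spirit but the cyclicity step needs care: a single term $R_j(p\cdot A+A\cdot p)R_j$ is in general \emph{not} trace-class even for nice $A$ (only the Pauli--Villars-summed combination is, and only when $A\in L^1$), so one cannot cycle term by term to reach $(p\cdot A+A\cdot p)\sum_jc_jR_j^2$ without further justification. The paper's argument avoids this entirely: the summed operator $K$ is self-adjoint, so $\tr K\in\R$, while $\overline{K}=-K$ (since $\overline{p}=-p$ and $A$, $K_j(\omega)$ are real) forces $\tr K$ to be purely imaginary, hence $\tr K=0$. Your Fourier-parity computation captures the same mechanism but should be run on the diagonal kernel of the full PV-summed operator rather than via cyclicity.
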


Note that $p\cdot A+A\cdot p=2p\cdot A=2A\cdot p$ since $\div A=0$.

\begin{proof}
We only give a sketch of the proof, which relies on the techniques used in~\cite{GrHaLeS1}. It is based on the 5th-order resolvent expansion
\begin{align}
&\sum_{j=0}^2c_j\left(\frac{1}{\mathscr{P}_{e  A} +m_j^2  + \omega^2}-\frac{1}{\mathscr{P}_{0} +m_j^2  + \omega^2}\right)\nonumber\\
&\qquad =\sum_{n=1}^4(-1)^n\sum_{j=0}^2c_j\left(\frac{1}{\mathscr{P}_{0} +m_j^2  + \omega^2}S_{eA}\right)^n\frac{1}{\mathscr{P}_{0} +m_j^2  + \omega^2}\nonumber\\
&\qquad\quad -\sum_{j=0}^2c_j\left(\frac{1}{\mathscr{P}_{0} +m_j^2  + \omega^2}S_{eA}\right)^5\frac{1}{\mathscr{P}_{eA} +m_j^2  + \omega^2}\nonumber\\
&\qquad:=\sum_{n=1}^4T_{eA}^{(n)}(\omega)+T_{eA}^{(5)}(\omega).\label{eq:resolvent_expansion_free}
\end{align}
where the operator $S_{e A}$ is defined by
\begin{align}
S_{e A} := \mathscr{P}_{e  A}-\mathscr{P}_{0}&=-e (p\cdot A+A\cdot p) + e^2 |A|^2 - e B \cdot \bssigma\label{def:SeA}\\
&=-e (\bssigma\cdot p) (\bssigma\cdot A)-e (\bssigma\cdot A)(\bssigma\cdot p) + e^2 |A|^2. 
\label{def:SeA_bis}
\end{align}
We emphasize that $T_{eA}^{(5)}(\omega)$ contains $\mathscr{P}_{eA}$ in the last resolvent on the right, whereas $T_{eA}^{(n)}(\omega)$ has $\mathscr{P}_{0}$ for $n\leq4$. This should note generate any confusion, since we will never introduce $T_{eA}^{(n)}(\omega)$ for $n>5$. 
For $n\leq5$, it was proved in~\cite{GrHaLeS1} that $T_{eA}^{(n)}(\omega)$  is trace-class, with 
$$\int \tr\big|T_{eA}^{(n)}(\omega)\big|\;\omega^2 d\omega<\infty$$
whenever $A\in L^n(\R^3)\cap \dot{H}^1_{\rm div}(\R^3)$ and $B\in L^2(\R^3)$. From Lemma~\ref{lem:hypo-B}, we have $A\in L^2(\R^3)\cap L^\infty(\R^3)$ and $B\in L^1(\R^3)\cap L^\infty(\R^3)$ hence only the first order term $T_{eA}^{(1)}(\omega)$ is possibly not trace-class under the condition~\eqref{eq:cond-B}. All the other terms are trace-class. The term involving $p\cdot A+A\cdot p$ which we have subtracted in the statement is exactly the non-trace-class part of $T^{(1)}_{eA}$, as we will explain below. Formula~\eqref{eq:integral_formula_resolvents} follows from the continuity of $\boF^{\rm PV}_{\rm vac}$ in Theorem~\ref{thm:def-L-PV}.

Let us briefly explain how to estimate $T^{(n)}_{eA}(\omega)$ using ideas from~\cite{GrHaLeS1}. For the 5th order term, we use directly the Kato-Seiler-Simon inequality (see~\cite[Thm. 4.1]{Simon01}) 
\begin{equation}
\label{eq:KSS}
\big\| g(-i \nabla)f(x)  \big\|_{\gS_p}=\big\| f(x) g(-i \nabla) \big\|_{\gS_p} \leq (2 \pi)^{- \frac{3}{p}} \big\| f \big\|_{L^p} \big\| g \big\|_{L^p},
\end{equation}
which holds for any number $p \geq 2$, and any functions $(f , g) \in L^p(\R^3)^2$. Here, the notation $\| \cdot \|_{\gS_p}$ stands for the norm of the Schatten class $\gS_p(L^2(\R^3, \C^4))$. Using H\"older's inequality in Schatten spaces and the positivity of $\mathscr{P}_{eA}\geq0$, we find
\begin{equation*}
\norm{T^{(5)}_{eA}(\omega)}_{\gS_1}\leq \sum_{j=0}^2\frac{|c_j|}{m_j^2+\omega^2} \norm{\frac{1}{\mathscr{P}_{0} +m_j^2  + \omega^2}S_{eA}}_{\gS_5}^5.
\end{equation*}
Here it is convenient to use that 
\begin{equation}
S_{e A}=-2e p\cdot A + e^2 |A|^2 - e B \cdot \bssigma
\label{eq:2nd_def_S_eA}
\end{equation}
since $\div A=0$. 
By the triangle inequality and the fact that $\norm{\cdot}_{\gS_p}\leq\norm{\cdot}_{\gS_q}$ for $q\leq p$, we obtain
\begin{align*}
&\norm{\frac{1}{\mathscr{P}_{0} +m_j^2  + \omega^2}S_{eA}}_{\gS_5}\\
&\qquad\leq 2e\norm{\frac{p}{p^2 +m_j^2  + \omega^2}\cdot A(x)}_{\gS_5}+e^2\norm{\frac{1}{p^2 +m_j^2  + \omega^2}|A(x)|^2}_{\gS_3}\\
&\qquad\qquad+e\norm{\frac{1}{p^2 +m_j^2  + \omega^2}|B(x)|}_{\gS_2}\\
&\qquad\lesssim \frac{e\norm{A}_{L^5}}{(m_j^2+\omega^2)^{1/5}}+\frac{e^2\norm{A}^2_{L^6}}{(m_j^2+\omega^2)^{1/2}}+\frac{e\norm{B}_{L^2}}{(m_j^2+\omega^2)^{1/4}}
\end{align*}
which, by the Sobolev inequality, gives the final estimate
$$\norm{T^{(5)}_{eA}(\omega)}_{\gS_1}\lesssim \frac{\sum_{j=0}^2|c_j|}{m^2+\omega^2}\left(\frac{e^5\norm{A}_{H^1}^5}{m^2+\omega^2}+\frac{e^{10}\norm{A}_{H^1}^{10}}{(m^2+\omega^2)^{5/2}}\right)$$
and proves that $\int_\R \|T^{(5)}_{eA}(\omega)\|_{\gS_1}\omega^2\,d\omega<\ii$, as we wanted.

The argument for the lower order terms $T^{(n)}_{eA}(\omega)$ is slightly more complicated, since we need to use the Pauli-Villars condition~\eqref{eq:cond-PV} in order to increase the decay in momentum. Let us start with the proof for the first order term $T^{(1)}_{eA}(\omega)$. The idea is to insert the resolvent with the mass $m=m_0$ using the Pauli-Villars conditions~\eqref{eq:cond-PV}. For shortness, it is convenient to introduce the notation
\begin{equation}
K_j(\omega):=\mathscr{P}_{0}+m_j^2  + \omega^2=-\Delta+m_j^2  + \omega^2,
\label{def:K_j}
\end{equation}
for the Klein-Gordon operator.
We then use the relation
\begin{align}
-T_{eA}^{(1)}(\omega)&=\sum_{j=0}^2c_j\frac{1}{K_j(\omega)}S_{eA}\frac{1}{K_j(\omega)}\nonumber\\
&=\sum_{j=1}^2c_j(m_j^2-m^2)^2\frac{1}{K_j(\omega)K_0(\omega)}S_{eA}\frac{1}{K_j(\omega)K_0(\omega)}\nonumber\\
&\qquad+\sum_{j=1}^2c_j(m_j^2-m^2)^2\frac{1}{K_0(\omega)}S_{eA}\frac{1}{K_j(\omega)K_0(\omega)^2}\nonumber\\
&\qquad+\sum_{j=1}^2c_j(m_j^2-m^2)^2\frac{1}{K_j(\omega)K_0(\omega)^2}S_{eA}\frac{1}{K_0(\omega)}.\label{eq:annulations}
\end{align}
Arguing as before using the Kato-Seiler-Simon inequality~\eqref{eq:KSS}, we obtain that the second and third terms in the definition~\eqref{def:SeA} of $S_{eA}$ are trace-class with
\begin{multline*}
\norm{\sum_{j=0}^2c_j\frac{1}{K_j(\omega)}\big(e^2 |A|^2 - e B \cdot \bssigma\big)\frac{1}{K_j(\omega)}}_{\gS_1}\\
\lesssim \left(|c_1|(m_1^2-m^2)^2+|c_2|(m_2^2-m^2)^2\right)\frac{\norm{A}_{L^2}^2+\norm{B}_{L^1}}{(m^2+\omega^2)^{5/2}}.
\end{multline*}
The term involving $p\cdot A+A\cdot p$ can be treated in the same way, but it is only trace-class when $A\in L^1(\R^3)$. It is not trace-class under our assumptions on $A$ and this is the term which has been subtracted in the statement of the proposition. Fortunately, when $A\in L^1(\R^3)$ its trace vanishes due to the invariance under complex conjugation. Namely, the operator
$$\sum_{j=0}^2c_j\frac{1}{K_j(\omega)}(p\cdot A+A\cdot p)\frac{1}{K_j(\omega)}$$
is self-adjoint, so its trace is real. Applying complex conjugation we find 
$$\overline{\sum_{j=0}^2c_j\frac{1}{K_j(\omega)}(p\cdot A+A\cdot p)\frac{1}{K_j(\omega)}}=-\sum_{j=0}^2c_j\frac{1}{K_j(\omega)}(p\cdot A+A\cdot p)\frac{1}{K_j(\omega)}$$
since $\overline{p}=-p$ and $K_j(\omega)$ and $A$ are both real. So its trace is imaginary, and thus equal to 0.

The proof that the second order term $T^{(2)}_{eA}(\omega)$ is trace-class under our assumptions~\eqref{eq:cond-B} on $B$ is similar and relies on the following identity:
\begin{align}
 T^{(2)}_{eA}(\omega) = &\sum_{j= 0}^2 c_j \, (m_j^2 - m_0^2)^2  \Big( \frac{1}{K_j(\omega)K_0(\omega)^2} \, S_{e A} \, \frac{1}{K_j(\omega)} \, S_{e A} \, \frac{1}{K_j(\omega)}\nonumber\\
& \qquad + \frac{1}{K_0(\omega)^2} \, S_{e A} \, \frac{1}{K_j(\omega)K_0(\omega)} \, S_{e A} \, \frac{1}{K_j(\omega)}\nonumber\\
& \qquad + \frac{1}{K_0(\omega)^2} \, S_{e A} \, \frac{1}{K_0(\omega)} \, S_{e A} \, \frac{1}{K_j(\omega)K_0(\omega)}\nonumber\\
& \qquad + \frac{1}{K_0(\omega)} \, S_{e A} \, \frac{1}{K_j(\omega)K_0(\omega)^2} \, S_{e A} \, \frac{1}{K_j(\omega)}\nonumber\\
& \qquad + \frac{1}{K_0(\omega)} \, S_{e A} \, \frac{1}{K_0(\omega)^2} \, S_{e A} \, \frac{1}{K_j(\omega)K_0(\omega)}\nonumber\\
& \qquad + \frac{1}{K_0(\omega)} \, S_{e A} \, \frac{1}{K_0(\omega)} \, S_{e A} \, \frac{1}{K_j(\omega)K_0(\omega)^2} \Big).
\label{eq:dev-gT2}
\end{align}

The third and fourth order terms $T^{(3)}_{eA}(\omega)$ and $T^{(4)}_{eA}(\omega)$ are somewhat easier to handle since only the first Pauli-Villars condition in~\eqref{eq:cond-PV} is necessary. For the third-order term we use
\begin{equation}
\label{eq:dev-gT3}
\begin{split}
T^{(3)}_{eA}(\omega) & = \sum_{j= 0}^2 c_j \, (m_0^2 - m_j^2)  \Big( \frac{1}{K_j(\omega)} \, \frac{1}{K_0(\omega)} \, \Big( S_{e A} \, \frac{1}{K_j(\omega)} \Big)^3 \\
&+ \frac{1}{K_0(\omega)} \, S_{e A} \frac{1}{K_j(\omega)} \, \frac{1}{K_0(\omega)} \, \Big( S_{e A} \, \frac{1}{K_j(\omega)} \Big)^2 \\
&+ \Big( \frac{1}{K_0(\omega)} \, S_{e A} \Big)^2 \, \frac{1}{K_j(\omega)} \, \frac{1}{K_0(\omega)} \, S_{e A} \, \frac{1}{K_j(\omega)}\\
& + \Big( \frac{1}{K_0(\omega)} \, S_{e A} \Big)^3 \, \frac{1}{K_j(\omega)} \, \frac{1}{K_0(\omega)} \Big) .
\end{split}
\end{equation}
The fourth order term is similar and this concludes the proof of Proposition~\ref{prop:extract_non_trace_class}.
\end{proof}

\subsubsection*{Step 3: Localization}
Since we will be considering slowly varying potentials, it will be useful to localize our energy to sets of fixed size $\rho$, where the magnetic field will be essentially constant. We introduce the Gaussian function $G_\rho$ given by
$$G_\rho(x) := (\pi \rho)^{- \frac{3}{2}} e^{- \frac{|x|^2}{\rho^2}},$$
and recall that
\begin{equation}
\label{eq:Grho}
\int_{\R^3} G_\rho(x - y)^2 \, dy = 1,
\end{equation}
for any $x \in \R^3$, which we interpret as a continuous partition of unity. The following is well-known.

\begin{lemma}[Localization of a trace-class operator]\label{lem:localization_abstract}
Let $T$ be a trace-class self-adjoint operator on $L^2(\R^3,\C^2)$ and $G_\rho(\cdot-y)$ be the multiplication operator by the function $x\mapsto G_\rho(x-y)$. Then $G_\rho(\cdot-y) T G_\rho(\cdot-y)$ is also trace-class with
$$\int_{\R^3}\tr|G_\rho(\cdot-y) T G_\rho(\cdot-y)|\,dy\leq \tr|T|<\ii$$
and
$$\tr T=\int_{\R^3}\tr G_\rho(\cdot-y) T G_\rho(\cdot-y)\,dy=\int_{\R^3}\tr \big(G_\rho \tau_{-y}T\tau_{y} G_\rho\big)\,dy$$
where $(\tau_y f)(x)=f(x-y)$ is the unitary operator which translates by $y\in\R^3$.
\end{lemma}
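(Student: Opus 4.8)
The plan is to reduce the statement to the case of a \emph{positive} operator, where the partition of unity~\eqref{eq:Grho} does all the work, and then to recover the general self-adjoint case by the triangle inequality for the trace norm. Throughout, I note that the multiplication operator by $G_\rho(\cdot-y)$ factors as $\tau_y\,G_\rho\,\tau_{-y}$, which is immediate from $(\tau_yf)(x)=f(x-y)$; since $\tau_y$ is unitary and the trace is cyclic, this gives the second equality $\tr\big(G_\rho(\cdot-y)TG_\rho(\cdot-y)\big)=\tr\big(G_\rho\,\tau_{-y}T\tau_y\,G_\rho\big)$ for free, so only the first equality and the inequality need to be established. For each fixed $y$, the operator $G_\rho(\cdot-y)TG_\rho(\cdot-y)$ is trace-class because $G_\rho(\cdot-y)$ is a bounded multiplication operator and the trace-class operators form a two-sided ideal.

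First I would treat a positive trace-class operator $P\geq0$. Using its spectral decomposition $P=\sum_k\lambda_k\,|u_k\rangle\langle u_k|$ with $\lambda_k\geq0$, $\sum_k\lambda_k=\tr P<\ii$, and $\{u_k\}$ orthonormal, cyclicity of the trace gives
$$\tr\big(G_\rho(\cdot-y)PG_\rho(\cdot-y)\big)=\sum_k\lambda_k\int_{\R^3}G_\rho(x-y)^2\,|u_k(x)|^2\,dx.$$
All terms being non-negative, Tonelli's theorem permits integration in $y$ term by term, and $\int_{\R^3}G_\rho(x-y)^2\,dy=1$ from~\eqref{eq:Grho} yields $\int_{\R^3}\tr\big(G_\rho(\cdot-y)PG_\rho(\cdot-y)\big)\,dy=\sum_k\lambda_k=\tr P$. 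Since $G_\rho(\cdot-y)PG_\rho(\cdot-y)\geq0$, its trace norm equals its trace, so both the identity and the inequality hold with equality in the positive case.

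Next I would pass to a general self-adjoint trace-class $T=T_+-T_-$, with $|T|=T_++T_-$. For the trace identity I apply the positive case to $T_+$ and $T_-$ and subtract; the interchange of the now signed sum over $k$ with the integral over $y$ is legitimized by absolute convergence, since $\sum_k|\lambda_k|\int_{\R^3}G_\rho(x-y)^2|u_k(x)|^2\,dx$ integrates in $y$ to $\sum_k|\lambda_k|=\tr|T|<\ii$. For the inequality, the triangle inequality for the trace norm together with $G_\rho(\cdot-y)T_\pm G_\rho(\cdot-y)\geq0$ gives, for each $y$,
$$\tr\big|G_\rho(\cdot-y)TG_\rho(\cdot-y)\big|\leq\tr\big(G_\rho(\cdot-y)T_+G_\rho(\cdot-y)\big)+\tr\big(G_\rho(\cdot-y)T_-G_\rho(\cdot-y)\big),$$
and integrating over $y$ and invoking the positive case for $T_\pm$ produces $\int_{\R^3}\tr|G_\rho(\cdot-y)TG_\rho(\cdot-y)|\,dy\leq\tr T_++\tr T_-=\tr|T|$.

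The remaining points are technical rather than substantial. One must check that $y\mapsto\tr|G_\rho(\cdot-y)TG_\rho(\cdot-y)|$ is measurable so that the left-hand integral makes sense; this follows from the continuity of $y\mapsto\tau_{-y}T\tau_y$ in trace norm (itself a consequence of strong continuity of translations and approximation of $T$ by finite-rank operators, combined with the unitary invariance of $\norm{\cdot}_{\gS_1}$). I expect the only place demanding genuine care to be the passage from the positive case to the signed identity, where one must ensure that the decomposition $T=T_+-T_-$ is compatible with the spectral expansion and that absolute convergence indeed justifies the term-by-term integration; everything else is a direct application of~\eqref{eq:Grho} and of the ideal property of the trace class.
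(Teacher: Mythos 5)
Your proposal is correct and follows essentially the same route as the paper's proof: both rest on the spectral decomposition of $T$, the triangle inequality for the trace norm applied to the (signed) rank-one pieces, and the partition-of-unity identity~\eqref{eq:Grho} via Tonelli; your splitting $T=T_+-T_-$ is just a regrouping of the paper's direct use of the signed eigenvalue expansion $T=\sum_j t_j|u_j\rangle\langle u_j|$. The additional remarks on measurability and on the factorization $G_\rho(\cdot-y)=\tau_y G_\rho\tau_{-y}$ are welcome but do not change the substance.
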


\begin{proof}
It is clear that $G_\rho(\cdot-y) T G_\rho(\cdot-y)$ is trace-class since $x\mapsto G_\rho(x-y)$ is bounded, hence defines a bounded operator. Now, we can diagonalize $T=\sum_{j\geq1} t_j|u_j\rangle\langle u_j|$ with $\sum_{j\geq1}|t_j|<\ii$ and obtain
$$G_\rho(\cdot-y) T G_\rho(\cdot-y)=\sum_{j\geq1} t_j|G_\rho(\cdot -y)u_j\rangle\langle u_jG_\rho(\cdot -y)|$$
with the sum being convergent in the trace-class.
In particular, by the triangle inequality
\begin{align*}
\tr|G_\rho(\cdot-y) T G_\rho(\cdot-y)|&\leq \sum_{j\geq1} |t_j| \tr |G_\rho(\cdot -y)u_j\rangle\langle u_jG_\rho(\cdot -y)|\\
&=\left(G_\rho\ast \sum_{j\geq1} |t_j|\; |u_j|^2\right)(y). 
\end{align*}
The rest follows from the fact that $\int_{\R^3}\sum_{j\geq1} |t_j|\; |u_j|^2=\sum_{j\geq1}|t_j|=\tr|T|$.
\end{proof}

Using this lemma, we can localize the operator appearing in the parenthesis in~\eqref{eq:integral_formula_resolvents} and obtain, after changing $y$ into $\varepsilon y$ and using the translation-invariance of $\mathscr{P}_0=-\Delta$,
\begin{equation}
\boxed{\boF_{\rm vac}^{\rm PV}(eA_\varepsilon)=\frac{\varepsilon^{-3}}{\pi}\int_\R\omega^2d\omega\int_{\R^3}dy\; f_{\omega}(eA_{\varepsilon,y})}
\label{eq:integral_formula_resolvents_y}
\end{equation}
with 
$$A_{\varepsilon,y}(x)=\varepsilon^{-1}A(y+\varepsilon x)$$ 
and
\begin{multline}
f_{\omega}(A)= \tr_{L^2(\R^3,\C^2)}\bigg\{ G_\rho\sum_{j=0}^2c_j \bigg(\frac{1}{\mathscr{P}_{A} +m_j^2 + \omega^2}-\frac{1}{\mathscr{P}_{0} +m_j^2  + \omega^2}\\-\frac{1}{\mathscr{P}_{0} +m_j^2  + \omega^2}(p\cdot A+A\cdot p)\frac{1}{\mathscr{P}_{0} +m_j^2 + \omega^2}\bigg)G_\rho\bigg\}.
\end{multline}
From~\eqref{eq:estim_integral_absolute} and  Lemma~\ref{lem:localization_abstract} we also know that
$$\int_\R\omega^2d\omega\int_{\R^3}dy\;\big|f_{\omega}(eA_{\varepsilon,y})\big|<\ii.$$
Since the localization length $\rho$ will not play an important role, we do not mention it in our notations.

The reason for subtracting the term with $p\cdot A+A\cdot p$ was that it is not trace-class, although its trace vanishes when $A\in L^1(\R^3)$. With the localization $G_\rho$, this operator has now become trace-class and it still has a vanishing trace, which simplifies a bit our reasoning.

\begin{lemma}
For every fixed $\rho>0$, and every $A\in {H}_{\rm div}^1(\R^3)$, the operator
$$K=\sum_{j=0}^2c_j\,G_\rho\frac{1}{\mathscr{P}_{0} +m_j^2  + \omega^2}(p\cdot A+A\cdot p)\frac{1}{\mathscr{P}_{0} +m_j^2 + \omega^2}G_\rho$$
is trace-class and its trace vanishes.
\end{lemma}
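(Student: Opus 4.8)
The plan is to prove the two assertions in turn: that $K$ is trace-class, and then that its trace vanishes. Throughout I write $K_j(\omega)$ for the Klein--Gordon operator of~\eqref{def:K_j}, I use that $\div A=0$ so that $p\cdot A+A\cdot p=2\,p\cdot A=2\,A\cdot p$, and I record from Lemma~\ref{lem:hypo-B} that $A\in L^2(\R^3)\cap L^\infty(\R^3)$, while $G_\rho$ is a Gaussian and hence lies in $L^2(\R^3)$ together with all its powers.

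For the trace-class property the obstacle is exactly the one met for $T^{(1)}_{eA}(\omega)$ in the proof of Proposition~\ref{prop:extract_non_trace_class}: the symbol $p_k\,K_j(\omega)^{-1}$ decays only like $|p|^{-1}$ at infinity, so it does not belong to $L^2(\R^3)$ and the Kato--Seiler--Simon inequality~\eqref{eq:KSS} cannot be applied after a naive Hilbert--Schmidt splitting. To gain the missing decay I would reuse the Pauli--Villars identity~\eqref{eq:annulations}, now with $S_{eA}$ replaced by $p\cdot A+A\cdot p$, which rewrites the sum over $j$ as a sum of three terms, each carrying a factor $(m_j^2-m^2)^2$ and an extra resolvent $K_0(\omega)^{-1}$. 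Conjugating this identity by $G_\rho$ on both sides, $K$ becomes a finite sum of operators of the shape $G_\rho\,g_1(p)\,A_k\,g_2(p)\,G_\rho$, where, using $p\cdot A=A\cdot p$, I place the momentum factor $p_k$ on the side carrying the larger power of the resolvent. In every term one then has $g_1(p)=O(|p|^{-3})$ or $g_2(p)=O(|p|^{-3})$ (indeed $O(|p|^{-5})$ for the terms with a triple resolvent), so both symbols belong to $L^2(\R^3)$. I would then estimate
$$\norm{G_\rho\,g_1(p)\,A_k\,g_2(p)\,G_\rho}_{\gS_1}\le \norm{G_\rho\,g_1(p)}_{\gS_2}\,\norm{A_k}\,\norm{g_2(p)\,G_\rho}_{\gS_2},$$
bounding the middle factor by $\norm{A}_{L^\infty}$ and the two outer factors by~\eqref{eq:KSS}, namely $\norm{G_\rho\,g_i(p)}_{\gS_2}\le(2\pi)^{-3/2}\norm{G_\rho}_{L^2}\norm{g_i}_{L^2}$. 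Summing the finitely many terms shows $K\in\gS_1$.

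For the vanishing of the trace I would argue by symmetry exactly as in the proof of Proposition~\ref{prop:extract_non_trace_class}, the point being that the localization $G_\rho$ has now made $K$ genuinely trace-class so the argument becomes rigorous. Since $p\cdot A+A\cdot p$, $K_j(\omega)^{-1}$ and $G_\rho$ are all self-adjoint and the coefficients $c_j$ are real, $K$ is self-adjoint and $\tr K\in\R$. On the other hand, let $C$ denote complex conjugation, an antiunitary involution; then $C\,p_k\,C=-p_k$ while $K_j(\omega)$, $A$ and $G_\rho$ are real, so $C\,K\,C=-K$. Using $\tr(CKC)=\overline{\tr K}$ for the trace-class operator $K$, this gives $\overline{\tr K}=-\tr K$, so $\tr K$ is purely imaginary; combined with $\tr K\in\R$ this forces $\tr K=0$.

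The main obstacle is the first step, i.e. the trace-class estimate: the required cancellation is not visible term by term and only emerges after the Pauli--Villars rewriting~\eqref{eq:annulations} has converted single resolvents into products with sufficient momentum decay, together with the use of $\div A=0$ to position the derivative on the favorable side. Once $K\in\gS_1$ is established, the vanishing of the trace is a soft consequence of self-adjointness and invariance under complex conjugation.
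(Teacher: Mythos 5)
Your argument is correct and is essentially identical to the paper's own (very terse) proof: the trace-class property is obtained from the Pauli--Villars rewriting~\eqref{eq:annulations} combined with the Kato--Seiler--Simon inequality~\eqref{eq:KSS}, after using $\div A=0$ to place the momentum factor on the side with enough resolvent decay, and the vanishing of the trace follows from $K^*=K$ together with $\overline{K}=-K$. The only caveat is that your bound $\|A_k\|\leq\|A\|_{L^\infty}$ on the middle factor relies on the boundedness of $A$ coming from Lemma~\ref{lem:hypo-B} rather than on the bare hypothesis $A\in H^1_{\rm div}(\R^3)$ of the lemma (which only gives $A\in L^2\cap L^6$); this suffices for all the potentials to which the lemma is actually applied, and for a general $A\in H^1_{\rm div}(\R^3)$ one would instead pair $A_k$ with a factor $(p^2+m^2)^{-1/2}$ to form an $\gS_6$ operator via~\eqref{eq:KSS} and conclude with the three-factor H\"older inequality in Schatten spaces, since $\tfrac12+\tfrac16+\tfrac12\geq1$.
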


\begin{proof}
Using~\eqref{eq:annulations} and the Kato-Seiler-Simon inequality~\eqref{eq:KSS}, it is easy to see that $K$
is trace-class. The trace vanishes for the same reason as without $G_\rho$, namely $K^*=K$ and $\overline{K}=-K$.
\end{proof}

As a corollary, we immediately deduce that
\begin{equation}
f_{\omega}(eA_{\epsilon,y})= \tr_{L^2(\R^3,\C^2)}\bigg\{ G_\rho\sum_{j=0}^2c_j \bigg(\frac{1}{\mathscr{P}_{eA_{\epsilon,y}} +m_j^2 + \omega^2}-\frac{1}{\mathscr{P}_{0} +m_j^2  + \omega^2}\bigg)G_\rho\bigg\}
\label{eq:def_f_y_epsilon_omega}
\end{equation}
with the operator being trace-class under our assumptions~\eqref{eq:cond-B} on $B$.

\subsubsection*{Step 4: Localized energy with unbounded potentials}

Using the localization $G_\rho$, the idea is now to replace in~\eqref{eq:def_f_y_epsilon_omega} the potential $A_{\varepsilon,y}(x)=A(y+\varepsilon x)/\epsilon$ by the potential $B(y)\times x/2$ of the constant field $B(y)$, up to a small error. On the contrary to all the potentials we have considered so far, the latter has a linear growth at infinity. It is therefore necessary to extend the definition of $f_\omega$ to potentials growing at infinity, which is the content of the following result.

\begin{prop}[Uniformly bounded magnetic fields]\label{prop:growth}
\label{lem:trace-class}
Let $\rho > 0$. Consider a magnetic potential $A \in C^1(\R^3, \R^3)$ for which $B=\curl A\in L^\ii(\R^3)$ and let $\mathscr{P}_A$ be the Friedrichs extension of the corresponding Pauli operator. Then, for every $\omega\in\R$ the operator 
$$G_\rho\sum_{j=0}^2c_j \frac{1}{\mathscr{P}_{A} +m_j^2 + \omega^2}G_\rho$$
is trace-class. In particular, we can define 
\begin{multline}
f_{\omega}(A)=\tr_{L^2(\R^3,\C^2)}\left\{G_\rho\sum_{j=0}^2c_j \bigg(\frac{1}{\mathscr{P}_{A} +m_j^2 + \omega^2}-\frac{1}{\mathscr{P}_{0} +m_j^2  + \omega^2}\bigg)G_\rho\right\}.
\label{eq:def_f_omega_A}
\end{multline}
This is a gauge-invariant functional: for any function $\theta\in C^2(\R^3,\R)$, we have
\begin{equation}
f_\omega(A)=f_\omega(A+\nabla\theta).
\label{eq:gauge-invariance}
\end{equation}
\end{prop}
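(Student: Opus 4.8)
The plan is to prove the trace-class property first and then deduce gauge invariance as a soft consequence. The decisive structural fact is that the Pauli operator is a square, $\mathscr{P}_A=(\bssigma\cdot(p+eA))^2\geq0$, so its Friedrichs extension is a nonnegative self-adjoint operator and $\mathscr{P}_A+m_j^2+\omega^2\geq m_j^2+\omega^2>0$ for every $j$ and every $\omega$. This positivity is exactly what allows us to dispense with any integrability of $A$ and to work with the linearly growing potential of a constant field.

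For the trace-class statement I would use the heat-semigroup representation, valid by functional calculus since all three resolvents share the operator $\mathscr{P}_A$,
$$\sum_{j=0}^2c_j\frac{1}{\mathscr{P}_A+m_j^2+\omega^2}=\int_0^\infty w(t)\,e^{-t\mathscr{P}_A}\,dt,\qquad w(t):=\Big(\sum_{j=0}^2c_je^{-tm_j^2}\Big)e^{-t\omega^2}.$$
Sandwiching with $G_\rho$ and using that $G_\rho e^{-t\mathscr{P}_A}G_\rho\geq0$, the trace-class norm is bounded by $\int_0^\infty|w(t)|\,h(t)\,dt$, where $h(t):=\tr_{L^2(\R^3,\C^2)}\big(G_\rho e^{-t\mathscr{P}_A}G_\rho\big)=\int_{\R^3}G_\rho(x)^2\,\tr_{\C^2}e^{-t\mathscr{P}_A}(x,x)\,dx$. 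Near $t=0$ the diamagnetic inequality for $(p+eA)^2$ together with the bounded spin term $-e\bssigma\cdot B$ (a Duhamel/Dyson expansion, using $B\in L^\ii$) gives the pointwise diagonal bound $\tr_{\C^2}e^{-t\mathscr{P}_A}(x,x)\leq 2(4\pi t)^{-3/2}e^{te\|B\|_{L^\ii}}$, hence $h(t)\leq Ct^{-3/2}$ on $(0,1]$; combined with the Pauli--Villars cancellation $\sum_jc_je^{-tm_j^2}=O(t^2)$ as $t\to0$ (from $\sum_jc_j=\sum_jc_jm_j^2=0$), the integrand behaves like $t^{1/2}$ and is integrable at the origin.

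The large-$t$ regime is the crux, and where the naive diamagnetic bound, carrying the factor $e^{te\|B\|_{L^\ii}}$, would force the spurious restriction $m^2+\omega^2>e\|B\|_{L^\ii}$. Here I would instead exploit $\mathscr{P}_A\geq0$: writing $h(t)=\|G_\rho e^{-t\mathscr{P}_A/2}\|_{\gS_2}^2$ and using the contraction $\|e^{-s\mathscr{P}_A}\|\leq1$, the function $t\mapsto h(t)$ is non-increasing, so $h(t)\leq h(1)<\ii$ for all $t\geq1$. Since $w(t)$ decays like $e^{-t(m^2+\omega^2)}$ at infinity (the smallest mass $m=m_0$ dominating), the tail $\int_1^\infty|w(t)|\,h(t)\,dt$ converges for every $\omega$. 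This shows $G_\rho\sum_jc_j(\mathscr{P}_A+m_j^2+\omega^2)^{-1}G_\rho\in\gS_1$; applying the identical argument with $A=0$ (so $\mathscr{P}_0=-\Delta$) shows the free combination is trace-class as well, so the difference in~\eqref{eq:def_f_omega_A} defines $f_\omega(A)$ unambiguously.

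For gauge invariance, let $U_\theta$ be multiplication by $e^{ie\theta}$, a unitary on $L^2(\R^3,\C^2)$. Since $U_\theta^*(p+eA)U_\theta=p+e(A+\nabla\theta)$ and the spin term $-e\bssigma\cdot B$ is unchanged because $\curl(A+\nabla\theta)=\curl A=B$, the transformation intertwines the Friedrichs extensions, $\mathscr{P}_{A+\nabla\theta}=U_\theta^*\mathscr{P}_A U_\theta$, whence $(\mathscr{P}_{A+\nabla\theta}+z)^{-1}=U_\theta^*(\mathscr{P}_A+z)^{-1}U_\theta$. In the difference $f_\omega(A+\nabla\theta)-f_\omega(A)$ the free parts cancel, leaving only the $\mathscr{P}_A$-dependent combinations. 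Because $G_\rho$ and $U_\theta$ are both multiplication operators they commute, so $G_\rho U_\theta^*=U_\theta^*G_\rho$, and by cyclicity of the trace the conjugation by $U_\theta$ drops out, giving $f_\omega(A+\nabla\theta)=f_\omega(A)$. The only points requiring care are that $A+\nabla\theta$ still satisfies the hypotheses (it does, being $C^1$ with the same bounded field $B$) and that $U_\theta$ maps the form domain of $\mathscr{P}_A$ onto that of $\mathscr{P}_{A+\nabla\theta}$, which is standard.
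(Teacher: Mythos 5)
Your proof is correct, but the trace-class part follows a genuinely different route from the paper. The paper stays at the level of resolvents: it first observes, via the Pauli--Villars conditions and convexity (Lemma~\ref{lem:positivity}), that $\sum_j c_j(\mathscr{P}_A+m_j^2+\omega^2)^{-1}\geq 0$, then uses the identity $\sum_j c_j(x+m_j^2)^{-1}=\sum_j c_j(m_j^2-m^2)^2(x+m_j^2)^{-1}(x+m^2)^{-2}$ to dominate this nonnegative operator by a constant times $(\mathscr{P}_A+m^2+\omega^2)^{-2}$, and finally reduces the claim to showing that $G_\rho(\mathscr{P}_A+m^2+\omega^2)^{-1}$ is Hilbert--Schmidt, which is the magnetic Kato--Seiler--Simon inequality of Lemma~\ref{lem:KSS_magnetic}, proved by the same Trotter/diamagnetic kernel bound you invoke. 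You instead pass to heat kernels, let the Pauli--Villars cancellation act on the scalar weight $\sum_j c_je^{-tm_j^2}=O(t^2)$ to tame the $t^{-3/2}$ singularity at $t=0$, and handle the tail by the monotonicity of $t\mapsto\Vert e^{-t\mathscr{P}_A/2}G_\rho\Vert_{\gS_2}^2$ --- a nice observation that circumvents the exponential factor $e^{t\Vert B\Vert_{L^\ii}}$ coming from the diamagnetic bound, which would otherwise be useless for large $t$; the paper's analogue of this step is the operator bound $\Vert(\mathscr{P}_A+\mu+\Vert B\Vert_{L^\ii})(\mathscr{P}_A+\mu)^{-1}\Vert\leq 1+\Vert B\Vert_{L^\ii}/\mu$. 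Both arguments rest on the same two pillars (the Pauli--Villars cancellation and the pointwise domination of the magnetic heat kernel by the free one), so neither is more elementary; the paper's version has the side benefit of producing the quantitative Schatten estimate of Lemma~\ref{lem:KSS_magnetic}, which is reused in the error bounds of Section~\ref{sec:bound_error}, whereas yours is more self-contained for this single proposition. Your gauge-invariance argument is the same as the paper's. Two points you should make explicit: first, that $h(t)<\ii$ for each fixed $t$ is best obtained by writing $h(t)=\Vert e^{-t\mathscr{P}_A/2}G_\rho\Vert_{\gS_2}^2$ and applying the kernel bound there, rather than evaluating the heat kernel on the diagonal, which presupposes its continuity; second, the convergence in $\gS_1$ of the Bochner integral $\int_0^\ii w(t)\,G_\rho e^{-t\mathscr{P}_A}G_\rho\,dt$ and its identification with the sandwiched resolvent combination, which follows from $\int_0^\ii|w(t)|\,h(t)\,dt<\ii$ together with a weak-convergence check.
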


\begin{remark}\it
Using the exponential decay of $G_\rho$, it is possible to generalize the result to any potential $A$ for which $B$ has a polynomial growth at infinity, but we do not discuss this further.
\end{remark}

\begin{proof}[Proof of Proposition~\ref{prop:growth}]
The proof of Proposition~\ref{lem:trace-class} is based on the following observation.

\begin{lemma}[Positivity]\label{lem:positivity}
Assume that the coefficients $c_0 = 1$, $c_1$ and $c_2$, and the masses $0<m = m_0 < m_1 < m_2$ satisfy the Pauli-Villars conditions~\eqref{eq:cond-PV}. Then
$$\sum_{j=0}^2c_j f(m_j^2)\geq0$$
for every convex function on $[0,\ii)$. In particular
$$\sum_{j=0}^2\frac{c_j}{s+m_j^2}\geq0\quad\text{and}\quad \sum_{j=0}^2 c_j e^{-sm_j^2}\geq0,\quad\text{for every $s\geq0$.}$$
\end{lemma}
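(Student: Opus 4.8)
The plan is to reduce the inequality to the elementary fact that a convex function lies below its chords, using crucially that the two Pauli-Villars conditions \eqref{eq:cond-PV} make the linear functional $f\mapsto\sum_{j=0}^2c_jf(m_j^2)$ vanish on every affine function. Up to a positive factor this functional is nothing but the second divided difference of $f$ at the nodes $m_0^2<m_1^2<m_2^2$, which is well known to be nonnegative exactly when $f$ is convex; however I would give the self-contained chord argument rather than invoke divided differences.

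First I would set $x_j:=m_j^2$, so that $0<x_0<x_1<x_2$, and recall the explicit values $c_1=-\tfrac{x_2-x_0}{x_2-x_1}<0$ and $c_2=\tfrac{x_1-x_0}{x_2-x_1}>0$. The middle node is the convex combination $x_1=(1-t)\,x_0+t\,x_2$ with $t=\tfrac{x_1-x_0}{x_2-x_0}\in(0,1)$, so convexity of $f$ gives $f(x_1)\le(1-t)f(x_0)+t\,f(x_2)$. Since $c_1<0$, multiplying this by $c_1$ reverses the inequality, and adding $f(x_0)+c_2f(x_2)$ to both sides yields
$$\sum_{j=0}^2c_jf(x_j)\ \ge\ \big[1+c_1(1-t)\big]\,f(x_0)+\big[c_1t+c_2\big]\,f(x_2).$$
The final step is to check that both bracketed coefficients vanish: substituting the explicit values gives $c_1(1-t)=-1$ and $c_1t=-c_2$, so the right-hand side is $0$ and the inequality $\sum_{j=0}^2c_jf(m_j^2)\ge0$ follows.

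For the two stated consequences I would only verify convexity on $[0,\ii)$: the map $x\mapsto(s+x)^{-1}$ has second derivative $2(s+x)^{-3}\ge0$ and $x\mapsto e^{-sx}$ has second derivative $s^2e^{-sx}\ge0$ for every $s\ge0$, so applying the first part with these choices of $f$ evaluated at $x=m_j^2$ gives both displayed inequalities.

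There is no genuine analytic obstacle; the only point requiring care is the bookkeeping that forces the two end-node coefficients to cancel, and this is precisely where both conditions in \eqref{eq:cond-PV} enter. Indeed, summing the two brackets reproduces $\sum_jc_j$, while the combination $x_0\,[1+c_1(1-t)]+x_2\,[c_1t+c_2]$ equals $\sum_jc_jx_j$ because $(1-t)x_0+t\,x_2=x_1$; since the $2\times2$ system with rows $(1,1)$ and $(x_0,x_2)$ is invertible, the vanishing of both moments $\sum_jc_j=\sum_jc_jx_j=0$ is equivalent to the vanishing of each bracket separately.
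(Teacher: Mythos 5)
Your proof is correct and follows essentially the same route as the paper: both arguments amount to applying the chord (Jensen) inequality at the middle node $m_1^2$, written as a convex combination of $m_0^2$ and $m_2^2$, with the Pauli--Villars conditions guaranteeing that the weights match the coefficients $c_j$ (your weight $t=\frac{m_1^2-m_0^2}{m_2^2-m_0^2}$ is exactly the paper's $\frac{c_2}{1+c_2}$). The only difference is presentational — the paper compresses the bookkeeping into a single displayed line, while you spell out the cancellation of the end-node coefficients — and your verification of convexity for the two special functions is a harmless addition.
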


\begin{proof}[Proof of Lemma~\ref{lem:positivity}]
By convexity we have, using the Pauli-Villars condition~\eqref{eq:cond-PV},
$$f(m^2)+c_2f(m_2^2)\geq (1+c_2)f\left(\frac{m^2+c_2m_2^2}{1+c_2}\right)= -c_1f(m_1^2).$$
\end{proof}

By Lemma~\ref{lem:positivity} we have $\sum_{j=0}^2c_j (\mathscr{P}_{A} +m_j^2 + \omega^2)^{-1}\geq0$, in the sense of operators, and therefore the same is true with $G_\rho$ on both sides. The trace is then always well-defined in $[0,\ii]$ and it suffices to estimate it in order to prove that the operator is trace-class. We use~\eqref{eq:annulations_sum}, $c_1\leq0$, and obtain 
\begin{align*}
 \sum_{j=0}^2c_j \frac{1}{x+m_j^2 + \omega^2}
 &= \sum_{j=0}^2c_j(m_j^2-m^2)^2\frac{1}{x +m_j^2  + \omega^2}\frac{1}{(x +m^2  + \omega^2)^2}\\
  &\leq  \frac{c_2(m_2^2-m^2)^2}{m_2^2+\omega^2}\frac{1}{(x +m^2  + \omega^2)^2},
\end{align*}
hence
$$\sum_{j=0}^2c_j \frac{1}{\mathscr{P}_A+m_j^2 + \omega^2}\leq \frac{c_2(m_2^2-m^2)^2}{m_2^2+\omega^2}\frac{1}{(\mathscr{P}_A +m^2  + \omega^2)^2}$$
in the sense of operators. We can multiply by $G_\rho$ on both sides without changing the inequality and obtain after taking the trace
$$\tr \left\{G_\rho\sum_{j=0}^2c_j \frac{1}{\mathscr{P}_{A}+m_j^2 + \omega^2}G_\rho\right\} \leq \frac{c_2(m_2^2-m^2)^2}{m_2^2+\omega^2}\norm{G_\rho\frac{1}{\mathscr{P}_{A} +m^2  + \omega^2}}_{\gS_2}^2.$$
It is therefore sufficient to show that $G_\rho(\mathscr{P}_A+m^2+\omega^2)^{-1}$ is a Hilbert-Schmidt operator, under the assumptions of the proposition. This is the content of the following lemma.

\begin{lemma}[Magnetic Kato-Seiler-Simon-type inequality]\label{lem:KSS_magnetic}
Consider a magnetic potential $A \in C^1(\R^3, \R^3)$ for which $B=\curl A\in L^\ii(\R^3)$.
Then we have, for every $2\leq p\leq\ii$ and every $\mu>0$
\begin{equation}
\norm{f(x)(\mathscr{P}_A+\mu)^{-1}}_{\gS_p}\leq 2^{-\frac2p}\pi^{-\frac1p}\left(1+\frac{\|B\|_{L^\ii}}{\mu}\right)^{\frac{2}{p}}\mu^{\frac{3}{2p}-1}\norm{f}_{L^p}.\label{eq:estim_KSS_magnetic}
\end{equation}
\end{lemma}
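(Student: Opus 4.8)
The plan is to obtain~\eqref{eq:estim_KSS_magnetic} by complex interpolation between the two endpoints $p=2$ and $p=\ii$, exactly as in the proof of the standard Kato--Seiler--Simon inequality~\cite[Thm.~4.1]{Simon01}. The point is that the constant in~\eqref{eq:estim_KSS_magnetic} is precisely the geometric mean $C_\ii^{1-\theta}\,C_2^{\theta}$ with $\theta=2/p$ of the endpoint constants $C_\ii=\mu^{-1}$ and $C_2=\frac{1}{2\sqrt{\pi}}\big(1+\norm{B}_{L^\ii}/\mu\big)\mu^{-1/4}$: writing $1/p=\theta/2$, the four factors $2^{-2/p}$, $\pi^{-1/p}$, $(1+\norm{B}_{L^\ii}/\mu)^{2/p}$ and $\mu^{3/(2p)-1}$ are reproduced term by term. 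It therefore suffices to establish the two bounds $\norm{f(\mathscr{P}_A+\mu)^{-1}}_{\gS_\ii}\leq\mu^{-1}\norm{f}_{L^\ii}$ and $\norm{f(\mathscr{P}_A+\mu)^{-1}}_{\gS_2}\leq C_2\,\norm{f}_{L^2}$, and then to interpolate the map $f\mapsto f\,(\mathscr{P}_A+\mu)^{-1}$ from $L^p(\R^3)$ to $\gS_p$ with the \emph{fixed} bounded operator $(\mathscr{P}_A+\mu)^{-1}$, by the usual three-lines argument.

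The endpoint $p=\ii$ is immediate: since $\mathscr{P}_A\geq0$ as a Friedrichs extension, one has $\norm{(\mathscr{P}_A+\mu)^{-1}}_{\gS_\ii}\leq\mu^{-1}$, whence $\norm{f(\mathscr{P}_A+\mu)^{-1}}_{\gS_\ii}\leq\mu^{-1}\norm{f}_{L^\ii}$. The endpoint $p=2$ is the heart of the matter, and I would treat it by splitting off the spin term. Using~\eqref{eq:Pauli}, write $\mathscr{P}_A=P-\bssigma\cdot B$ with $P:=(-i\nabla+A)^2$ the scalar magnetic Laplacian acting diagonally on the two spin components, so that the second resolvent identity gives
\begin{equation*}
f\,\frac{1}{\mathscr{P}_A+\mu}=f\,\frac{1}{P+\mu}+f\,\frac{1}{P+\mu}\,(\bssigma\cdot B)\,\frac{1}{\mathscr{P}_A+\mu}.
\end{equation*}
For the first term I would invoke the clean diamagnetic inequality for the \emph{scalar} magnetic Laplacian, $0\leq e^{-tP}(x,x)\leq(4\pi t)^{-3/2}$, which carries no exponential factor; combined with $(P+\mu)^{-2}=\int_0^\ii t\,e^{-\mu t}e^{-tP}\,dt$ and the identity $\norm{f(P+\mu)^{-1}}_{\gS_2}^2=2\int_{\R^3}|f(x)|^2(P+\mu)^{-2}(x,x)\,dx$ (the factor $2$ coming from the $\C^2$ trace), this yields $\norm{f(P+\mu)^{-1}}_{\gS_2}\leq\frac{1}{2\sqrt\pi}\mu^{-1/4}\norm{f}_{L^2}$. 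For the second term I would pull out the two bounded factors, $\norm{\bssigma\cdot B}_{\gS_\ii}=\norm{B}_{L^\ii}$ and $\norm{(\mathscr{P}_A+\mu)^{-1}}_{\gS_\ii}\leq\mu^{-1}$, which produces the extra factor $\norm{B}_{L^\ii}/\mu$ and altogether gives the constant $C_2$ above.

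The main obstacle is to obtain a bound valid for \emph{all} $\mu>0$, and not merely for $\mu>\norm{B}_{L^\ii}$. The direct route through the heat kernel of the Pauli operator itself fails here: the Feynman--Kac--It\^o representation only controls the diagonal of $e^{-t\mathscr{P}_A}$ by $2\,e^{t\norm{B}_{L^\ii}}(4\pi t)^{-3/2}$, and the spurious factor $e^{t\norm{B}_{L^\ii}}$ makes $\int_0^\ii t\,e^{-\mu t}e^{-t\mathscr{P}_A}(x,x)\,dt$ diverge as soon as $\mu\leq\norm{B}_{L^\ii}$. The resolvent splitting above circumvents this precisely because it transfers the spin term $\bssigma\cdot B$ onto the scalar operator $P$, whose diamagnetic inequality is exponential-factor-free, while the remaining $B$-dependence is absorbed into a bounded-operator estimate controlled by $\mathscr{P}_A\geq0$. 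A minor point to check along the way is that the identity $\mathscr{P}_A=P-\bssigma\cdot B$ is legitimate at the level of the Friedrichs extensions, with common form domain, which holds since $\bssigma\cdot B$ is a bounded self-adjoint perturbation when $B\in L^\ii(\R^3)$.
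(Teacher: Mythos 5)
Your proof is correct and yields exactly the constant in \eqref{eq:estim_KSS_magnetic}, but your treatment of the $p=2$ endpoint is genuinely different from the paper's. The paper handles the spin term by a spectral shift: it writes $(\mathscr{P}_A+\mu)^{-1}=(\mathscr{P}_A+\mu+\|B\|_{L^\ii})^{-1}\cdot\frac{\mathscr{P}_A+\mu+\|B\|_{L^\ii}}{\mathscr{P}_A+\mu}$, bounds the second factor in operator norm by $1+\|B\|_{L^\ii}/\mu$, and then dominates the kernel of the \emph{shifted Pauli} resolvent pointwise by that of $(-\Delta+\mu)^{-1}$ via Trotter's formula, using that $-\bssigma\cdot B(x)+\|B\|_{L^\ii}\geq0$ makes each spin factor in the Trotter product a contraction. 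You instead peel the spin term off with the second resolvent identity, reducing to the \emph{scalar} magnetic Laplacian $P=(-i\nabla+A)^2$, for which the ordinary diamagnetic bound $0\leq e^{-tP}(x,x)\leq(4\pi t)^{-3/2}$ applies with no shift, and you absorb the remainder $f(P+\mu)^{-1}(\bssigma\cdot B)(\mathscr{P}_A+\mu)^{-1}$ into operator-norm factors $\|B\|_{L^\ii}\cdot\mu^{-1}$. Both routes produce the same endpoint constant $C_2=\frac{1}{2\sqrt\pi}(1+\|B\|_{L^\ii}/\mu)\mu^{-1/4}$ and hence the same interpolated bound; your diagnosis of why the naive Feynman--Kac bound on $e^{-t\mathscr{P}_A}$ fails for $\mu\leq\|B\|_{L^\ii}$ is also the correct motivation. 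What the paper's version buys is a pointwise kernel domination \eqref{eq:pointwise_bd_kernel_resolvents} that is reused elsewhere (notably in the sixth-order estimates of Section~\ref{sec:bound_error}), whereas your version is arguably more elementary at this spot, needing only the scalar diamagnetic inequality and a bounded-perturbation argument. The remaining steps (the $p=\ii$ bound from $\mathscr{P}_A\geq0$, the factor $2$ from the $\C^2$ trace, the verification that the stated constant is $C_\ii^{1-2/p}C_2^{2/p}$, and the complex interpolation of the fixed map $f\mapsto f(\mathscr{P}_A+\mu)^{-1}$ between $L^2\to\gS_2$ and $L^\ii\to\gS_\ii$) all match the paper and are sound.
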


\begin{proof}
Since $\mathscr{P}_A\geq0$, we have $\|(\mathscr{P}_A+\mu)^{-1}\|\leq \mu^{-1}$ and the bound is obvious for $p=\ii$. For $p=2$ we write
$$\frac{1}{\mathscr{P}_A+\mu}=\frac{1}{\mathscr{P}_A+\mu+\|B\|_{L^\ii}}\times\frac{\mathscr{P}_A+\mu+\|B\|_{L^\ii}}{\mathscr{P}_A+\mu}$$
and use that
$$\norm{\frac{\mathscr{P}_A+\mu+\|B\|_{L^\ii}}{\mathscr{P}_A+\mu}}\leq 1+\frac{\|B\|_{L^\ii}}{\mu}.$$
Therefore, it suffices to estimate the Hilbert-Schmidt norm of $f(\mathscr{P}_A+\mu+\|B\|_{L^\ii})^{-1}$. Now we use the fact that the integral kernel of $(\mathscr{P}_A+\mu+\|B\|_{L^\ii})^{-1}$ is pointwise bounded by that of $(-\Delta+\mu)^{-1}$:
\begin{equation}
\big| (\mathscr{P}_A+\mu+\|B\|_{L^\ii})^{-1}(x,y)\big|_\ii\leq \big| (-\Delta+\mu)^{-1}(x,y)\big|_\ii
\label{eq:pointwise_bd_kernel_resolvents}
\end{equation}
where $|\cdot|_\ii$ is the sup norm of $2\times2$ hermitian matrices.
The proof of~\eqref{eq:pointwise_bd_kernel_resolvents} is well known and goes as follows. First we reduce it to the similar pointwise bound on the heat kernels
\begin{equation}
\big|e^{-s(\mathscr{P}_A+\mu+\|B\|_{L^\ii})}(x,y)\big|_\ii\leq e^{-s(-\Delta+\mu)}(x,y)
\label{eq:pointwise_bd_kernel_heat}
\end{equation}
using the integral formula $H^{-1}=\int_0^\ii e^{-sH}\,ds$. Then, we use Trotter's formula
\begin{align}
& e^{-s(\mathscr{P}_A+\mu+\|B\|_{L^\ii})}(x,y)\nonumber\\
&\qquad=\lim_{n\to\ii}\left(e^{-\frac{s}n(-i\nabla+A)^2+\mu)}e^{-\frac{s}{n}(-\bssigma\cdot B+\|B\|_{L^\ii})}\right)^n(x,y)\nonumber\\
&\qquad=\lim_{n\to\ii}\int_{\R^3}dx_1\cdots\int_{\R^3}dx_{n-1}\; e^{-\frac{s}n((-i\nabla+A)^2+\mu)}(x,x_1)\times \nonumber\\
&\qquad\qquad \times e^{-\frac{s}{n}(-\bssigma\cdot B(x_1)+\|B\|_{L^\ii})}e^{-\frac{s}n((-i\nabla+A)^2+\mu)}(x_1,x_2)\times\cdots\nonumber\\ 
 &\qquad\qquad\cdots \times e^{-\frac{s}n((-i\nabla+A)^2+\mu)}(x_{n-1},y)e^{-\frac{s}{n}(-\bssigma\cdot B(y)+\|B\|_{L^\ii})}.
 \label{eq:Trotter}
\end{align}
The estimate~\eqref{eq:pointwise_bd_kernel_heat} follows from the diamagnetic inequality~\cite{Simon-05}
$$|e^{-\frac{s}n((-i\nabla+A)^2+\mu)}(x,y)|_\ii\leq e^{-\frac{s}n(-\Delta+\mu)}(x,y)$$
and the fact that
$$|e^{-\frac{s}{n}(-\bssigma\cdot B(x)+\|B\|_{L^\ii})}|_\ii\leq 1,$$
since $-\bssigma\cdot B(x)+\|B\|_{L^\ii}$ is a non-negative $2\times2$ symmetric matrix for every $x\in\R^3$. We therefore conclude that
\begin{align*}
 &\norm{f(x)(\mathscr{P}_A+\mu+\|B\|_{L^\ii})^{-1}}_{\gS_2}^2\\
 &\qquad =\int_{\R^3}\int_{\R^3}|f(x)|^2\; \big| (\mathscr{P}_A+\mu+\|B\|_{L^\ii})^{-1}(x,y)\big|_2^2\,dx\,dy \\
 &\qquad \leq 2\norm{f(x)(-\Delta+\mu)^{-1}}_{\gS_2}^2=2(2\pi)^{-3}\norm{f}_{L^2}^2\int_{\R^3}\frac{dp}{(p^2+\mu)^2}=\frac{\norm{f}_{L^2}^2}{4\pi \sqrt{\mu}}
\end{align*}
and
\begin{equation}
\norm{f(x)(\mathscr{P}_A+\mu)^{-1}}_{\gS_2}\leq \left(1+\frac{\|B\|_{L^\ii}}{\mu}\right)\frac{\norm{f}_{L^2}}{2\sqrt{\pi} \mu^{1/4}},
\label{eq:estim_KSS_magnetic_2}
\end{equation}
as we wanted. The estimate for $2<p<\ii$ follows by complex interpolation~\cite{Simon01}. 
\end{proof}

We conclude the proof of Proposition~\ref{prop:growth} by recalling the (time-independent) gauge transformation $\boI_\theta$ 
\begin{equation}
\label{def:gauge-transform}
\boI_\theta(\psi) := e^{i \theta} \psi.
\end{equation}
This is a unitary operator which satisfies
$$\boI_\theta \mathscr{P}_A \boI_\theta^{- 1} = \mathscr{P}_{A+\nabla\theta} \quad {\rm and} \quad \boI_\theta G_\rho \boI_\theta^{- 1} = G_\rho.$$
Therefore,~\eqref{eq:gauge-invariance} follows from the invariance of the trace of the first term under conjugation by a unitary operator, and this concludes the proof of Proposition~\ref{prop:growth}.
\end{proof}

\subsubsection*{Step 5: Replacement by an almost constant potential}

After these preparations we are ready to start the proof of the semi-classical limit~\eqref{eq:conv-eps}. The first step is to replace $A_{\epsilon,y}$  by the potential $B(y)\times x/2$, up to a small error. We would like to express the error only in terms of the magnetic field $B$, up to a gradient term that will then be dropped using gauge invariance. The following simple formula is well known (see e.g.~\cite[Compl.~D$_{\rm IV}$]{CohDupGry-97}).

\begin{lemma}[Fundamental theorem of calculus in Poincar\'e gauge]
Let $A\in C^1(\R^3,\R^3)$ be any vector field. Then we have
\begin{equation}
 A(y+x)=\nabla_x\left(x\cdot \int_0^1 A(y+tx)\,dt\right)-x\times\int_0^1(\curl A)(y+tx)\,t\,dt.
 \label{eq:magnetic_Taylor}
\end{equation}
\end{lemma}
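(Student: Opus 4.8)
The plan is to verify the identity~\eqref{eq:magnetic_Taylor} componentwise by direct differentiation, reducing everything to the fundamental theorem of calculus in the radial parameter $t$. First I would introduce the scalar field
$$\phi(x) := x\cdot \int_0^1 A(y+tx)\,dt = \sum_i x_i \int_0^1 A_i(y+tx)\,dt$$
and compute its gradient. Differentiating under the integral sign, which is legitimate since $A\in C^1$, gives for each index $j$
$$\partial_j\phi(x) = \int_0^1 A_j(y+tx)\,dt + \sum_k x_k\int_0^1 t\,(\partial_j A_k)(y+tx)\,dt.$$

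Next I would expand the cross-product term. Writing out the $j$-th component with the Levi-Civita symbol and using the contraction $\sum_l \epsilon_{ljk}\epsilon_{lpq} = \delta_{jp}\delta_{kq}-\delta_{jq}\delta_{kp}$, one obtains the pointwise identity $\big(x\times\curl A\big)_j = \sum_k x_k\,(\partial_j A_k - \partial_k A_j)$, so that the curl term on the right-hand side of~\eqref{eq:magnetic_Taylor} contributes
$$-\Big(x\times\int_0^1(\curl A)(y+tx)\,t\,dt\Big)_j = -\sum_k x_k\int_0^1 t\,(\partial_j A_k - \partial_k A_j)(y+tx)\,dt.$$

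Adding the two displays, the terms containing $\partial_j A_k$ cancel exactly, and what remains is
$$\int_0^1\Big( A_j(y+tx) + t\sum_k x_k(\partial_k A_j)(y+tx)\Big)\,dt = \int_0^1 \frac{d}{dt}\big[t\,A_j(y+tx)\big]\,dt = A_j(y+x),$$
where the first equality is the chain rule, recognizing the integrand as the total $t$-derivative of $t\mapsto t\,A_j(y+tx)$, and the second is the fundamental theorem of calculus. Since this holds for every $j$, it is precisely the claimed identity.

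I do not expect a genuine obstacle, as this is an elementary calculus identity whose every step is justified by the $C^1$ regularity of $A$ together with differentiation under the integral sign. The only point requiring care is the vector-calculus bookkeeping in the triple-product expansion: one must correctly identify $(x\times\curl A)_j$ so that the mixed derivatives $\partial_j A_k$ cancel against those produced by $\nabla_x\phi$, leaving only the symmetric combination that reassembles into a perfect $t$-derivative.
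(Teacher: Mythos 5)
Your proof is correct and follows essentially the same route as the paper: both arguments reduce the right-hand side to $\int_0^1\frac{d}{dt}\big[t\,A_j(y+tx)\big]\,dt$ via the cancellation of the mixed derivatives $\partial_j A_k$ between the gradient term and the curl term, the only difference being that you derive the identity $\big(x\times\curl A\big)_j=\sum_k x_k(\partial_jA_k-\partial_kA_j)$ by an explicit Levi-Civita contraction where the paper invokes the vector identity $\nabla(x\cdot a)=x\times(\curl a)+(x\cdot\nabla)a+a$ directly. Your index bookkeeping and the final application of the fundamental theorem of calculus are both accurate.
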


\begin{proof}
Using
$$\nabla \big(x\cdot a\big)=x\times (\curl a)+(x\cdot \nabla)a+a,$$
we can write the right side of~\eqref{eq:magnetic_Taylor} as
\begin{multline*}
\nabla_x\left(x\cdot \int_0^1 A(y+tx)\,dt\right)-x\times\curl_x \int_0^1A(y+tx)\,dt\\
= \int_0^1 \Big(\underbrace{t\,x\cdot \nabla A(y+tx)+ A(y+tx)}_{\frac{d}{dt}tA(y+tx)}\Big)\,dt= A(y+x).
\end{multline*}
\end{proof}

The idea of the decomposition~\eqref{eq:magnetic_Taylor} is that the first gradient term can be dropped using gauge invariance, leading to a new magnetic potential
$$\tilde A_y(x)= -x\times\int_0^1(\curl A)(y+tx)\,t\,dt\simeq B(y)\times x/2.$$
This potential does not belong to the Coulomb gauge anymore, but rather satisfies the Poincar\'e (also called multipolar and Fock-Schwinger) gauge condition at $y$
$$x\cdot A(x+y)=0.$$

Applying formula~\eqref{eq:magnetic_Taylor} in our situation gives
\begin{equation}
 \frac{A(y+\epsilon x)}\epsilon=\nabla_x\left(x\cdot \int_0^1 \frac{A(y+t\epsilon x)}\epsilon\,dt\right)+B(y)\times x/2+\epsilon\,R_{\epsilon,y}(x)
 \label{eq:magnetic_Taylor_with_epsilon}
\end{equation}
where
\begin{equation}
\boxed{R_{\epsilon,y}(x)=x\times\int_0^1\frac{B(y)-B(y+t\epsilon x)}{\epsilon}\,t\,dt}
\label{def:Repsy}
\end{equation}
is an error term.

The following gives some simple properties of the error term $R_{\epsilon,y}$.

\begin{lemma}[Estimates on $R_{\epsilon,y}$]\label{lem:estim_R_eps_y}
Let $\rho,\eps>0$, and $y\in\R^3$. Then we have for a universal constant $C$
\begin{multline}
\norm{R_{\epsilon,y}(x)}_{L^p_y(\R^3)} +|x|\norm{\curl R_{\epsilon,y}(x)}_{L^p_y(\R^3)}\\ \leq C|x|\,\min\left(|x|\norm{\nabla B}_{L^p(\R^3)},\frac{\norm{B}_{L^p(\R^3)}}{\eps}\right)
\end{multline}
and
\begin{equation}
 \norm{\div R_{\epsilon,y}(x)}_{L^p_y(\R^3)} \leq C\norm{\nabla B}_{L^p(\R^3)}\,|x|.
\end{equation}
\end{lemma}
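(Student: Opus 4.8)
The plan is to treat $x$ as a fixed parameter and estimate everything in $L^p_y(\R^3)$, using only the translation-invariance of the $L^p$ norm together with Minkowski's integral inequality. I write $R_{\eps,y}(x)=x\times g_{\eps,y}(x)$ with
$$g_{\eps,y}(x):=\int_0^1\frac{B(y)-B(y+t\eps x)}{\eps}\,t\,dt,$$
so that $|R_{\eps,y}(x)|\le|x|\,|g_{\eps,y}(x)|$ pointwise. There are two ways to bound the integrand: keeping the difference as it is gives $|B(y)-B(y+t\eps x)|/\eps\le(|B(y)|+|B(y+t\eps x)|)/\eps$, while the fundamental theorem of calculus gives
$$\frac{B(y)-B(y+t\eps x)}{\eps}=-\int_0^1(tx\cdot\nabla)B(y+st\eps x)\,ds.$$
Applying Minkowski's inequality in $y$ and using $\norm{B(\cdot+z)}_{L^p}=\norm{B}_{L^p}$ and $\norm{\nabla B(\cdot+z)}_{L^p}=\norm{\nabla B}_{L^p}$ for any shift $z$, both expressions yield $\norm{g_{\eps,y}(x)}_{L^p_y}\le C\min\big(|x|\,\norm{\nabla B}_{L^p},\eps^{-1}\norm{B}_{L^p}\big)$, and hence $\norm{R_{\eps,y}(x)}_{L^p_y}\le C|x|\min(\cdots)$, the first half of the claimed bound.

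For the curl I would use the identity $\curl(x\times g)=x\,(\div g)-g\,(\div x)+(g\cdot\nabla)x-(x\cdot\nabla)g$. Here $\div x=3$ and $(g\cdot\nabla)x=g$, while the decisive point is that $\div_x g_{\eps,y}=0$: from $\partial_{x_k}(g_{\eps,y})_i=-\int_0^1 t^2(\partial_k B_i)(y+t\eps x)\,dt$ one gets $\div_x g_{\eps,y}=-\int_0^1 t^2(\div B)(y+t\eps x)\,dt=0$ since $\div B=0$. Thus $\curl R_{\eps,y}=-2g_{\eps,y}-(x\cdot\nabla)g_{\eps,y}$. The term $g_{\eps,y}$ is already controlled; for $(x\cdot\nabla)g_{\eps,y}$ the bound by $|x|\,\norm{\nabla B}_{L^p}$ is immediate from the formula for $\partial_{x_k}(g_{\eps,y})_i$, whereas the bound by $\eps^{-1}\norm{B}_{L^p}$ follows after noticing that $(x\cdot\nabla_x)B(y+t\eps x)=t\,\tfrac{d}{dt}B(y+t\eps x)$ and integrating by parts in $t$:
$$(x\cdot\nabla)g_{\eps,y}=-\frac1\eps\Big(B(y+\eps x)-2\int_0^1 tB(y+t\eps x)\,dt\Big).$$
Taking $L^p_y$ norms gives $\norm{(x\cdot\nabla)g_{\eps,y}}_{L^p_y}\le C\min(\cdots)$, so $|x|\,\norm{\curl R_{\eps,y}}_{L^p_y}\le C|x|\min(\cdots)$, completing the first displayed inequality.

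For the divergence I would use $\div(x\times g)=g\cdot\curl x-x\cdot\curl g=-x\cdot\curl g$, since $\curl x=0$. From $\partial_{x_j}(g_{\eps,y})_k=-\int_0^1 t^2(\partial_j B_k)(y+t\eps x)\,dt$ one finds $\curl_x g_{\eps,y}=-\int_0^1 t^2(\curl B)(y+t\eps x)\,dt$, hence $\norm{\curl g_{\eps,y}}_{L^p_y}\le C\norm{\nabla B}_{L^p}$ by translation invariance, and therefore $\norm{\div R_{\eps,y}}_{L^p_y}\le|x|\,\norm{\curl g_{\eps,y}}_{L^p_y}\le C|x|\,\norm{\nabla B}_{L^p}$, as claimed.

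The only genuinely non-mechanical points are the cancellation $\div_x g_{\eps,y}=0$, which reduces the curl to two manageable terms and crucially uses $\div B=0$, and the integration by parts in $t$ that converts the $\nabla B$ bound on $(x\cdot\nabla)g_{\eps,y}$ into the alternative $\eps^{-1}\norm{B}_{L^p}$ bound needed for the minimum; everything else is Minkowski's inequality combined with the translation-invariance of the $L^p$ norms of $B$ and $\nabla B$.
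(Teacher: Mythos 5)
Your proof is correct and follows essentially the same route as the paper: Minkowski's inequality with translation invariance for the two alternative bounds on the integrand, and the explicit computation of $\curl R_{\eps,y}$ and $\div R_{\eps,y}$ using $\div B=0$. The only cosmetic difference is that the paper observes that your two terms $-2g_{\eps,y}-(x\cdot\nabla)g_{\eps,y}$ telescope to the clean difference quotient $\curl R_{\eps,y}(x)=\big(B(y+\eps x)-B(y)\big)/\eps$, which yields both bounds immediately, whereas you estimate the two pieces separately via the integration by parts in $t$ — both computations are valid and give the same constants up to universal factors.
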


\begin{proof}
We have
$$|R_{\epsilon,y}(x)|\leq \frac{|x|}{2\eps} \left(|B(y)|+2\int_0^1|B(y+t\eps x)|\,tdt\right).$$
Integrating over $y$ gives
$$\norm{R_{\epsilon,y}(x)}_{L^p_y(\R^3)}\leq \frac{|x|}{\eps} \norm{B}_{L^p(\R^3)}.$$
On the other hand, using
$$\frac{B(y)-B(y+t\epsilon x)}{\epsilon}=-t\sum_{j=1}^3\int_0^1 x_j(\partial_jB)(y+ts\epsilon x)\,ds,$$
and the identity
$$\int_0^1\int_0^1f(ts)\,t^2dt\,ds=\int_0^1\frac{1-t^2}{2}f(t)\,dt,$$
we can write
\begin{equation}
R_{\epsilon,y}(x)=-x\times \int_0^1\frac{1-t^2}{2}\sum_{j=1}^3 x_j(\partial_jB)(y+t\epsilon x)\,dt
\end{equation}
hence
$$\norm{R_{\epsilon,y}(x)}_{L^p(\R^3)}\leq C|x|^2\norm{\nabla B}_{L^p(\R^3)}.$$
Finally, we have
\begin{equation*}
 \curl R_{\epsilon,y}(x)=\frac{B(y+\epsilon x)-B(y)}{\epsilon},\quad \div R_{\epsilon,y}(x)=x\cdot \int_0^1\curl B(y+t\epsilon x)\,t^2\,dt
\end{equation*}
and the estimates are similar. 
\end{proof}

Since $B\in L^\ii(\R^3)$, we have $|\curl R_{\eps,y}|\leq C/\eps$ and we can apply Proposition~\ref{prop:growth}. 
The following is then an immediate consequence of~\eqref{eq:gauge-invariance} and~\eqref{eq:magnetic_Taylor_with_epsilon}.

\begin{cor}[Replacing by an almost constant field]
\label{lem:simplification}
Assume that $B$ satisfies the conditions~\eqref{eq:cond-B} of Theorem~\ref{thm:lim-eps}.  Then for any $\omega\in\R$ and any $y\in\R^3$, we have
 \begin{equation}
\label{eq:F-epsy}
f_{\omega}(e A_{\varepsilon,y}) =  f_{\omega}\left(\frac{e}{2} B(y) \times \cdot + e \varepsilon R_{\varepsilon, y}\right),
\end{equation}
where we recall that $R_{\varepsilon,y}$ is defined in~\eqref{def:Repsy}.
\end{cor}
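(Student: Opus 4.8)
The plan is to read \eqref{eq:F-epsy} as a pure gauge-invariance statement, so that the corollary becomes an immediate consequence of Proposition~\ref{prop:growth} once the scaled Poincar\'e-gauge identity \eqref{eq:magnetic_Taylor_with_epsilon} is in place. Concretely, I would multiply \eqref{eq:magnetic_Taylor_with_epsilon} by $e$ and set
$$\theta_{\eps,y}(x):=e\,x\cdot\int_0^1\frac{A(y+t\eps x)}{\eps}\,dt,\qquad \tilde A_{\eps,y}(x):=\frac{e}{2}\,B(y)\times x+e\eps\,R_{\eps,y}(x),$$
so that $e\,A_{\eps,y}=\nabla\theta_{\eps,y}+\tilde A_{\eps,y}$ identically on $\R^3$. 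Thus the two potentials appearing in \eqref{eq:F-epsy} differ exactly by a gradient, and it only remains to check that $f_\omega$ is defined on both and that \eqref{eq:gauge-invariance} applies.

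Next I would verify the hypotheses of Proposition~\ref{prop:growth} for each potential. For $e\,A_{\eps,y}$ one has $\curl(e\,A_{\eps,y})(x)=e\,B(y+\eps x)\in L^\ii(\R^3)$ and $A\in C^1(\R^3)$ by Lemma~\ref{lem:hypo-B}, so $f_\omega(e\,A_{\eps,y})$ is well defined and coincides with the trace-class expression \eqref{eq:def_f_y_epsilon_omega}, because the subtracted linear term has vanishing trace (the lemma preceding the corollary). For $\tilde A_{\eps,y}$ one computes, using $\curl\big(B(y)\times x/2\big)=B(y)$ together with the identity $\curl R_{\eps,y}(x)=\big(B(y+\eps x)-B(y)\big)/\eps$ from Lemma~\ref{lem:estim_R_eps_y}, that $\curl\tilde A_{\eps,y}(x)=e\,B(y+\eps x)$ as well, so that $\curl\tilde A_{\eps,y}\in L^\ii$ with $\|\curl\tilde A_{\eps,y}\|_{L^\ii}\le C/\eps$ as already recorded. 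Hence $\mathscr{P}_{\tilde A_{\eps,y}}$ is defined by the Friedrichs extension and $f_\omega(\tilde A_{\eps,y})$ makes sense through \eqref{eq:def_f_omega_A}. Applying the gauge-invariance \eqref{eq:gauge-invariance} with $\theta=\theta_{\eps,y}$ then yields exactly \eqref{eq:F-epsy}.

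The delicate point, and the step I expect to require the most care, is regularity: \eqref{eq:gauge-invariance} is stated for $\theta\in C^2$ and Proposition~\ref{prop:growth} for $A\in C^1$, whereas under \eqref{eq:cond-B} the field $B$ is only continuous with $\nabla B\in L^6$. This yields $A\in C^1$ (hence $\theta_{\eps,y}\in C^1$) and $\tilde A_{\eps,y}\in C^0$, but neither $A\in C^2$ nor $\tilde A_{\eps,y}\in C^1$ is guaranteed, so the \emph{literal} hypotheses of the two statements are not met. I would circumvent this by a mollification argument: replace $B$ by $B_\delta:=B*\varphi_\delta$ with a standard smooth radial approximate identity $\varphi_\delta$, which is again divergence-free and satisfies \eqref{eq:cond-B} with constants controlled uniformly in $\delta$, and let $A_\delta$ be the associated Coulomb potential. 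For $B_\delta$ every object above is smooth, so \eqref{eq:F-epsy} holds verbatim. I would then pass to the limit $\delta\to0$, using that all the Hilbert--Schmidt bounds behind $f_\omega$ (Lemma~\ref{lem:KSS_magnetic}) depend only on $\|B\|_{L^\ii}$ and $\omega$ and are therefore uniform in $\delta$, together with the strong resolvent convergence of the Pauli operators associated with $e\,A_{\eps,y}^{\delta}$ and $\tilde A_{\eps,y}^{\delta}$ towards those associated with $e\,A_{\eps,y}$ and $\tilde A_{\eps,y}$, so that both sides of \eqref{eq:F-epsy} are continuous in $\delta$. Alternatively, one may observe that the unitary $\boI_{\theta_{\eps,y}}$ conjugates the two Friedrichs extensions as soon as $\nabla\theta_{\eps,y}$ is continuous and locally bounded, which already holds for $\theta_{\eps,y}\in C^1$; this slightly stronger form of \eqref{eq:gauge-invariance} is all that the argument actually uses.
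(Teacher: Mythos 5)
Your proof follows exactly the paper's route: the paper obtains \eqref{eq:F-epsy} as an immediate consequence of the Poincar\'e-gauge decomposition \eqref{eq:magnetic_Taylor_with_epsilon} and the gauge invariance \eqref{eq:gauge-invariance} from Proposition~\ref{prop:growth}, after noting that $|\curl R_{\eps,y}|\leq C/\eps$ makes that proposition applicable to the shifted potential. Your extra discussion of the $C^1$/$C^2$ regularity mismatch under \eqref{eq:cond-B} (and the mollification or weakened-gauge-invariance fix) addresses a point the paper silently glosses over, but it does not alter the underlying argument.
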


\subsubsection*{Step 6: Computation for a constant field}
First we discard the error term $R_{\epsilon,y}$ in~\eqref{eq:F-epsy} and compute the exact value of the energy. Of course, we find the localized Pauli-Villars-regulated Euler-Heisenberg energy.

\begin{prop}[Constant field]\label{prop:value}
If $B$ is constant and $A(x)=B\times x/2$, then the function defined in~\eqref{eq:def_f_omega_A} equals
\begin{align}
f_{\omega}(B\times x/2)
=&\frac{1}{4\pi^{3/2}}  \int_0^\infty e^{-s\omega^2}\bigg( \sum_{j = 0}^2 c_j \, e^{- s m_j^2} \bigg) \Big( s |B| \coth \big( s |B| \big) - 1 \Big)\frac{ds}{s^{3/2}}.
\label{eq:computation_constant_field}
\end{align}
\end{prop}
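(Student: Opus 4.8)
The strategy is to pass from the resolvents to the heat semigroup and reduce everything to the explicitly known heat kernel of a constant-field Pauli operator. By rotation invariance of $\tr_{\C^2}$ and of the radial weight $G_\rho$, I may assume $B=|B|\,\be_3$, so that $A(x)=\tfrac12 B\times x$ is the symmetric-gauge potential; write $\mathscr{P}_B:=\mathscr{P}_{B\times\cdot/2}$ for the corresponding Pauli operator, whose field is indeed $\curl(\tfrac12 B\times x)=B$. Inserting the subordination formula $(\mathscr{P}+m_j^2+\omega^2)^{-1}=\int_0^\infty e^{-s(m_j^2+\omega^2)}e^{-s\mathscr{P}}\,ds$ for $\mathscr{P}\in\{\mathscr{P}_B,\mathscr{P}_0\}$ into the definition \eqref{eq:def_f_omega_A} gives, at least formally,
\begin{equation*}
f_\omega(B\times x/2)=\int_0^\infty e^{-s\omega^2}\Big(\sum_{j=0}^2 c_j e^{-sm_j^2}\Big)\,\tr\big[G_\rho\big(e^{-s\mathscr{P}_B}-e^{-s\mathscr{P}_0}\big)G_\rho\big]\,ds.
\end{equation*}
Since $G_\rho$ is a multiplication operator and $e^{-s\mathscr{P}}\ge 0$, each $G_\rho e^{-s\mathscr{P}}G_\rho$ is nonnegative and, by the diamagnetic comparison established in Lemma~\ref{lem:KSS_magnetic}, has a bounded diagonal, hence is trace-class with $\tr[G_\rho e^{-s\mathscr{P}}G_\rho]=\int_{\R^3}G_\rho(x)^2\,\tr_{\C^2}e^{-s\mathscr{P}}(x,x)\,dx$. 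For a constant field the diagonal $\tr_{\C^2}e^{-s\mathscr{P}}(x,x)$ is independent of $x$, so by the normalization \eqref{eq:Grho} the localization $G_\rho$ merely extracts the heat trace per unit volume. This is exactly where $G_\rho$ cures the lack of trace-class character of the unlocalized objects.

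I then compute the two diagonals. For $\mathscr{P}_0=-\Delta$ on $\C^2$-spinors, $\tr_{\C^2}e^{-s\mathscr{P}_0}(x,x)=2(4\pi s)^{-3/2}=(4\pi^{3/2})^{-1}s^{-3/2}$. For the constant field I use the Landau decomposition: in the plane orthogonal to $B$ the operator $(-i\nabla_\perp+A_\perp)^2$ has the Landau levels $(2n+1)|B|$, $n\ge0$, each of density of states $|B|/(2\pi)$ per unit area, so its diagonal per unit area is $\frac{|B|}{2\pi}\sum_{n\ge0}e^{-s(2n+1)|B|}=\frac{|B|}{4\pi\sinh(s|B|)}$; along $B$ the motion is free with diagonal $(4\pi s)^{-1/2}$; and the spin term $-\bssigma\cdot B=-|B|\bssigma_3$ contributes the matrix $e^{s|B|\bssigma_3}$ of $\C^2$-trace $2\cosh(s|B|)$. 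Multiplying the three factors yields
\begin{equation*}
\tr_{\C^2}e^{-s\mathscr{P}_B}(x,x)=\frac{|B|}{4\pi\sinh(s|B|)}\cdot\frac{1}{2\sqrt{\pi s}}\cdot 2\cosh(s|B|)=\frac{|B|\coth(s|B|)}{4\pi^{3/2}\sqrt s},
\end{equation*}
and therefore
\begin{equation*}
D(s):=\tr_{\C^2}\big(e^{-s\mathscr{P}_B}-e^{-s\mathscr{P}_0}\big)(x,x)=\frac{1}{4\pi^{3/2}s^{3/2}}\big(s|B|\coth(s|B|)-1\big).
\end{equation*}

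It remains to justify the interchange of trace and $s$-integral and to assemble the result. Using the triangle inequality together with the positivity of the two localized heat operators, $\tr|G_\rho(e^{-s\mathscr{P}_B}-e^{-s\mathscr{P}_0})G_\rho|\le \tr[G_\rho e^{-s\mathscr{P}_B}G_\rho]+\tr[G_\rho e^{-s\mathscr{P}_0}G_\rho]$, which the previous step bounds by $C\,s^{-3/2}$ as $s\to0$. Crucially, the Pauli-Villars conditions \eqref{eq:cond-PV} give $\sum_{j}c_j e^{-sm_j^2}=O(s^2)$ as $s\to0$, so the full weighted integrand is $O(s^{1/2})$ near the origin, while at large $s$ it decays like $e^{-sm^2}s^{-1/2}$; hence $\int_0^\infty|\sum_j c_j e^{-sm_j^2}|\,e^{-s\omega^2}\,\tr|G_\rho(e^{-s\mathscr{P}_B}-e^{-s\mathscr{P}_0})G_\rho|\,ds<\infty$, which legitimizes Fubini. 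Replacing the localized trace by $D(s)$ via $\int G_\rho^2=1$ gives exactly \eqref{eq:computation_constant_field}. The main obstacle is not any isolated computation but the bookkeeping that keeps the mass sum $\sum_j c_j e^{-sm_j^2}$ intact: its $O(s^2)$ cancellation is indispensable, whereas $\sum_j|c_j|e^{-sm_j^2}$ does not vanish at $s=0$ and would destroy integrability. The constant-field heat-kernel diagonal, with the correct Landau density of states $|B|/(2\pi)$, is the genuine computational heart of the proof.
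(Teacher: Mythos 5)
Your proof is correct and follows essentially the same route as the paper: reduce the resolvents to heat kernels via $H^{-1}=\int_0^\infty e^{-sH}\,ds$, use the explicit constant-field Pauli heat-kernel diagonal (which the paper quotes from the Mehler-type formula~\eqref{eq:formula_heat_kernel_constant_B} while you rederive it from the Landau-level decomposition --- the results agree), exploit the Pauli--Villars cancellation $\sum_j c_je^{-sm_j^2}=O(s^2)$ to control $s\to0$, and use $\int G_\rho^2=1$ to extract the trace per unit volume. The only cosmetic difference is that the paper treats the two localized heat traces separately, using the positivity from Lemma~\ref{lem:positivity} and the continuity of the kernel on the diagonal (via \cite[Thm.~2.12]{Simon01}) to identify each trace with the integral of its diagonal, whereas you keep the difference together and justify Fubini with the crude $s^{-3/2}$ trace-norm bound; both are valid.
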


Note that the value does not depend on the localization parameter $\rho>0$. From~\eqref{eq:computation_constant_field}, we obtain after integrating over $\omega$ that 
\begin{equation}
\frac1\pi\int_\R f_{\omega}(B\times x/2)\,\omega^2d\omega=f_{\rm vac}^{\rm PV}(|B|),
\end{equation}
the Pauli-Villars-regulated Euler-Heisenberg energy defined in~\eqref{def:boF-PV-EH}. The integral over $y$ then gives $\int f_{\rm vac}^{\rm PV}(|B(y)|)\,dy$ as we wanted.
The proof of Proposition~\ref{prop:value} is merely a computation, explained in Section~\ref{sec:comp-F*} below.

\subsubsection*{Step 7: Bound on the error}

At this step of the proof, we have shown in~\eqref{eq:integral_formula_resolvents_y},~\eqref{eq:F-epsy} and~\eqref{eq:computation_constant_field}  that
\begin{align*}
&\varepsilon^3\boF_{\rm vac}^{\rm PV}(A_\varepsilon)\\
&\qquad=\frac1\pi\int_{\R^3}dy\int_\R\omega^2\,d\omega\, f_\omega(eA_{\varepsilon,y})\\
&\qquad=\frac1\pi\int_{\R^3}dy\int_\R\omega^2\,d\omega\, f_{\omega}\left(\frac{e}{2} B(y) \times \cdot + e \varepsilon R_{\varepsilon, y}\right)\\
&\qquad=\int_{\R^3} f_{\rm vac}^{\rm PV}\big(|B(y)|\big)\,dy\\
&\qquad\quad+\frac1\pi\int_{\R^3}dy\int_\R\omega^2\,d\omega\, \left\{f_{\omega}\left(\frac{e}{2} B(y) \times \cdot + e \varepsilon R_{\varepsilon, y}\right)-f_{\omega}\left(\frac{e}{2} B(y) \times \cdot\right)\right\}
\end{align*}
and there only remains to evaluate the error term. The difficulty is, of course, to have an estimate on the integrand that can be integrated over $y$ and $\omega$.
This is the content of the following result.

\begin{prop}[Bound on the error term]\label{prop:bound_error}
Assume that $B$ satisfies the conditions of Theorem~\ref{thm:lim-eps}. Then we have for every $0<\epsilon\leq1$
\begin{equation}
 \int_{\R^3}dy \left|f_{\omega}\left(\frac{e}{2} B(y) \times \cdot + e \varepsilon R_{\varepsilon, y}\right)-f_{\omega}\left(\frac{e}{2} B(y) \times \cdot\right)\right|
 \leq C \frac{\varepsilon}{(m^2+\omega^2)^2},
\end{equation}
with a constant $C$ which only depends on $B$, on $\rho$ and on the $c_j$'s and $m_j$'s.
\end{prop}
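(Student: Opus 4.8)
The plan is to interpolate between the two potentials by a Duhamel formula, so that the whole difference acquires an explicit factor $\varepsilon$. Write $A_1(x):=\tfrac{e}{2}B(y)\times x$ and $A_2:=A_1+e\varepsilon R_{\varepsilon,y}$, set $A_t:=A_1+t\,e\varepsilon R_{\varepsilon,y}$ for $t\in[0,1]$, and abbreviate $\mu_j:=m_j^2+\omega^2$. Since the free resolvents $(\mathscr{P}_0+\mu_j)^{-1}$ cancel in the difference of the two functionals~\eqref{eq:def_f_omega_A}, differentiating in $t$ gives
\[
f_\omega(A_2)-f_\omega(A_1)=-\int_0^1\tr\bigg\{G_\rho\sum_{j=0}^2 c_j\,\frac{1}{\mathscr{P}_{A_t}+\mu_j}\,\dot S_t\,\frac{1}{\mathscr{P}_{A_t}+\mu_j}\,G_\rho\bigg\}\,dt,
\]
where, since $\mathscr{P}_{A_t}=(p+A_t)^2-\bssigma\cdot\curl A_t$ with $p=-i\nabla$,
\[
\dot S_t=e\varepsilon\Big(2\,R_{\varepsilon,y}\cdot(p+A_t)-i\,\div R_{\varepsilon,y}-\bssigma\cdot\curl R_{\varepsilon,y}\Big).
\]
The point of this representation is that every term of $\dot S_t$ carries the prefactor $e\varepsilon$ and is governed, after integration in $y$, by $\nabla B$ rather than $B$ through Lemma~\ref{lem:estim_R_eps_y}; moreover $\curl A_t=eB(y+t\varepsilon\,\cdot)$ stays bounded by $e\|B\|_{L^\ii}$, so all the operators remain uniformly controlled.

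The two resolvents surrounding $\dot S_t$ would only provide a decay of order $(m^2+\omega^2)^{-1}$, which is not enough for the target $(m^2+\omega^2)^{-2}$ (needed so that the weight $\omega^2\,d\omega$ stays integrable). To recover the extra decay I will use the Pauli--Villars cancellations exactly as for the first-order term in Step~2. The crucial simplification is that here the operator $\mathscr{P}_{A_t}$ is fixed, so the resolvents $(\mathscr{P}_{A_t}+\mu_j)^{-1}$, $j=0,1,2$, all commute; the mass--resolvent identity together with the two conditions~\eqref{eq:cond-PV} then rewrites $\sum_j c_j(\mathscr{P}_{A_t}+\mu_j)^{-1}\dot S_t(\mathscr{P}_{A_t}+\mu_j)^{-1}$ as a finite sum of terms, each carrying a factor $c_j(m_j^2-m^2)^2$ and two additional copies of the lightest resolvent $(\mathscr{P}_{A_t}+\mu_0)^{-1}$, in the spirit of~\eqref{eq:annulations}. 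Each resulting trace thus contains four resolvents and a single multiplication-type operator coming from $\dot S_t$.

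The core estimate is then a pointwise bound on the diagonal of the operator surrounding this multiplication operator. Writing a generic term as $\tr\{M_g\,N\}$, where $M_g$ is multiplication by $g\in\{R_{\varepsilon,y},\,\div R_{\varepsilon,y},\,\curl R_{\varepsilon,y}\}$ (possibly composed with the bounded factor $(p+A_t)(\mathscr{P}_{A_t}+\mu_0)^{-1/2}$) and $N$ is a product of the four resolvents with the two copies of $G_\rho$ sitting in the middle, I will use $\tr\{M_g N\}=\int_{\R^3}g(x)\,N(x,x)\,dx$ and dominate $N(x,x)$ by a rapidly decaying weight. This is where the growth of $R_{\varepsilon,y}$ (of order $|x|^2$) and of $A_t$ (linear) is tamed: by the diamagnetic domination~\eqref{eq:pointwise_bd_kernel_resolvents} the matrix kernels of the magnetic resolvents are controlled by those of $(-\Delta+\mu_0)^{-1}$, and the $G_\rho(u)^2$ occurring in $N(x,x)=\int[\,\cdots](x,u)\,G_\rho(u)^2\,[\,\cdots](u,x)\,du$ localizes $u$, hence $x$, near the origin. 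One obtains $|N(x,x)|\le C\,(m^2+\omega^2)^{-2}\,w(x)$ for a fixed Gaussian-type weight $w$ with $\int_{\R^3}(1+|x|+|x|^2)\,w(x)\,dx<\ii$, the factor $(m^2+\omega^2)^{-2}$ coming precisely from the four resolvents.

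It remains to integrate in $y$. From the previous step,
\[
\int_{\R^3}\big|f_\omega(A_2)-f_\omega(A_1)\big|\,dy\le C\,e\varepsilon\,(m^2+\omega^2)^{-2}\int_{\R^3}w(x)\Big(\int_{\R^3}|g_{\varepsilon,y}(x)|\,dy\Big)\,dx,
\]
and Lemma~\ref{lem:estim_R_eps_y} with $p=1$ bounds the inner $y$-integral by $C|x|^2\|\nabla B\|_{L^1}$, $C|x|\|\nabla B\|_{L^1}$ and $C|x|\|\nabla B\|_{L^1}$ for the three choices of $g$; each is integrable against $w$. This yields the claim, with a constant depending only on $B$, $\rho$ and the $c_j$, $m_j$. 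The main obstacle is that three requirements pull in different directions and must be met simultaneously: the factor $\varepsilon$ forces the use of the $\nabla B$-branch of Lemma~\ref{lem:estim_R_eps_y} (a pointwise bound would only produce the useless $\|B\|/\varepsilon$), the decay $(m^2+\omega^2)^{-2}$ forces the Pauli--Villars reorganization into four resolvents, and the unbounded growth of $R_{\varepsilon,y}$ and $A_t$ forces the diagonal/kernel domination against $G_\rho$; the delicate point is to distribute the four resolvents and the two copies of $G_\rho$ so that all three are achieved at once.
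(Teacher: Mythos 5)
Your overall architecture (extract one factor of $\varepsilon$ from the perturbation, recover $(m^2+\omega^2)^{-2}$ by the Pauli--Villars reorganization, tame the growth of $R_{\varepsilon,y}$ and $A_t$ by kernel domination against $G_\rho$, and integrate in $y$ through the $\nabla B$--branch of Lemma~\ref{lem:estim_R_eps_y}) matches the strategy of the paper. But there is one genuine gap, and it sits exactly at the point the paper flags as the reason for its choice of method.

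Your Duhamel integrand contains the term $R_{\varepsilon,y}\cdot(p+A_t)$ sandwiched between resolvents of $\mathscr{P}_{A_t}$, where $A_t$ has the \emph{non-constant} field $e\big((1-t)B(y)+tB(y+\varepsilon\,\cdot)\big)$. Your core step is a pointwise bound on the diagonal of the composite kernel $N(x,x)$, which requires a pointwise bound on the kernel of $(p+A_t)(\mathscr{P}_{A_t}+\mu_j)^{-1}$. The diamagnetic inequality~\eqref{eq:pointwise_bd_kernel_resolvents} controls only the derivative-free resolvent; it says nothing about the magnetic momentum applied to it, and there is no explicit (Mehler-type) formula for a non-constant field. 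Your fallback --- treating $(p+A_t)(\mathscr{P}_{A_t}+\mu_0)^{-1/2}$ as a bounded operator --- is indeed correct as an operator bound, but it is incompatible with the rest of your argument: once a merely norm-bounded factor is inserted into a chain of kernel-dominated operators, the kernel of the product is no longer pointwise dominated, so the representation $\tr\{M_gN\}=\int g(x)N(x,x)\,dx$ with $|N(x,x)|\le C(m^2+\omega^2)^{-2}w(x)$ is not available. This is precisely the obstruction the paper names (``this method cannot be easily coupled to operator bounds''), and it is why the paper does \emph{not} interpolate: it expands fully in powers of $S_{\varepsilon,y}$ around the constant-field operator $K_j(\omega,y)$, so that every resolvent carrying a derivative is a constant-field resolvent with the explicit pointwise bounds~\eqref{estim_kernel_K}--\eqref{estim_kernel_K_D} of Proposition~\ref{prop:estimates_constant_B}, while the single perturbed resolvent surviving in the sixth-order remainder appears derivative-free, symmetrically sandwiched, and with a sign, so that positivity plus diamagnetic domination suffices for it. A secondary manifestation of the same problem: even for the plain resolvent of $\mathscr{P}_{A_t}$, the pointwise Trotter/diamagnetic argument yields $e^{-s(\mathscr{P}_{A_t}+\mu)}(x,y)\le e^{s\|B\|_{L^\infty}}e^{s(\Delta-\mu)}(x,y)$, hence decay $e^{-\sqrt{\mu-\|B\|_{L^\infty}}|x-y|}$, which degenerates unless $\mu>\|B\|_{L^\infty}$; the paper's cure (extracting the prefactor $1+\mu/m^2$ at the operator level before taking the trace) again relies on the remainder having a sign, which your integrand $\sum_jc_j(\mathscr{P}_{A_t}+\mu_j)^{-1}\dot S_t(\mathscr{P}_{A_t}+\mu_j)^{-1}$ does not have.

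To repair your proof along the paper's lines, you would expand each $(\mathscr{P}_{A_t}+\mu_j)^{-1}$ around $K_j(\omega,y)^{-1}$ --- at which point the Duhamel step buys nothing and you are back to the finite resolvent expansion $\sum_{n=1}^{5}T^{(n)}(\omega,y)+T^{(6)}(\omega,y)$ of Section~\ref{sec:bound_error}, where the factors $\varepsilon^n$ come from $\|h_{\varepsilon,y}\|_{L^p_y}\le C\varepsilon(1+|z|^3)$ rather than from an interpolation parameter.
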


Integrating over $\omega$ we find
$$ \int_\R\omega^2\,d\omega\,\int_{\R^3}dy \left|f_{\omega}\left(\frac{e}{2} B(y) \times \cdot + e \varepsilon R_{\varepsilon, y}\right)-f_{\omega}\left(\frac{e}{2} B(y) \times \cdot\right)\right|
 \leq C \varepsilon,$$
which ends the outline of the proof of Theorem~\ref{thm:lim-eps}. 

The proof of Proposition~\ref{prop:bound_error} is the most tedious part of the proof of our main result, and it is provided later in Section~\ref{sec:bound_error}. The next section is devoted to the proof of Proposition~\ref{prop:value}, whereas Appendix~\ref{sec:constant} gathers some important estimates for the resolvent of the Pauli operator with constant magnetic field.

\section{Computation for a constant field: proof of Proposition~\ref{prop:value}}
\label{sec:comp-F*}

Applying a rotation and using  the invariance of $-\Delta$ as well as the Gaussian $G_\rho$, we can always assume that $B$ is parallel to $e_3$, 
$$B=(0,0,b),$$
which we will do for the rest of the proof. 

Since we have shown in Proposition~\ref{prop:growth} that the two terms are separately trace-class, we can compute their trace separately.
The computation is easier if we go back to heat kernels using $H^{-1}=\int_0^\ii e^{-sH}\,ds$:
\begin{equation}
\sum_{j=0}^2c_j\frac{1}{H+m_j^2}= \int_0^\ii \bigg(\sum_{j=0}^2c_je^{-sm_j^2}\bigg)e^{-sH}\,ds.
\label{eq:integral_resolvent_heat}
\end{equation}
We recall from Lemma~\ref{lem:positivity} that $\sum_{j=0}^2c_je^{-sm_j^2}\geq0$. We conclude that
\begin{multline*}
\tr\left\{G_\rho \sum_{j=0}^2\frac{c_j}{H+\omega^2+m_j^2} G_\rho\right\}\\
=\int_0^\ii e^{-s\omega^2}\left(\sum_{j=0}^2c_je^{-sm_j^2}\right)\tr\big\{G_\rho e^{-sH}G_\rho\big\}\,ds\geq0
\end{multline*}
which will be used for $H=-\Delta$ and $H=\mathscr{P}_{B\times x/2}$.

For the Laplacian, we have
$$\tr G_\rho e^{s\Delta}G_\rho=2\left(\int G_\rho^2\right)\left(\frac{1}{(2\pi)^3}\int_{\R^3}e^{-sp^2}\,dp\right)=\frac{1}{4\pi^{3/2}s^{3/2}},$$
so 
$$\tr \left\{G_\rho \sum_{j=0}^2\frac{c_j}{-\Delta+\omega^2+m_j^2} G_\rho\right\}=\frac{1}{4\pi^{3/2}}\int_0^\ii e^{-s\omega^2}\left(\sum_{j=0}^2c_je^{-sm_j^2}\right)\frac{ds}{s^{3/2}}.$$

On the other hand, we recall in~\eqref{eq:formula_heat_kernel_constant_B} (Appendix~\ref{sec:constant}) that the heat kernel of the Pauli operator with constant magnetic field $B=b\,e_3$ is
\begin{multline}
\big[ e^{- s \mathscr{P}_{B\times x/2}} \big](x, y) = \frac{b}{8 \pi^{3/2} s^\frac{1}{2} \sinh(b s)} \times\\
\times e^{- \frac{b}{4} \coth(b s) |x^\perp - y^\perp|^2} \, e^{- \frac{1}{4 s} (x_3 - y_3)^2} \, e^{- \frac{i b}{2} x^\perp \times y^\perp} \, \begin{pmatrix} e^{b s} & 0 \\ 0 & e^{- b s} \end{pmatrix}
\label{eq:heat-Pauli1}
\end{multline}
where $x^\perp=(x_1,x_2)$ and $y^\perp=(y_1,y_2)$, which gives us
\begin{multline}
\label{eq:diago-Pauli2}
\tr_{\C^2}\big[ e^{- s \mathscr{P}_{B\times x/2}} \big](x, y) =  \frac{sb\coth(sb)}{4\pi^{3/2} s^\frac{3}{2} } \times\\
 \times e^{- \frac{b}{4} \coth(b s) |x^\perp - y^\perp|^2} \, e^{- \frac{1}{4 s} (x_3 - y_3)^2} \, e^{- \frac{i b}{2} x^\perp \times y^\perp}
\end{multline}
and therefore, the function being continuous at $x=y$, we deduce from~\cite[Thm.~2.12]{Simon01} that 
$$\tr G_\rho e^{- s \mathscr{P}_{B\times x/2}}G_\rho=\frac{sb\coth(sb)}{4\pi^{3/2} s^\frac{3}{2} },$$
which ends the proof of Proposition~\ref{prop:value}.\qed

\section{Bound on the error term: proof of Proposition~\ref{prop:bound_error}}
\label{sec:bound_error}

We follow here the same strategy as in the proof of Proposition~\ref{prop:growth}, with $\mathscr{P}_0=-\Delta$ replaced everywhere by the Pauli operator $\mathscr{P}_{e B(y) \times \cdot/2}$ with constant magnetic field $B(y)$. We will use pointwise estimates on the kernel of the resolvent of $\mathscr{P}_{e B(y) \times \cdot/2}$ which are recalled in Appendix~\ref{sec:constant}, that will essentially reduce the problem to the free case, up to a multiplicative constant. This method cannot be easily coupled to operators bounds, which forces us to use a 6th order expansion. In particular we will use that the operators appearing in the 6th order have a sign.

In order to simplify our notation, we introduce the Klein-Gordon operator with constant magnetic field $B(y)$ 
$$K_j(\omega,y):=\mathscr{P}_{e B(y) \times \cdot/2}+m_j^2+\omega^2.$$
Using the resolvent expansion as in~\eqref{eq:resolvent_expansion_free}, we have to estimate the trace of
\begin{align*}
T(\omega,y)&=G_\rho \sum_{j=0}^2c_j\left(\big(K_j(\omega,y)+S_{\eps,y}\big)^{-1}-K_j(\omega,y)^{-1}\right)G_\rho\\
&=G_\rho\sum_{n=1}^5(-1)^n\sum_{j=0}^2c_j\left(\frac{1}{K_j(\omega,y)}S_{\eps,y}\right)^n\frac{1}{K_j(\omega,y)}G_\rho\nonumber\\
&\qquad +G_\rho\sum_{j=0}^2c_j\left(\frac{1}{K_j(\omega,y)}S_{\eps,y}\right)^3\frac{1}{\mathscr{P}_{e B(y) \times \cdot/2+e\eps R_{\eps,y}}+m_j^2+\omega^2}\times\\
&\qquad\qquad\qquad\times\left(S_{\eps,y}\frac{1}{K_j(\omega,y)}\right)^3G_\rho\nonumber\\
&:=\sum_{n=1}^5T^{(n)}(\omega,y)+T^{(6)}(\omega,y),
\end{align*}
where the operator $S_{\eps,y}$ is defined by
\begin{align*}
S_{\eps,y} :=& \mathscr{P}_{e B(y) \times \cdot/2+e\eps R_{\eps,y}}-\mathscr{P}_{e B(y) \times \cdot/2}\\
=&e^2 \eps^2|R_{\eps,y}|^2-e \eps (\sigma\cdot p_y)(\sigma\cdot R_{\eps,y})-e \eps (\sigma\cdot R_{\eps,y})(\sigma\cdot p_y)\\
=&e^2 \eps^2|R_{\eps,y}|^2-e \eps (p_y\cdot R_{\eps,y}+R_{\eps,y}\cdot p_y)-e\eps \bssigma\cdot \curl R_{y,\eps},
\label{def:Syomega}
\end{align*}
with $p_y=p-eB(y)\times\cdot /2$ the magnetic momentum. The main result of this section is the estimate
\begin{equation}
\int_\R\int_{\R^3}\|T^{(n)}(\omega,y)\|_{\gS_1}\,dy\,\omega^2\,d\omega\leq C\eps^n,\qquad 1\leq n\leq 6,
\label{eq:to_be_proved}
\end{equation}
where the constant depends on $B$, on $\rho$ and on the $c_j$'s and $m_j$'s, which clearly ends the proof of Proposition~\ref{prop:bound_error}.

In order to prove~\eqref{eq:to_be_proved}, we will use pointwise kernel estimates, and in particular the fact that when $T$ and $T'$ are two trace-class operators such that 
$$|T(x,x')|\leq T'(x,x')$$
with the kernel of $T'$ being continuous in a neighborhood of the diagonal, then $|\tr(T)|\leq \tr(T')=\int_{\R^3}T'(x,x)\,dx$.

Our proof will rely on the following pointwise estimates on the kernels of the resolvents of the Pauli operator with constant magnetic field, proved in Proposition~\ref{prop:estimates_constant_B} in Appendix~\ref{sec:constant}:
\begin{equation}
\left|K_j(\omega,y)^{-1}(x,x')\right|\leq C(1+e\|B\|_{L^\ii})\,g_\omega(x-x')
\label{estim_kernel_K}
\end{equation}
and
\begin{equation}
\left|p_y\,K_j(\omega,y)^{-1}(x,x')\right|\leq C\frac{m_j}{m}(1+e^2\|B\|^2_{L^\ii})\,h_\omega(x-x'),
\label{estim_kernel_K_D}
\end{equation}
with the functions
\begin{equation}
 g_\omega(x)= \left( \frac{1}{|x|} + \frac{1}{\sqrt{m^2+\omega^2}} \right) e^{- \sqrt{m^2+\omega^2} |x|}
 \label{eq:def_f_omega}
\end{equation}
and
\begin{equation}
h_\omega(x)=\left(\frac{1}{|x |^2}+|x |^2+\frac1{m^2+\omega^2}+m^2+\omega^2 \right) e^{- \sqrt{m^2+\omega^2} |x|}.
\label{eq:def_g_omega}
\end{equation}
Note that
\begin{equation}
\norm{g_\omega}_{L^p(\R^3)}\leq \frac{C_p}{(m^2+\omega^2)^{\frac{3/p-1}{2}}},\qquad \norm{h_\omega}_{L^p(\R^3)}\leq \frac{C_p}{(m^2+\omega^2)^{\frac{3/p-2}{2}}},
\label{eq:estim_f_g_L_p}
\end{equation}
for all $1\leq p<\ii$. There are similar estimates for $|x|^\alpha g_\omega$ and $|x|^\alpha h_\omega$, with a better decay in $\omega$ when $\alpha>0$.

\subsubsection*{Sixth order term}
We start by estimating the kernel of the 6th order term $T^{(6)}(\omega,y)$. We are going to estimate the trace of each of the terms in the sum over $j=0,1,2$, which involves a non-negative operator and is thus always well defined.
We first use the operator bound 
\begin{equation*}
\frac{1}{\mathscr{P}_{e B(y) \times \cdot/2+\eps R_{\eps,y}}+m_j^2+\omega^2}\\ \leq \left(1+\frac{\mu}{m^2}\right)\frac{1}{\mathscr{P}_{e B(y) \times \cdot/2+\eps R_{\eps,y}}+m_j^2+\omega^2+\mu} 
\end{equation*}
as in the proof of Lemma~\ref{lem:KSS_magnetic}, with
$$\mu:=\|eB(y)+e\eps \curl R_{\eps,y}\|_{L^\ii(\R^3)}\leq 2e\|B\|_{L^\ii(\R^3)}$$
where the last estimate follows from Lemma~\ref{lem:estim_R_eps_y}. We obtain
\begin{multline*}
\int_\R \omega^2 d\omega \int_{\R^3} dy\,\|T^{(6)}(\omega,y)\|_{\gS_1}\\
\leq \left(1+\frac{2e\|B\|_{L^\ii(\R^3)}}{m^2}\right)\sum_{j=0}^2 |c_j| \int_\R \omega^2 d\omega \int_{\R^3} dy\;\tr\big\{T^{(6)}_j(\omega,y)\big\}
\end{multline*}
where
\begin{multline*}
T^{(6)}_j(\omega,y) =G_\rho\left(\frac{1}{K_j(\omega,y)}S_{\eps,y}\right)^3\times\\
\times\frac{1}{\mathscr{P}_{e B(y) \times \cdot/2+e\eps R_{\eps,y}}+m_j^2+\omega^2+\mu}\left(S_{\eps,y}\frac{1}{K_j(\omega,y)}\right)^3G_\rho \geq0.
\end{multline*}
We estimate the kernel of $T^{(6)}_j(\omega,y)$ by writing
$$S_{\eps,y}=e^2 \eps^2|R_{\eps,y}|^2-e \eps (2p_y\cdot R_{\eps,y}+i\div R_{\eps,y})-e\eps \bssigma\cdot \curl R_{y,\eps},$$
for the term on the left and using $2R_{\eps,y}\cdot p_y-i\div R_{\eps,y}$ for the term on the right. 
We also use the diamagnetic-type inequality~\eqref{eq:pointwise_bd_kernel_resolvents}
\begin{multline*}
\big|\big(\mathscr{P}_{e B(y) \times \cdot/2+e\eps R_{\eps,y}}+m_j^2+\omega^2+\mu\big)^{-1}(x,x')\big|_\ii\\ 
\leq \big|\big(-\Delta+m_j^2+\omega^2\big)^{-1}(x,x')\big|_\ii=\frac{e^{-\sqrt{m^2+\omega^2}|x-x'|}}{4\pi|x-x'|}\leq \frac{g_\omega(x-x')}{4\pi}
\end{multline*}
for the kernel of the middle operator, as well as $g_\omega\leq Ch_\omega$ to simplify our bounds. We find
\begin{multline*}
|T^{(6)}_j(\omega,y)(x,x')|\leq C\,G_\rho(x) \int_{\R^3}dz_1\cdots\int_{\R^3}dz_6\\
h_\omega(x-z_1)\prod_{k=1}^5 h_{\eps,y}(z_k)h_\omega(z_k-z_{k+1})h_{\eps,y}(z_6)h_\omega(z_6-x')G_\rho(x')
\end{multline*}
with
\begin{align*}
h_{\eps,y}(z)&:=e\eps|R_{\eps,y}(z)|+e\eps|\div R_{\eps,y}(z)|+e\eps|\curl R_{\eps,y}(z)|+e^2\eps^2|R_{\eps,y}(z)|^2\\
&\leq e\eps\big(1+e\|B\|_{L^\ii(\R^3)}|z|\big)|R_{\eps,y}(z)|+e\eps|\div R_{\eps,y}(z)|+e\eps|\curl R_{\eps,y}(z)|.
\end{align*}
Using Lemma~\ref{lem:estim_R_eps_y} together with the assumption that $B,\nabla B\in L^p(\R^3)$ for $1\leq p\leq6$, we obtain
\begin{equation}
\forall 1\leq p\leq 6,\qquad \norm{h_{\eps,y}(z)}_{L^p_y(\R^3)}\leq C\eps(1+|z|^3).
\label{eq:estim_h}
\end{equation}
Next we integrate over $y$, use H\"older's inequality and the bound~\eqref{eq:estim_h} for $p=6$. We arrive at
\begin{multline*}
\int_{\R^3}|T^{(6)}_j(\omega,y)(x,x')|\,dy\leq C\eps^6\,G_\rho(x) \int_{\R^3}dz_1\cdots\int_{\R^3}dz_6\,h_\omega(x-z_1)\times\\
\times\prod_{k=1}^5 (1+|z_k|^3)h_\omega(z_k-z_{k+1})(1+|z_6|^3)h_\omega(z_6-x')G_\rho(x').
\end{multline*}
We can put the polynomial terms together with the functions $h_\omega$ using that 
\begin{equation}
\prod_{k=1}^J (1+|z_k|^3)\leq C_J\prod_{k=1}^{J-1} (1+|z_k-z_{k+1}|^3)(1+|z_J-x'|^3)(1+|x-x'|^3)(1+|x-z_1|^3)
\end{equation}
and
$$(1+|x-x'|^3)G_\rho(x) G_\rho(x')=(\pi\rho)^{-3}(1+|x-x'|^3)e^{-\frac{|x+x'|^2+|x-x'|^2}{2\rho^2}}\leq C e^{-\frac{|x+x'|^2}{2\rho^2}}.$$
The above bound becomes
\begin{equation*}
\int_{\R^3}|T^{(6)}_j(\omega,y)(x,x')|\,dy\leq C\eps^6\,e^{-\frac{|x+x'|^2}{2\rho^2}} \big\{(1+|\cdot|^3)h_\omega\big\}^{\ast 7}(x-x')
\end{equation*}
where $f^{\ast n}$ is the $n$-fold convolution of a function $f$. By Young's inequality and~\eqref{eq:estim_f_g_L_p} we find
\begin{align*}
\int_{\R^3}\tr T^{(6)}_j(\omega,y)\,dy&\leq C\eps^6 \norm{(1+|\cdot|^3)h_\omega}_{L^2(\R^3)}^2\norm{(1+|\cdot|^3)h_\omega}^5_{L^1(\R^3)}\\
&\leq\frac{C\eps^6}{(m^2+\omega^2)^2}.
\end{align*}
Suming over $j$, we have proved that
$$\int_\R\int_{\R^3}\|T^{(6)}(\omega,y)\|_{\gS_1}\,dy\,\omega^2\,d\omega\leq C\eps^6,$$
as we wanted, where the constant depends on $B$, on $\rho$ and on the $c_j$'s and $m_j$'s.

\medskip

The proof for the other terms is very similar, and relies on the Pauli-Villars condition~\eqref{eq:cond-PV}, exactly as in the proof of Proposition~\ref{prop:extract_non_trace_class}. We only sketch it.

\subsubsection*{First order term}
We write
\begin{align*}
-T^{(1)}(\omega,y)&=G_\rho\bigg(\sum_{j=1}^2c_j(m_j^2-m^2)^2\frac{1}{K_j(\omega,y)K_0(\omega,y)}S_{\eps,y}\frac{1}{K_j(\omega,y)K_0(\omega,y)}\nonumber\\
&\qquad+\sum_{j=1}^2c_j(m_j^2-m^2)^2\frac{1}{K_0(\omega,y)}S_{\eps,y}\frac{1}{K_j(\omega,y)K_0(\omega,y)^2}\nonumber\\
&\qquad+\sum_{j=1}^2c_j(m_j^2-m^2)^2\frac{1}{K_j(\omega,y)K_0(\omega,y)^2}S_{\eps,y}\frac{1}{K_0(\omega,y)}\bigg) G_\rho.\label{eq:annulations_bis}
\end{align*}
Since $S_{\eps,y}$ contains only one operator $p_y$, estimating the kernel of the above operators will involve one function $h_\omega$ and three functions $g_\omega$. The argument is exactly the same as before, using this time $\norm{h_{\eps,y}(z)}_{L^1_y(\R^3)}$, with the final bound
$$\int_{\R^3}|T^{(1)}(\omega,y)(x,x')|\,dy\leq C\,\eps\,e^{-\frac{|x+x'|^2}{2\rho^2}} \{(1+|\cdot|^3)h_\omega\}\ast\{(1+|\cdot|^3)g_\omega\}^{\ast 3}(x-x')$$
and thus
\begin{align*}
&\int_{\R^3}\big|\tr T^{(1)}(\omega,y)\big|\,dy\\
&\qquad\leq C\eps\norm{(1+|\cdot|^3)h_\omega}_{L^1(\R^3)}\norm{(1+|\cdot|^3)g_\omega}_{L^1(\R^3)}\norm{(1+|\cdot|^3)g_\omega}^2_{L^2(\R^3)}\\
&\qquad\leq C\frac{\eps}{(m^2+\omega^2)^2}
\end{align*}
which proves~\eqref{eq:to_be_proved} for $n=1$.

\subsubsection*{Second order term}
For the second order term, we use as in~\eqref{eq:dev-gT2}
\begin{equation}
\label{eq:dev-gT2y}
\begin{split}
 T^{(2)}(\omega,y) =& \sum_{j= 0}^2 c_j \, (m_j^2 - m_0^2)^2 \times\\
 & \times G_\rho\Big( \frac{1}{K_j(\omega,y)K_0(\omega,y)^2} \, S_{\eps,y} \, \frac{1}{K_j(\omega,y)} \, S_{\eps,y} \, \frac{1}{K_j(\omega,y)}\\
& + \frac{1}{K_0(\omega,y)^2} \, S_{\eps,y} \, \frac{1}{K_j(\omega,y)K_0(\omega,y)} \, S_{\eps,y} \, \frac{1}{K_j(\omega,y)}\\
& + \frac{1}{K_0(\omega,y)^2} \, S_{\eps,y} \, \frac{1}{K_0(\omega,y)} \, S_{\eps,y} \, \frac{1}{K_j(\omega,y)K_0(\omega,y)}\\
& + \frac{1}{K_0(\omega,y)} \, S_{\eps,y} \, \frac{1}{K_j(\omega,y)K_0(\omega,y)^2} \, S_{\eps,y} \, \frac{1}{K_j(\omega,y)}\\
&+ \frac{1}{K_0(\omega,y)} \, S_{\eps,y} \, \frac{1}{K_0(\omega,y)^2} \, S_{\eps,y} \, \frac{1}{K_j(\omega,y)K_0(\omega,y)}\\
& + \frac{1}{K_0(\omega,y)} \, S_{\eps,y} \, \frac{1}{K_0(\omega,y)} \, S_{\eps,y} \, \frac{1}{K_j(\omega,y)K_0(\omega,y)^2} \Big)G_\rho
\end{split}
\end{equation}
and apply the same argument. There are two functions $h_\omega$, three functions $g_\omega$, and the $L^2_y$ norm of $h_{\eps,y}$. We get
\begin{align*}
&\int_{\R^3}\big|\tr T^{(2)}(\omega,y)\big|\,dy\\
&\qquad\leq C\eps^2\norm{(1+|\cdot|^3)h_\omega}^2_{L^1(\R^3)}\norm{(1+|\cdot|^3)g_\omega}_{L^1(\R^3)}\norm{(1+|\cdot|^3)g_\omega}^2_{L^2(\R^3)}\\
&\qquad\leq C\frac{\eps^2}{(m^2+\omega^2)^{\frac52}}.
\end{align*}

\subsubsection*{Third, fourth and fifth order terms}
For the other terms we only use the first Pauli-Villars condition, in order to simplify the argument. For instance, we write
\begin{align*}
-T^{(3)}(\omega,y)=&G_\rho\sum_{j=0}^2c_j\left(\frac{1}{K_j(\omega,y)}S_{\eps,y}\right)^3\frac{1}{K_j(\omega,y)}G_\rho\\
=&G_\rho\sum_{j=0}^2c_j(m^2-m_j^2)\bigg\{\frac{1}{K_0(\omega,y)K_j(\omega,y)}\left(S_{\eps,y}\frac{1}{K_j(\omega,y)}\right)^3\\
&+\frac{1}{K_0(\omega,y)}S_{\eps,y}\frac{1}{K_0(\omega,y)K_j(\omega,y)}\left(S_{\eps,y}\frac{1}{K_j(\omega,y)}\right)^2\\
&+\left(\frac{1}{K_0(\omega,y)}S_{\eps,y}\right)^2\frac{1}{K_0(\omega,y)K_j(\omega,y)}S_{\eps,y}\frac{1}{K_j(\omega,y)}\\
&+\left(\frac{1}{K_0(\omega,y)}S_{\eps,y}\right)^3\frac{1}{K_0(\omega,y)K_j(\omega,y)}\bigg\}G_\rho
\end{align*}
and estimate its kernel as before. There are three functions $h_\omega$, two functions $g_\omega$, and the $L^3_y$ norm of $h_{\eps,y}$. We get
\begin{multline*}
\int_{\R^3}\big|\tr T^{(3)}(\omega,y)\big|\,dy\\
\leq C\eps^3\norm{(1+|\cdot|^3)h_\omega}^3_{L^1(\R^3)}\norm{(1+|\cdot|^3)g_\omega}^2_{L^2(\R^3)}\leq C\frac{\eps^3}{(m^2+\omega^2)^2}.
\end{multline*}
The argument for the fourth and fifth order term is exactly the same. This ends the proof of Proposition~\ref{prop:bound_error}.\qed

\appendix
\section{The Pauli operator with constant magnetic field}
\label{sec:constant}

In this appendix, we gather some useful properties of the Pauli operator with constant magnetic field $B\in\R^3$
\begin{equation}
\mathscr{P}_{B\times\cdot/2}=\big( \bssigma \cdot (- i \nabla -  B\times x/2) \big)^2.
\label{eq:def_Pauli_constant_B}
\end{equation}
The Pauli operator naturally splits into a two-dimensional Pauli operator in the plane orthogonal to $B$ and a Laplacian in the direction of $B$. Here as in the sequel, we denote by $x^\perp$ the projection of any vector $x\in\R^3$ on this plane. 

In the text we need the following pointwise estimates on the kernel of the resolvent of $\mathscr{P}_{B\times\cdot/2}$ as well as on $(- i \partial_j -  (B\times x)_j/2)$ times the resolvent. 
\begin{prop}[Pointwise estimates for a constant magnetic field]\label{prop:estimates_constant_B}
Let $B\in\R^3$ be a constant magnetic field. Then we have the pointwise estimates
\begin{equation}
\label{eq:estim_resolvent_constant_B}
\Big|\big(\mathscr{P}_{B\times\cdot/2}+\mu^2\big)^{-1}(x,y)\Big|_\ii\leq \frac{1}{4\pi}\Big( \frac{1}{|x - y|} + \frac{2|B|}{ \mu} \Big) e^{- \mu |x - y|}
\end{equation}
and
\begin{multline}
\label{eq:estim_resolvent_constant_B_derivative}
\Big|(-i\partial_j-(B\times x)_j/2)\big(\mathscr{P}_{B\times\cdot/2}+\mu^2\big)^{-1}(x,y)\Big|_\ii\\
\leq \Big( \frac{|B|}{2 \pi} + \frac{2}{\pi |x - y|^2} + \frac{2 \mu}{\pi |x - y|} + \frac{5 |B|^2}{4 \pi \mu} |x^\perp - y^\perp| \Big) e^{- \mu |x - y|}
\end{multline}
for every $\mu>0$ and $j=1,2,3$, where $|M|_\ii=\sup\|x\|^{-1}\|Mx\|$ is the sup norm of $2\times2$ matrices.
\end{prop}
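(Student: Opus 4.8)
The plan is to start from the explicit heat-kernel formula \eqref{eq:heat-Pauli1} for $\mathscr{P}_{B\times\cdot/2}$ and pass to the resolvent through the representation $(\mathscr{P}_{B\times\cdot/2}+\mu^2)^{-1}=\int_0^\infty e^{-s\mu^2}e^{-s\mathscr{P}_{B\times\cdot/2}}\,ds$. By rotational invariance of the Pauli operator I may assume $B=b\,e_3$ with $b=|B|\ge 0$, so the kernel factorizes into a perpendicular Gaussian, a one-dimensional heat kernel in $x_3$, a phase, and the diagonal matrix $\mathrm{diag}(e^{bs},e^{-bs})$ whose sup norm is $e^{bs}$. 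The only genuinely new feature compared with the free case is therefore the scalar combination $\frac{b\,e^{bs}}{\sinh(bs)}$.

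First I would establish the pointwise heat-kernel bound. Writing $r=|x-y|$ and $w=|x^\perp-y^\perp|$, the inequality $\coth(bs)\ge 1/(bs)$ gives $e^{-\frac b4\coth(bs)w^2}\le e^{-w^2/(4s)}$, which combines with the $x_3$-Gaussian into the free Gaussian $e^{-r^2/(4s)}$. The crucial elementary fact is $\frac{b\,e^{bs}}{\sinh(bs)}=\frac{2b}{1-e^{-2bs}}\le \frac1s+2b$, a consequence of $0\le \frac1{1-e^{-t}}-\frac1t<1$ for $t>0$. This yields
\[
\big|[e^{-s\mathscr{P}_{B\times\cdot/2}}](x,y)\big|_\ii\le\Big(\tfrac1{8\pi^{3/2}s^{3/2}}+\tfrac{|B|}{4\pi^{3/2}s^{1/2}}\Big)e^{-r^2/(4s)}.
\]
Inserting this into the resolvent integral and using the half-integer Gaussian identities $\int_0^\infty s^{-3/2}e^{-\mu^2 s-r^2/4s}\,ds=\frac{2\sqrt\pi}{r}e^{-\mu r}$ and $\int_0^\infty s^{-1/2}e^{-\mu^2 s-r^2/4s}\,ds=\frac{\sqrt\pi}{\mu}e^{-\mu r}$ (special cases of $\int_0^\infty s^{p-1}e^{-as-b/s}\,ds=2(b/a)^{p/2}K_p(2\sqrt{ab})$) produces $\frac1{4\pi}\big(\frac1r+\frac{|B|}\mu\big)e^{-\mu r}$, which is stronger than \eqref{eq:estim_resolvent_constant_B}.

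For the derivative bound \eqref{eq:estim_resolvent_constant_B_derivative} I would apply $\pi_j=-i\partial_{x_j}-(B\times x)_j/2$ directly to the explicit kernel. Since the matrix factor is constant in $x$, $\pi_j$ acts only on the scalar exponent and brings down a prefactor, bounded in two cases: for $j=3$ it equals $\frac{x_3-y_3}{2s}$ in modulus, hence $\le \frac{r}{2s}$; for $j=1,2$ a short computation using $(B\times x)_1=-bx_2$, $(B\times x)_2=bx_1$ shows the prefactor is $\frac{ib}{2}\coth(bs)(x_j-y_j)\pm\frac b2(x_{j'}-y_{j'})$, of modulus $\le \frac b2\coth(bs)\,w\le\big(\frac1{2s}+\frac b2\big)w$. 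Multiplying by the heat-kernel bound and integrating against $e^{-s\mu^2}$ with the same identities, now also needing $\int_0^\infty s^{-5/2}e^{-\mu^2 s-r^2/4s}\,ds=4\sqrt\pi\big(\frac1{r^3}+\frac\mu{r^2}\big)e^{-\mu r}$, gives, after bounding $w\le r$ in the terms without a spare $b^2$, the four contributions $\frac{|B|}{2\pi}$, $\frac{2}{\pi r^2}$, $\frac{2\mu}{\pi r}$ and $\frac{5|B|^2}{4\pi\mu}w$, all carrying the decay $e^{-\mu r}$.

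The main obstacle is the spin term $-\bssigma\cdot B$, which is precisely what blocks a naive diamagnetic comparison: it generates the growing factor $e^{bs}$ in the matrix part of the heat kernel, and the whole estimate hinges on absorbing it into $\frac{b}{\sinh(bs)}$ through $\frac{be^{bs}}{\sinh(bs)}\le\frac1s+2b$, so that the free $s^{-3/2}$ singularity is corrected only by an $s^{-1/2}$ term proportional to $|B|$. Keeping the Gaussian $e^{-r^2/(4s)}$ intact while carrying the extra algebraic factors $\frac{r}{2s}$ and $\coth(bs)\,w$ is what guarantees convergence of the $s$-integrals at both endpoints and produces the sharp decay $e^{-\mu|x-y|}$; tracking the resulting powers of $r$, $w$, $\mu$ and $|B|$ to match the stated constants is routine but must be done with care.
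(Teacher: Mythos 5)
Your proposal is correct and follows essentially the same route as the paper: reduce to $B=|B|e_3$, use the explicit heat kernel and the representation $(\mathscr{P}_{B\times\cdot/2}+\mu^2)^{-1}=\int_0^\infty e^{-s\mu^2}e^{-s\mathscr{P}_{B\times\cdot/2}}\,ds$, bound the hyperbolic factors by a free $s^{-3/2}$ piece plus an $s^{-1/2}$ piece proportional to $|B|$, and conclude with the explicit Laplace-resolvent identities. The only difference is cosmetic: you control $\frac{|B|e^{|B|s}}{\sinh(|B|s)}\leq \frac1s+2|B|$ and $\coth(t)\leq 1+\frac1t$ uniformly in $s$, whereas the paper splits the integral at $|B|s=1$ and uses separate elementary inequalities on each piece; both yield the stated constants (yours slightly sharper for \eqref{eq:estim_resolvent_constant_B}).
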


Similar estimates were derived in~\cite[Lemma A.10]{ErdoSol3}. We have already shown a pointwise bound in Lemma~\ref{lem:KSS_magnetic} for the resolvent~\eqref{eq:estim_resolvent_constant_B}, but the estimate~\eqref{eq:estim_resolvent_constant_B_derivative} requires a bit more work.

\begin{proof}
For the proof of the proposition we can assume, after applying a suitable rotation, that $B=|B|e_3$ and then rewrite
$$\mathscr{P}_{B\times\cdot/2}=-(\partial_1+i|B|x_2/2)^2-(\partial_2-i|B|x_1/2)^2-\partial_3^2-|B|\sigma_3.$$
Next we express the resolvent using the heat kernel as follows
$$\big(\mathscr{P}_{B\times\cdot/2}+\mu^2\big)^{-1}(x,y)=\int_0^\ii e^{-s\mathscr{P}_{B\times\cdot/2}}(x,y)e^{-s\mu^2}\,ds.$$
Note that the operators $|B|\sigma_3$ and $\partial_3^2$ commute with each other, as well as with the first two terms. Hence
\begin{multline*}
 e^{-s\mathscr{P}_{B\times\cdot/2}}(x,y)\\
 =e^{s(\partial_1+i|B|x_2/2)^2+s(\partial_2-i|B|x_1/2)^2}(x^\perp,y^\perp)\frac{e^{-\frac{(x_3-y_3)^2}{4s}}}{2\sqrt{\pi s}}\begin{pmatrix}
e^{|B|s}&0\\
0&e^{-|B|s}
\end{pmatrix}
\end{multline*}
with $x^\perp=(x_1,x_2)$. The kernel of the two-dimensional Pauli operator is
\begin{multline*}
e^{s(\partial_1+i|B|x_2/2)^2+s(\partial_2-i|B|x_1/2)^2}(x^\perp,y^\perp)\\
=\frac{|B|}{4 \pi \sinh(|B| s)} \, e^{- \frac{|B|}{4} \coth(|B| s) |x^\perp - y^\perp|^2} \, e^{- \frac{i |B|}{2} x^\perp \times y^\perp}
\end{multline*}
(see e.g.~\cite[Chapter 15]{Simon02}). As a consequence, we find
\begin{multline}
  e^{-s\mathscr{P}_{B\times\cdot/2}}(x,y)
 =\frac{|B|}{8 \pi^{3/2} \sqrt{s} \sinh(|B| s)}e^{- \frac{|B|}{4} \coth(|B| s) |x^\perp - y^\perp|^2} \times \\ \times  e^{- \frac{i |B|}{2} x^\perp \times y^\perp}e^{-\frac{(x_3-y_3)^2}{4s}}\begin{pmatrix}
e^{|B|s}&0\\
0&e^{-|B|s}
\end{pmatrix}
\label{eq:formula_heat_kernel_constant_B}
\end{multline}
and
\begin{multline}
\big(\mathscr{P}_{B\times\cdot/2}+\mu^2\big)^{-1}(x,y)=\frac{e^{- \frac{i |B|}{2} x^\perp \times y^\perp}}{8 \pi^{3/2}}\int_0^\ii \frac{|B|}{\sqrt{s} \sinh(|B| s)} \begin{pmatrix}
e^{|B|s}&0\\
0&e^{-|B|s}
\end{pmatrix}\times\\
\times e^{- \frac{|B|}{4} \coth(|B| s) |x^\perp - y^\perp|^2} e^{-\frac{(x_3-y_3)^2}{4s}}e^{-s\mu^2}\,ds. 
\label{eq:ker-Res}
\end{multline}

In order to bound this quantity, we split the domain of integration into two pieces. When $|B| s \leq 1$, we use the inequalities $|B| s \leq \sinh(|B| s)$, $|B| s \coth(|B|s) \geq 1$ and $\exp(|B|s)\leq 1+(e-1)s|B|\leq 1+4s|B|$. 
When $|B| s \geq 1$, we also use the inequality $|B| s \coth(|B|s) \geq 1$, as well as the bound $\exp(|B| s) \leq 4 \sinh(|B| s)$. This gives
\begin{multline*}
\left|\big(\mathscr{P}_{B\times\cdot/2}+\mu^2\big)^{-1}(x,y)\right|_\ii\\ 
\leq \int_0^{\frac{1}{|B|}} \frac{ds}{(4 \pi s)^\frac{3}{2}} e^{- s \mu^2} e^{- \frac{|x - y|^2}{4 s}}+\frac{|B|}{\pi} \int_0^\infty \frac{ds}{\sqrt{4 \pi s}} e^{- s \mu^2} e^{- \frac{|x - y|^2}{4 t}}. 
\end{multline*}
It then remains to recall the following formulae for the kernels of the resolvent of the one-dimensional and three-dimensional Laplace operators
\begin{equation}
\label{eq:Res-Laplace-1d}
\int_0^\infty \frac{ds}{\sqrt{4 \pi s}} e^{- \mu^2 s} e^{- \frac{r^2}{4 s}} = \frac{e^{- \mu r}}{2 \mu},
\end{equation}
respectively,
\begin{equation}
\label{eq:Res-Laplace-3d}
\int_0^\infty \frac{ds}{(4 \pi s)^\frac{3}{2}} e^{- \mu^2s} e^{- \frac{r^2}{4 s}} = \frac{e^{- \mu r}}{4 \pi r},
\end{equation}
in order to obtain the estimate~\eqref{eq:estim_resolvent_constant_B}.

Concerning~\eqref{eq:estim_resolvent_constant_B_derivative}, we first derive from~\eqref{eq:ker-Res} that 
\begin{align*}
&\big[ (-i\partial_1+|B|x_2)  \big(\mathscr{P}_{B\times\cdot/2}+\mu^2\big)^{-1}\big](x,y) \\
&\qquad= \frac{|B|^2}{2 (4 \pi)^\frac{3}{2}} e^{- \frac{i |B|}{2} x^\perp \times y^\perp} \int_0^\infty e^{- \frac{|B|}{4} \coth(|B| s) |x^\perp - y^\perp|^2} \, e^{- \frac{(x_3 - y_3)^2}{4 s} } \times\\ 
&\qquad\qquad \times \begin{pmatrix} e^{|B| s} & 0 \\ 0 & e^{- |B|s} \end{pmatrix}\big( x_2 - y_2 + i (x_1 - y_1) \coth(|B| s) \big) \, \frac{e^{- \mu^2 s} \, ds}{s^\frac{1}{2} \sinh(|B|s)}.
\end{align*}
Arguing as in the proof of~\eqref{eq:estim_resolvent_constant_B}, and applying the inequalities $|B| s \exp(|B| s) \linebreak[0] \coth(|B| s) \leq 8$ for $|B| s \leq 1$, respectively $\coth(|B| s) \leq 4$ for $|B| s \geq 1$, we are led to the estimate
\begin{align*}
&\left|\big[ (-i\partial_1+|B|x_2)  \big(\mathscr{P}_{B\times\cdot/2}+\mu^2\big)^{-1}\big](x,y)\right|_\ii\\
&\qquad\leq |x^\perp - y ^\perp| \bigg( 2 |B| \int_0^\infty \frac{ds}{(4 \pi s)^\frac{3}{2}} e^{- \mu^2 s} e^{- \frac{|x - y|^2}{4 s}}\\
& \qquad\qquad+ 16 \pi \int_0^\infty \frac{ds}{(4 \pi s)^\frac{5}{2}} e^{- \mu^2 s} e^{- \frac{|x - y|^2}{4 s}} + \frac{5 |B|^2}{2 \pi} \int_0^\infty \frac{ds}{\sqrt{4 \pi s}} e^{- \mu^2 s} e^{- \frac{|x - y|^2}{4 s}} \bigg).
\end{align*}
Inequality~\eqref{eq:estim_resolvent_constant_B_derivative} then follows from~\eqref{eq:Res-Laplace-1d},~\eqref{eq:Res-Laplace-3d}, as well as the identity
$$\int_0^\infty \frac{ds}{(4 \pi s)^\frac{5}{2}} e^{- \mu^2s} e^{- \frac{r^2}{4 s}} = \frac{e^{- \mu r}}{8 \pi^2 r^2} \Big( \frac{1}{r} + \mu \Big),$$
which can be derived from~\eqref{eq:Res-Laplace-1d} and~\eqref{eq:Res-Laplace-3d} by integrating by parts.

Finally, the kernels of the two other operators are given by
\begin{align*}
&\big[ (-i\partial_2-|B|x_1)  \big(\mathscr{P}_{B\times\cdot/2}+\mu^2\big)^{-1}\big](x,y) \\
&\qquad= \frac{|B|^2}{2 (4 \pi)^\frac{3}{2}} e^{- \frac{i |B|}{2} x^\perp \times y^\perp} \int_0^\infty e^{- \frac{|B|}{4} \coth(|B| s) |x^\perp - y^\perp|^2} \, e^{- \frac{(x_3 - y_3)^2}{4 s} } \times\\ 
&\qquad\qquad \times \begin{pmatrix} e^{|B| s} & 0 \\ 0 & e^{- |B|s} \end{pmatrix}\big( y_1 - x_1 + i (x_2 - y_2) \coth(b t)\big) \, \frac{e^{- \mu^2 s} \, ds}{s^\frac{1}{2} \sinh(|B|s)}.
\end{align*}
and
\begin{align*}
&\big[ (-i\partial_3)  \big(\mathscr{P}_{B\times\cdot/2}+\mu^2\big)^{-1}\big](x,y) \\
&\qquad= (y_3-x_3)\frac{e^{- \frac{i |B|}{2} x^\perp \times y^\perp}}{2 (4 \pi)^\frac{3}{2}}\int_0^\ii e^{- \frac{|B|}{4} \coth(|B| s) |x^\perp - y^\perp|^2} e^{-\frac{(x_3-y_3)^2}{4s}}\times\\
&\qquad\qquad \times  \begin{pmatrix}
e^{|B|s}&0\\
0&e^{-|B|s}
\end{pmatrix}\frac{|B|e^{-s\mu^2}\,ds}{s^{\frac32} \sinh(|B| s)}.
\end{align*}
and the proof of~\eqref{eq:estim_resolvent_constant_B_derivative} for $j=2,3$ is similar.
This completes the proof of Proposition~\ref{prop:estimates_constant_B}.
\end{proof}

\begin{merci}
The authors thank J\"urg Fr\"ohlich for mentioning the work of Haba. M.L. and \'E.S. acknowledge support from the French Ministry of Research (Grant ANR-10-0101). M.L. acknowledges support from the European Research Council under the European Community's Seventh Framework Programme (FP7/2007-2013 Grant Agreement MNIQS 258023).
\end{merci}


\end{document}